\def\double{\par\baselineskip=24pt}
\def\standard{\oddsidemargin=0in
              \evensidemargin=0in
              \topmargin =-.5in
              \textheight=9.0in
              \textwidth=6.5in}
\newenvironment{hangref}{\begin{list}{}{\setlength{\itemsep}{0pt}%
\setlength{\parsep}{0pt}\setlength{\leftmargin}{+\parindent}%
\setlength{\itemindent}{-\parindent}}}{\end{list}}
\def\E{{\rm E}}
\def\Var{{\rm Var}}
\def\b1{{\bf 1}}
\def\blot{\quad {$\vcenter{\vbox{\hrule height.4pt
             \hbox{\vrule width.4pt height.9ex \kern.9ex \vrule width.4pt}
             \hrule height.4pt}}$}}
\newtheorem{lemma}{Lemma}
\newtheorem{corollary}{Corollary}
\newtheorem{assumption}{Assumption}
\newtheorem{remark}{Remark}
\def\d{\delta}
\def\b{k}    
\def\mui{\mu_{i}}
\def\hd1ir{\hat{\delta}_{1i}(n)}
\def\bXir{\bar{X}_i(n)}
\def\Xij{X_{ij}}
\def\Xlj{X_{\ell j}}
\newcommand{\mybold}[1]{\boldsymbol{#1}}
\newcommand{\BEAS}{\begin{eqnarray*}}
\newcommand{\EEAS}{\end{eqnarray*}}
\newcommand{\R}{\mathbb{R}}
\newcommand{\Prob}{\Pr} 
\newcommand{\id}{{\text{id}}}
\newcommand{\IID}{\mbox{iid\ }}
\newcommand{\widesim}[2][1.5]{
  \mathrel{\overset{#2}{\scalebox{#1}[1]{$\sim$}}}
}
\def\vX{\boldsymbol{X}}
\def\vW{\boldsymbol{W}}
\newcommand{\comment}[1]{}
\begin{document}

\title{Efficient Fully Sequential Indifference-Zone Procedures Using Properties of Multidimensional Brownian Motion Exiting a Sphere}
\author{A.B.~Dieker \\
Columbia University \\
\vspace{.2in}
New York, NY 10027  \\
Seong-Hee Kim\\
Georgia Institute of Technology \\
Atlanta, GA 30332-0205}
\maketitle

\begin{abstract}
We consider a ranking and selection (R\&S) problem with the goal to
select a system with the largest or smallest expected performance
measure among a number of simulated systems with a pre-specified
probability of correct selection. Fully sequential procedures take one
observation from each survived system and eliminate inferior systems
when there is clear statistical evidence that they are inferior. Most
fully sequential procedures make elimination decisions based on sample
performances of each possible pair of survived systems and exploit the
bound crossing properties of a univariate Brownian motion. In this
paper, we present new fully sequential procedures with elimination
decisions that are based on sample performances of all competing
systems.
Using properties of a multidimensional Brownian motion exiting a sphere,
we derive heuristics that aim to achieve a given target probability of correct selection.
We show that in practice the new procedures significantly outperform a
widely used fully sequential procedure. Compared to BIZ, a recent fully-sequential procedure that uses
 statistics inspired by Bayes posterior probabilities, our procedures have better
performance under difficult mean or variance configurations but similar performance under easy mean configurations.

\vspace{12pt}

\noindent {\em Subject classification:} Simulation, Ranking and
Selection, Fully Sequential, Multidimensional Brownian Motion, Sphere

\end{abstract}

\double

\section{Introduction}
\label{sec:intro}

Ranking and selection (R\&S) is one of the classical and well-studied problems in the operations research literature. It aims to find the best system among a number of systems for which noisy performance information is accessible through simulation. In this paper, we assume that the best system is one with the largest or smallest expected performance, which is known as the finding-the-best problem.
There are at least three approaches for the finding-the-best problem: the indifference-zone (IZ) approach, the Bayesian approach, and the optimal computing budget allocation (OCBA) approach. Hong et al.\ (2014) and Kim and Nelson (2011) provide a brief review of each approach. For more information on the Bayesian approach, see Chick (2006) and Chen et al.\ (2014). When there is a fixed computing budget until a decision is made, the OCBA approach provides an efficient way to find the best system, see for example Chen and Lee (2011).
In this paper we study an indifference-zone (IZ) procedure, where the decision maker specifies a difference worth detecting called the IZ parameter.

Among procedures that take the IZ approach, Rinott (1978) is one of the earliest procedures. It is a two-stage procedure and does not have any elimination step for clearly inferior systems. Nelson et al.\ (2001) also propose a two-stage procedure but their procedures can eliminate systems after the first stage if there is statistical evidence that they are inferior.  Therefore the latter procedure is more efficient than Rinott's procedure in terms of the number of observations needed until a decision is made. On the other hand, fully-sequential IZ procedures take one observation from competing systems and eliminate inferior systems as additional observations become available. They carry the risk of incorrectly eliminating the best system due to stochastic noise in the performance measurements.
Examples of fully-sequential IZ procedures are the KN procedures from Kim and Nelson (2001), which are widely used as they are available in leading commercial simulation software.
KN's parameters are chosen to control the probability of eliminating the best system.
Since this probability is intractable, the procedures instead rely on a Bonferroni-type lower bound on the worst-case probability of incorrect selection, which corresponds to the best system having a mean performance that exceeds the means of the other systems by exactly the IZ parameter;
this setup is known as the slippage configuration (SC).
Particularly when the number of systems is large, this lower bound tends to be a poor approximation for the worst-case probability of correct selection.
As discussed in Wang and Kim (2011), the result is that KN procedures tend to take many more observations than necessary to control the probability of incorrect selection, and are thus inefficient in that sense.

\begin{figure}[th!]
\begin{subfigure}{.33\textwidth}
  \centering
  \includegraphics[width=\linewidth]{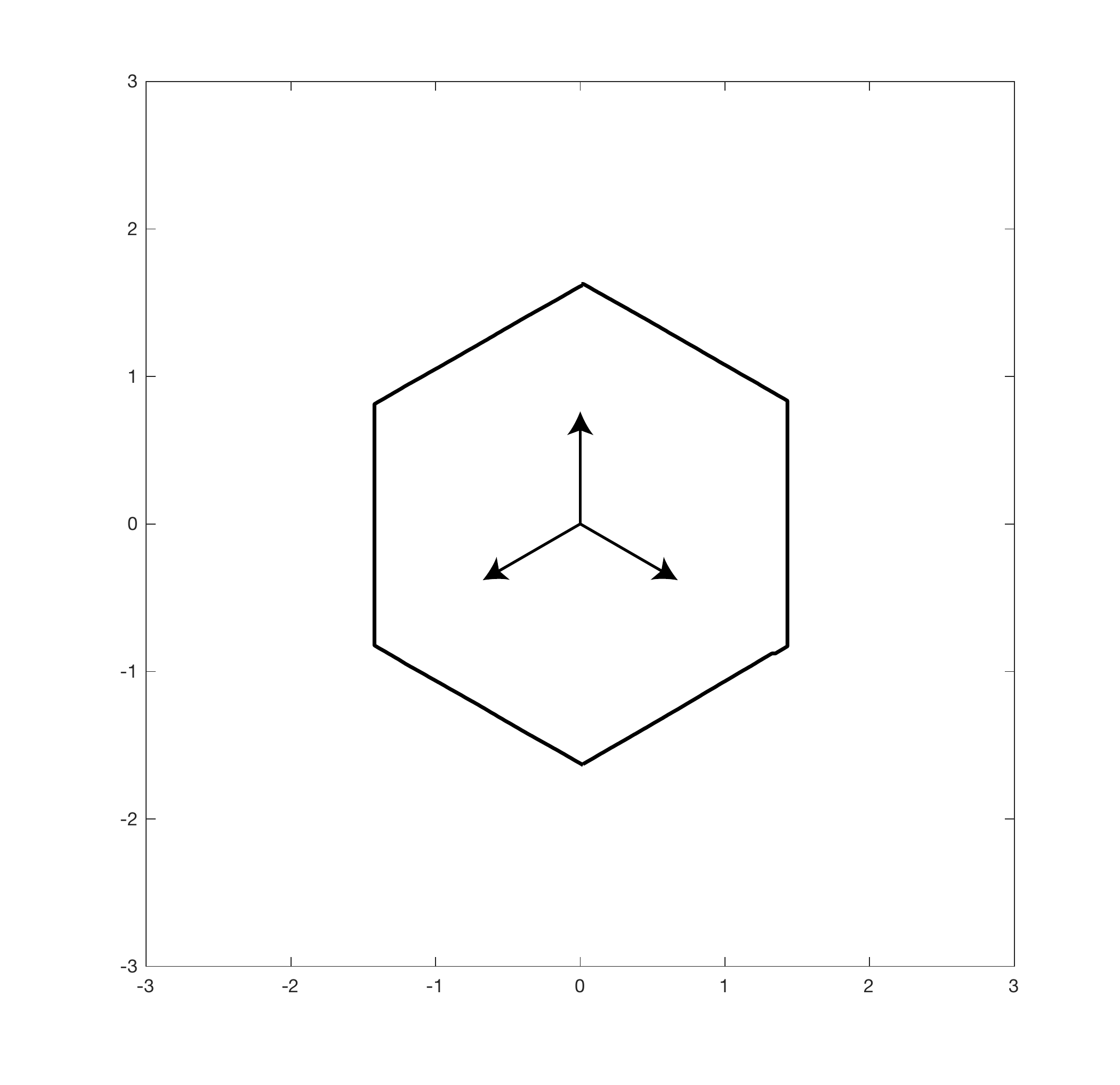}
\caption{KN; $\min_{i<j} |x_i-x_j|$}
\end{subfigure}%
\begin{subfigure}{.33\textwidth}
  \centering
  \includegraphics[width=\linewidth]{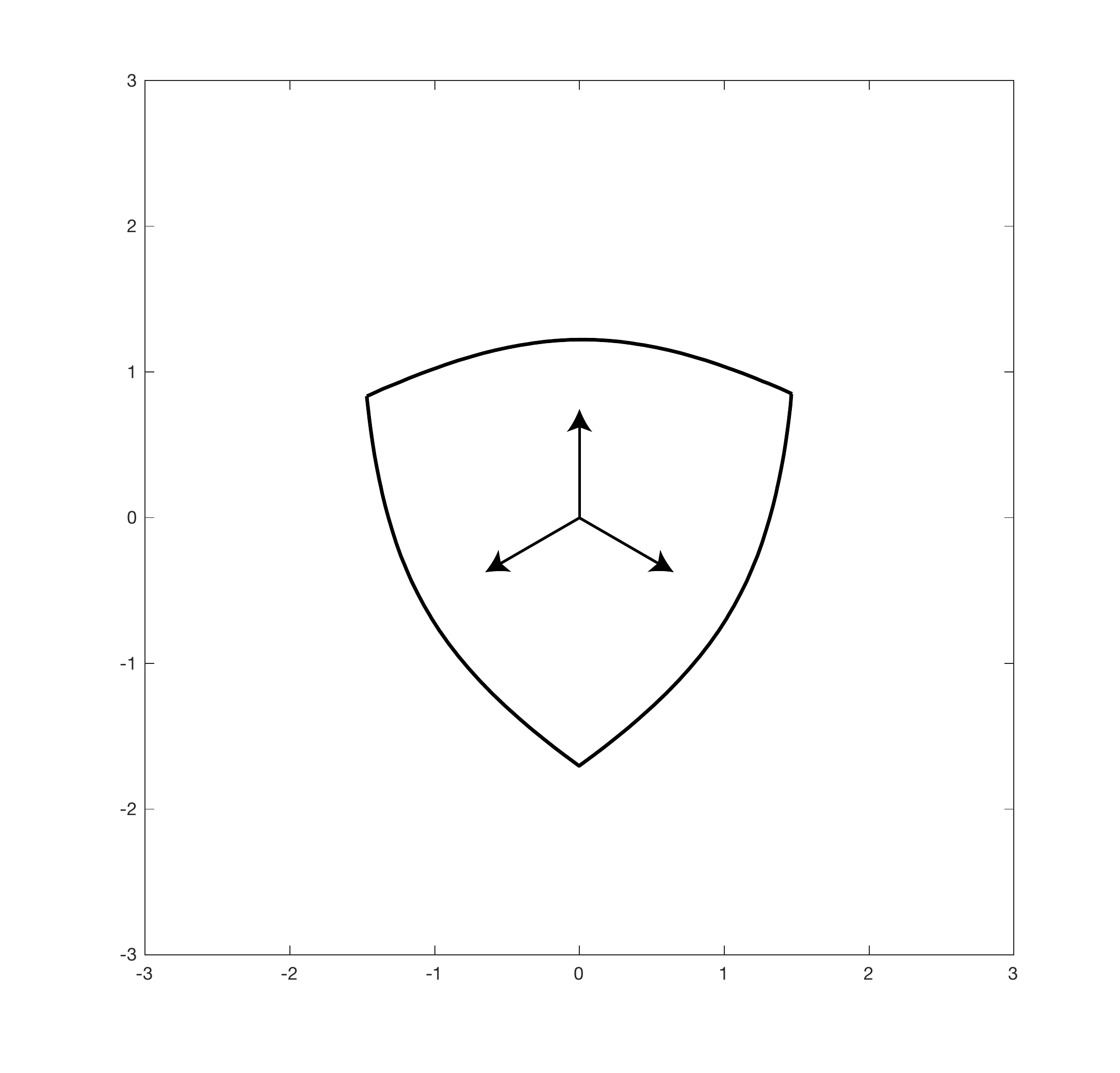}
\caption{BIZ; $\min_i e^{x_i}/(e^{x_1}+e^{x_2}+e^{x_3})$}
\end{subfigure}%
\begin{subfigure}{.33\textwidth}
  \centering
  \includegraphics[width=\linewidth]{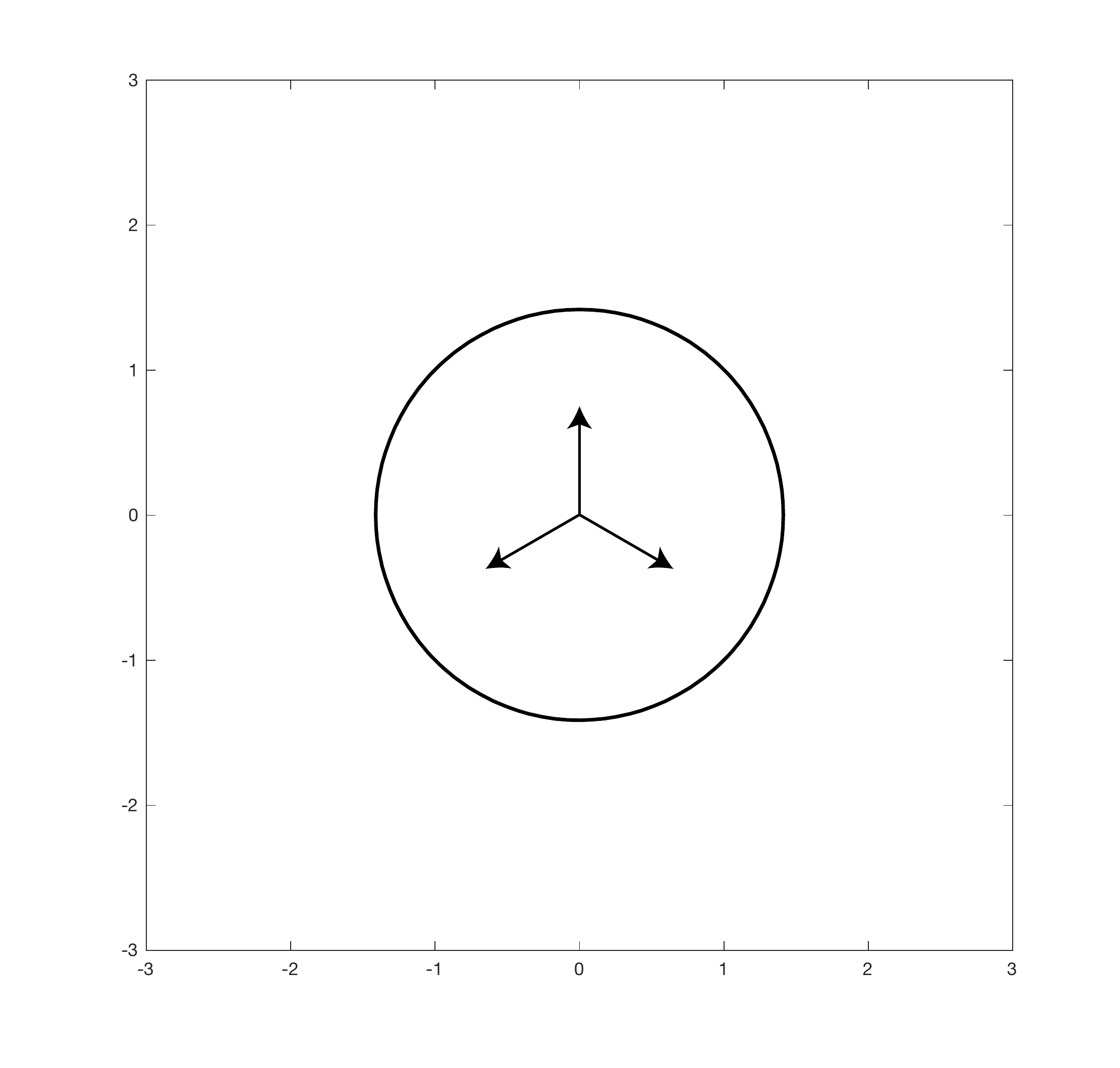}
\caption{This paper; $\sum_{i<j} (x_i-x_j)^2$}
\end{subfigure}
\caption{
Contours of the screening statistics with $k=3$ competing systems for three procedures.
The form of the screening statistic is given below each figure.
Also depicted are each of the three possible drifts (mean sample paths) under the slippage configuration (SC).
Since the screening statistics do not change when adding a constant to each coordinate,
we have plotted the plane $\{x\in \R^3:x_1+x_2+x_3=0\}$ with its intersections of the contours.}
\label{fig:1}
\end{figure}

The primary contribution of this paper is to develop a new family of IZ procedures that does not suffer from the
inefficiencies caused by the use of the Bonferroni bound.
The screening statistic used in an IZ procedure gives rise to contours, and a system is eliminated when
the vector of cumulative sums of performance measurements hits such a contour, see Figure~\ref{fig:1}.
Controlling PICS is done by analyzing hitting behavior of
Brownian motion
to set the `radius' of the contour.
It is the second ingredient where the Bonferroni bound is invoked for KN procedures,
since it is analytically intractable to study properties of a Brownian motion hitting KN contours, see Figure~\ref{fig:1}(a).

An important recently developed family of IZ ranking and selection procedures is the Bayes-inspired indifference zone (BIZ)
procedures from Frazier (2014). An example of the resulting elimination contours is given in Figure~\ref{fig:1}(b).
BIZ procedures circumvent the use of the Bonferroni bound, which results in dramatic improvements
over the KN procedure especially when the number of competing systems is large.
We see in Figure~\ref{fig:1}(b) that the three possible drifts under SC (one for each of the systems being the best)
hit the elimination contour at its closest point to the origin.
Thus, possibly sample paths that deviate significantly from their mean sample paths require a larger number of observations from the
competing systems than those that are close to their mean
sample paths.

This paper is a first investigation towards efficient procedures with spherical elimination contours, see Figure~\ref{fig:1}(c).
Using path length as a proxy for the number of observations needed, with spherical elimination contours
all points on the contour are equally close to the origin, and this could perhaps lead to faster elimination.
Setting the radius of the contours given a target probability of correct selection is facilitated by some analytic results
about a multidimensional Brownian motion hitting hyperspheres.
Like BIZ, we do not need to appeal to the Bonferroni bound to exert control over the probability of correct selection.
However, our procedure is ultimately heuristic since we found it intractable
to control the multi-stage probability of correct selection; we leave this as an open problem.

 Experimental results show that estimated probability of correct selection is all higher or close to the target confidence level for all cases tested including SC. Our procedures also significantly outperform KN. On the other hand, our procedures perform better than BIZ under difficult mean or variance configurations while they perform similarly compared to BIZ under easy mean configurations. More specifically, when variances are unknown and unequal across systems with a slippage mean configuration, our procedures show up to 30\% savings compared to the BIZ procedures in terms of the number of replications needed until a decision is made. Under easier scenarios where means spread out over systems unlike SC, our procedures perform similar to the BIZ procedures.

To extend our procedures to unknown and unequal variances, we use multiple tricks. These tricks are standard but render any statistically valid approach (including BIZ) heuristic. To handle unknown variances, we update variance estimates as the procedures advance. Kim and Nelson (2006) show that variance update enables a procedure to be treated as if variances are known in an appropriate limit.
To handle unequal variances we use a heuristic approach which essentially changes the sampling frequency of each system hoping to approximately equalize variances across systems.

Preliminary work related to this work is published in the Winter Simulation Conference proceedings which include Kim and Dieker (2011) and Dieker and Kim (2012, 2014). The first two papers consider only three systems with known variances. Dieker and Kim (2014) give a procedure for a general number of systems but require known and equal variances. Moreover, the spheres that play a crucial role in the procedure all have the same radius and the procedure performs worse than KN when the means of the systems are spread out evenly. In the procedures presented in the present paper, the radii of the spheres vary as the number of survived systems decreases, outperforming KN in all scenarios; and a version of our procedure can handle unknown and unequal variances.

When there exists a finite simulation budget or a tight deadline in time, OCBA and Bayesian procedures are shown to be highly efficient and very useful in practice. Branke et al.\ (2007), Chen and Lee (2011) and Powell and Ryzhov (2012) provide a good review of OCBA and Bayesian ranking and selection procedures and provide extensive empirical results.  As our primary goal is to investigate the impact of different shapes of continuation regions in IZ ranking and selection procedures,
we only compare our procedures with IZ procedures. Specifically, two state-of-art IZ procedures, KN and BIZ, are considered.

The paper is organized as follows. Section~\ref{sec:notation} defines our problem and introduces notation. Section~\ref{sec:procedure} proposes new fully-sequential procedures. Section~\ref{sec:stat} explains the statistics that we use for elimination decisions and the properties of our statistics.  In Section~\ref{sec:approx}, we provide justifications for our procedures and approximations in order to set the parameter values of the procedures.
Experimental results are presented in Section~\ref{sec:exp}, followed by conclusions in Section~\ref{sec:conclusion}.

\section{Problem and Notation}
\label{sec:notation}
This section introduces our notation and assumptions and defines the problem.
We assume there are $k$ systems ($k \ge 2$).
Let $X_{ij}$ represent the $j$th observation from system $i$ for $i=1,\ldots,k$ and $j=1,2,\ldots$. Then the mean and variance of the outputs from system $i$ are defined as $\mui=\E[X_{ij}]$ and $\sigma_i^2 = \Var[X_{ij}]$, respectively.
We want to find the system with the largest mean $\mu_i$.

Throughout the paper, we assume that the following assumptions hold:

\begin{assumption} \label{assump:normal}
\[
  X_{ij} \widesim[2]{\IID}
  \mybold{N}(\mu_i, \sigma_i^2), \;\;\;\; j=1,2,\ldots,
\]
where $\widesim[2]{\IID}$ represents `are
independent and identically distributed as' and $\mybold{N}(\mu_i, \sigma_i^2)$ denotes the
normal distribution with mean $\mu_i$ and
variance $\sigma_i^2$. Moreover, $X_{ij}$ and $X_{i^\prime j}$  are independent for any $i \neq i^\prime$ and $j=1,2,\ldots$.
\end{assumption}

\vspace{12pt}
\begin{assumption}\label{assump:sc}
$\mu_1 \leq \mu_2 \leq \ldots \leq \mu_{k-1} \leq \mu_k - \d$ for $\d \in \mathbb{R}^+$.
\end{assumption}
\vspace{12pt}

Assumption~\ref{assump:normal} implies that observations from each
system are marginally \IID normally distributed and systems are simulated independently (note that this rules out common random numbers).
Without loss of generality, we assume that system $k$ is the true best system. Assumption~\ref{assump:sc} assumes that the mean of the true best system $\b$ is at least $\delta$ better than any alternative system. The parameter $\d$ is a user-specified parameter known as the IZ parameter.

We aim to devise a method that observes systems sequentially and eliminates clearly inferior systems from further consideration. The method stops once only one system remains, and this system
is declared as the best system. 


Additional notation is needed for later sections:
 \begin{eqnarray*}
     n & \equiv & \mbox{the current number of observations or the current stage number};\\
      I & \equiv & \mbox{set of competing systems at the $n$th stage};\\
      \bXir  &\equiv&  \frac{1}{n}\sum_{j=1}^n{\Xij}\mbox{, the sample mean of system $i$ based on the first $n$ observations};\\
        \vX_I(n) &\equiv& \mbox{$ |I| \times 1$ vector of $\sum_{j=1}^n X_{ij}$ for $i \in I$};\\
        \hat{\sigma}^2_i(n) &\equiv& \mbox{sample variance of system $i$ from $X_{i1}, \ldots, X_{in}$} \mbox{ which is } {1 \over n-1} \sum_{j=1}^n (\Xij - \bXir)^2;\\
        A^T & \equiv& \mbox{the transpose of a matrix $A$};\\
        \d_{|I|}^2 &\equiv&  {\d^2} {|I|-1 \over |I|}.\\
\end{eqnarray*}


\section{${\cal DK}$ Procedures}
\label{sec:procedure}

In this section, we provide the descriptions of our new procedures. We present ${\cal DK}_1$ for known and equal variances and extend it to unknown but equal variances, resulting in ${\cal DK}_2$.  Then ${\cal DK}_3$ is presented for unknown and unequal variances.

\subsection{Equal and Known Variances}

We first consider a case where variances are equal across all systems and known so that $\sigma_i^2 = \sigma^2$ for any system $i$.
Suppose $x\in\R^s$ and $I \subset \{1,\ldots, k\}$ and define a function ${\cal S}_I(x)$ as follows:
\[
{\cal S}_I(x) =  {1 \over \sigma^2} \sum_{i\in I} (x_i - \bar{x})^2
\]
where $\bar{x} = {1 \over s} \sum_{i \in I} x_i$.

The ${\cal DK}_1$ procedure for equal and known variances is as follows:
\begin{center}
\fbox{
\begin{minipage}{6.5in}
\begin{center}\textbf{The ${\cal DK}_1$ Procedure}\end{center}
\begin{hangref}
\item {\bf Setup:}  Select the nominal level $1-\alpha$ and the IZ
    parameter $\d$. Set $I = \{1, 2, \ldots, k\}$  and choose $\eta_{|I|}$ (which will be discussed in Section~\ref{sec:approx}). Take one observation from each
    system. Set $n=1$ and go to {\bf Calculation}.

 \vspace{0.1in}
\item {\bf Calculation:} 
Calculate ${\cal S}_I(\vX_I(n))$.

 \vspace{0.1in}
\item {\bf Screening:} If ${\cal S}_I(\vX_I(n)) \ge \left({ \sigma \cdot \eta_{|I|} \over \delta_{|I|}}\right)^2$, then eliminate the system with the smallest $\bar{X}_{i}(n)$ among $i\in I$. Update $I$ by removing the eliminated system and go back to {\bf Calculation}.
    Otherwise, go to {\bf Stopping Rule}.

     \vspace{0.1in}
\item {\bf Stopping Rule:} If $|I| = 1$, stop and declare the surviving system as the best. Otherwise, take one more observation for all $i \in I$, set $n=n+1$, and go to {\bf Calculation}.
\end{hangref}
\end{minipage}
}

\end{center}
\vspace{0.25in}

In Section~\ref{sec:stat}, we show that ${\cal S}_I(x)$ calculates the squared distance of the point $x$ orthogonally projected onto a hyperplane $\{x:\sum_{i\in I} x_i = 0\}$ and that the screening rule in ${\cal DK}_1$ implies that we have an open (infinite) cylinder as our continuation region with a radius depending on $\eta_{|I|}$ and $\delta_{|I|}$. When $x$ is located outside the cylinder, elimination occurs and the screening rule is checked again with updated parameters (without obtaining additional observations), i.e., a lower dimensional cylinder. We only obtain new observations (i.e., move to the next stage) if no more elimination occurs for a given number of observations.


\subsection{Unknown but Equal Variances}

We present a straightforward variant of ${\cal DK}_1$ for unknown but equal variances, $\sigma^2$.
As the variance parameter $\sigma^2$ is unknown, it needs to be estimated. Let $\hat{\sigma}_i^2(n)$ represent sample variance of system $i$.
The pooled variance estimator $\hat{\sigma}_p^2(n)$ is defined as follows:
\[
\hat{\sigma}^2_p(n) = {1 \over |I|} \sum_{i \in I} \hat{\sigma}^2_i(n).
\]
Then our statistic is modified to
\[
{\cal S}_I'(x)  =    {1 \over \hat{\sigma}^2_p(n)} \sum_{i \in I} (x_i - \bar{x})^2
\]
and $\hat{\sigma}^2_i(n)$ and $\hat{\sigma}^2_p(n)$ need to be updated in the [Stopping Rule] step after additional observations are obtained. Then the ${\cal DK}_2$ procedure is defined below.

\vspace{0.2in}
\begin{center}
\fbox{
\begin{minipage}{6.5in}
\begin{center}\textbf{The ${\cal DK}_2$ Procedure}\end{center}
\begin{hangref}
\item {\bf Setup:}  Select the nominal level $1-\alpha$ and the IZ
    parameter $\d$. Set $I = \{1, 2, \ldots, k\}$  and choose $\eta_{|I|}$. Take $n_0 \ge 2$ observations from each
    system and calculate $\hat{\sigma}^2_i(n_0)$ and $\hat{\sigma}^2_p(n_0)$. Set $n=n_0$ and go to {\bf Calculation}.

 \vspace{0.1in}
\item {\bf Calculation:} 
Calculate ${\cal S}_I'(\vX_I(n))$.

 \vspace{0.1in}
\item {\bf Screening:} If ${\cal S}_I'(\vX_I(n)) \ge  \left({\hat{\sigma}_p(n)  \cdot \eta_{|I|} \over \delta_{|I|}}\right)^2$, then eliminate the system with the smallest $\bar{X}_{i}(n)$ among $i\in I$. Update $I$ by removing the eliminated system and go back to {\bf Calculation}.
    Otherwise, go to {\bf Stopping Rule}.

     \vspace{0.1in}
\item {\bf Stopping Rule:} If $|I| = 1$, stop and declare the surviving system as the best. Otherwise, take one more observation for all $i \in I$; set $n=n+1$; and update $\hat{\sigma}^2_i(n)$ for $i\in I$ and $\hat{\sigma}^2_p(n)$. Then go to {\bf Calculation}.
\end{hangref}
\end{minipage}
}

\end{center}
\vspace{0.25in}

\subsection{Unknown and Unequal Variances}

This subsection extends the ${\cal DK}_1$ procedure to handle unknown and unequal variances, resulting in ${\cal DK}_3$. The main idea is to make the sampling frequency of each system proportional to the variance parameter of the system, which eventually leads to equal variances. This approach is similar to the one in Frazier (2014).

Let $n_i$ denote the number of observations system $i$ have received so far. In ${\cal DK}_1$, $n_i = n$ for any system $i \in I$ but in ${\cal DK}_3$, $n_i \le n$.
Also let $W_i(n) = \sum_{j=1}^{n_i} \Xij/n_i$ and $\vW_I(n)$ represent a $|I| \times 1$ vector of $W_i(n)$ for $i\in I$.

Then
\[
{\cal S}_I''(x)  =  {1 \over \hat{\lambda}^2}   \sum_{i \in I} \left( x_i - \bar{x}  \right)^2
\]
where
\[
\hat{\lambda}^2 =  {\sum_{i\in I} \hat{\sigma}^2_i(n_i) \over \sum_{i\in I} n_i}.
\]

We can now describe Procedure ${\cal DK}_3$.
\vspace{0.2in}
\begin{center}
\fbox{
\begin{minipage}{6.5in}
\begin{center}\textbf{The ${\cal DK}_3$ Procedure}\end{center}
\begin{hangref}
\item {\bf Setup:}  Select the nominal level $1-\alpha$ and the IZ
    parameter $\d$. Also select a constant $B_z$. Set $I = \{1, 2, \ldots, k\}$  and choose $\eta_{|I|}$. Take $n_0$ observations from each
    system and calculate $W_i(n_0)$, $\hat{\sigma}^2_i(n_0)$ and $\hat{\lambda}^2$.  Set $n=n_0$ and $n_i=n_0$ for $i\in I$, and go to {\bf Calculation}.

 \vspace{0.1in}
\item {\bf Calculation:} 
Calculate ${\cal S}_I''(\vW_I(n))$.

 \vspace{0.1in}
\item {\bf Screening:} If ${\cal S}_I''(\vW_I(n)) \ge  \left({\hat{\lambda} \cdot \eta_{|I|} \over \delta_{|I|}}\right)^2$, then eliminate the system with the smallest $\bar{X}_{i}(n)$ among $i\in I$. Update $I$ by removing the eliminated system and go back to {\bf Calculation}.
    Otherwise, go to {\bf Stopping Rule}.

     \vspace{0.1in}
\item {\bf Stopping Rule:} If $|I| = 1$, stop and declare the surviving system as the best. Otherwise,
   let $z = \mbox{arg}\min {n_i \over \hat{\sigma}^2_i(n_i)}$ for $i\in I$.

    For each $i \in I$,
    \begin{itemize}
    \item calculate
    \[
    \Delta_i = \left\lceil { \hat{\sigma}^2_i(n_i) \cdot { n_z + B_z \over \hat{\sigma}^2_z(n_z)}} \right\rceil;
    \]
    \item if $\Delta_i > n_i$, then take $(\Delta_i - n_i)$ observations.
    \end{itemize}

    Set $n=n+1$ and $n_i =\max(n_i, \Delta_i)$; and  update $\hat{\sigma}^2_i(n_i)$ for all $i\in I$ and $\hat{\lambda}^2$. Then go to {\bf Calculation}.
\end{hangref}
\end{minipage}
}

\end{center}
\vspace{0.25in}

Frazier (2014) recommends $B_z= 1$. The parameter $\eta_{|I|}$ needs to be chosen carefully so that the actual probability of correct selection is at least $1-\alpha$. In the next section, we derive some analytical results for the ${\cal DK}_1$ procedure and then discuss how to choose $\eta_{|I|}$.

\section{Statistics for Screening}
\label{sec:stat}
The canonical choice for fully sequential procedures is to use $\sum_{j=1}^n (\Xij - \Xlj)$ for every $i\neq \ell$ as observed statistics and to eliminate a system whenever the statistics exit a so-called continuation region defined by two parallel lines such as $(-a, a)$ for a constant $a>0$ or a function $h(n)>0$ such as $(-h(n), h(n))$. Kim and Nelson (2014) use a triangular shaped continuation region defined by a decreasing linear function $h(n)$. Note that traditional continuation regions are defined in a two-dimensional space. Our procedures use different statistics based on a quadratic form and our continuation region is an open cylinder.

Consider $x\in\R^s$ and $I \subset \{1,\ldots, k\}$ with $I=\{i_1,\ldots,i_s\}$.
Furthermore let $\Gamma$ represent the covariance matrix of $(X_{i_1 j}, X_{i_2 j}, \ldots, X_{i_s j})^T$,
\[
 \Gamma =  \begin{bmatrix}
\sigma_{i_1}^2 & 0 & 0 & \cdots & 0 \\
0 & \sigma_{i_2}^2 & 0 & \cdots  & 0\\
0 & 0 & \sigma_{i_3}^2 & \cdots &0\\
\vdots & \vdots & \vdots & \ddots & \vdots\\
0 & \cdots & \cdots & \cdots & \sigma_{i_s}^2\\
\end{bmatrix}
\]
and let $V$ represent an $s-1$ by $s$ matrix given by
\[
V = \begin{bmatrix}
1 & 0 & \cdots& 0 & -1\\
0 & 1 & \cdots& 0 & -1\\
\vdots & \vdots & \ddots & \vdots & \vdots\\
0 & 0 & \cdots &  1 & -1\\
\end{bmatrix}.
\]

Then our statistic ${\cal S}_I(x)$ is defined as
\begin{equation}
\label{eq:defSI}
{\cal S}_I(x)  \equiv (Vx)^T(V \Gamma V^T)^{-1}(Vx)=\left[
\begin{array}{c}
x_{i_1} - x_{i_s} \\
\vdots\\
x_{i_{s-1}} - x_{i_s}
\end{array}\right]^T (V \Gamma V^T)^{-1} \left[
\begin{array}{c}
x_{i_1} - x_{i_s} \\
\vdots\\
x_{i_{s-1}} - x_{i_s}
\end{array}\right]
\end{equation}
and our continuation region is related to this quadratic form.
From the definition of ${\cal S}_I$ it may seem that ${\cal S}_I$ is complicated to calculate and that it depends on the order in which its elements are listed.
The following lemma is useful in deriving a simpler form of ${\cal S}_I(x)$ which allows us to argue that ${\cal S}_I(x)$ only depends on the set $I$, so not on the order of the elements in $I$. The proof is given in the appendix.

\begin{lemma}
\label{lem:S}
Suppose $x\in\R^s$ and $I \subset \{1,\ldots, k\}$ with $I=\{i_1,\ldots,i_s\}$.
If $\Pi =  \Gamma V^T (V \Gamma V^T)^{-1} V$,  then
\[
{\cal S}_I(x)={\cal S}_I(\Pi x).
\]
\end{lemma}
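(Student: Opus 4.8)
The plan is to reduce the entire statement to a single matrix identity, namely $V\Pi = V$. The point is that ${\cal S}_I$ depends on its argument $x$ only through the difference vector $Vx$, so if $\Pi$ acts as the identity after being followed by $V$, then feeding $\Pi x$ into ${\cal S}_I$ cannot change anything. Concretely, I would first write out the definition applied to $\Pi x$,
\[
{\cal S}_I(\Pi x) = (V\Pi x)^T (V \Gamma V^T)^{-1}(V\Pi x),
\]
so that it suffices to prove $V\Pi x = Vx$ for all $x$, i.e.\ $V\Pi = V$ as $(s-1)\times s$ matrices.

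Next I would substitute the definition $\Pi = \Gamma V^T (V \Gamma V^T)^{-1} V$ and exploit associativity of matrix multiplication to regroup the factors:
\[
V\Pi = V \Gamma V^T (V \Gamma V^T)^{-1} V = (V \Gamma V^T)(V \Gamma V^T)^{-1} V = V,
\]
where the middle product collapses to the $(s-1)\times(s-1)$ identity. Plugging this back in immediately yields ${\cal S}_I(\Pi x) = (Vx)^T (V \Gamma V^T)^{-1}(Vx) = {\cal S}_I(x)$, which is the claim. Since the final expression visibly depends only on the unordered differences among the coordinates indexed by $I$, this is also what lets one argue afterward that ${\cal S}_I$ does not depend on the listing order of $I$.

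The only step that needs genuine justification is that $(V \Gamma V^T)^{-1}$ exists, so that both $\Pi$ and the quadratic form are well defined. I would observe that $\Gamma$ is diagonal with strictly positive entries $\sigma_{i_1}^2,\ldots,\sigma_{i_s}^2$, hence positive definite, and that $V$ has full row rank $s-1$ (its rows are the independent contrasts $e_j - e_s$). Consequently $V \Gamma V^T$ is an $(s-1)\times(s-1)$ positive definite matrix and is therefore invertible.

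I do not expect any substantive obstacle here; the lemma is essentially a structural observation rather than a computation. The conceptual reason it holds is that $\Pi$ is an oblique projection --- one checks $\Pi^2 = \Pi$ by the same cancellation $(V \Gamma V^T)(V \Gamma V^T)^{-1} = I$ --- and it is built precisely so that it fixes the relevant contrast information seen by $V$. The main thing to be careful about in writing the proof is simply keeping the matrix dimensions straight when regrouping, since $V$ is rectangular and $V\Gamma V^T$ is the smaller square block that gets cancelled against its inverse.
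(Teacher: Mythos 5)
Your proposal is correct and follows essentially the same route as the paper's proof: both reduce the claim to the cancellation $V\Pi = V\Gamma V^T(V\Gamma V^T)^{-1}V = V$ and substitute back into the quadratic form. Your additional observation that $V\Gamma V^T$ is invertible (positive definite, since $\Gamma$ is positive definite and $V$ has full row rank) is a justification the paper leaves implicit, but it does not change the argument.
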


\begin{figure}[th!]
\begin{subfigure}{.5\textwidth}
  \centering
  \includegraphics[width=\linewidth]{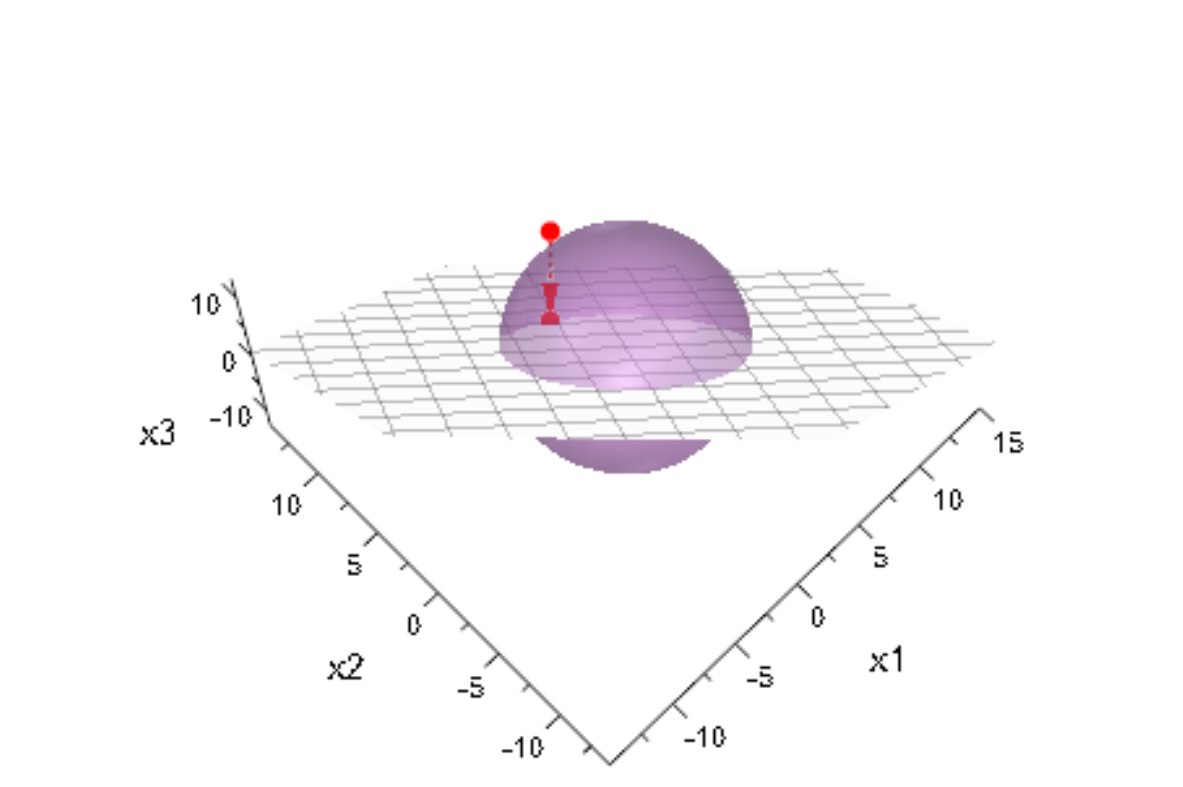}
\caption{Projection inside ball: no elimination}
\end{subfigure}%
\begin{subfigure}{.5\textwidth}
  \centering
  \includegraphics[width=\linewidth]{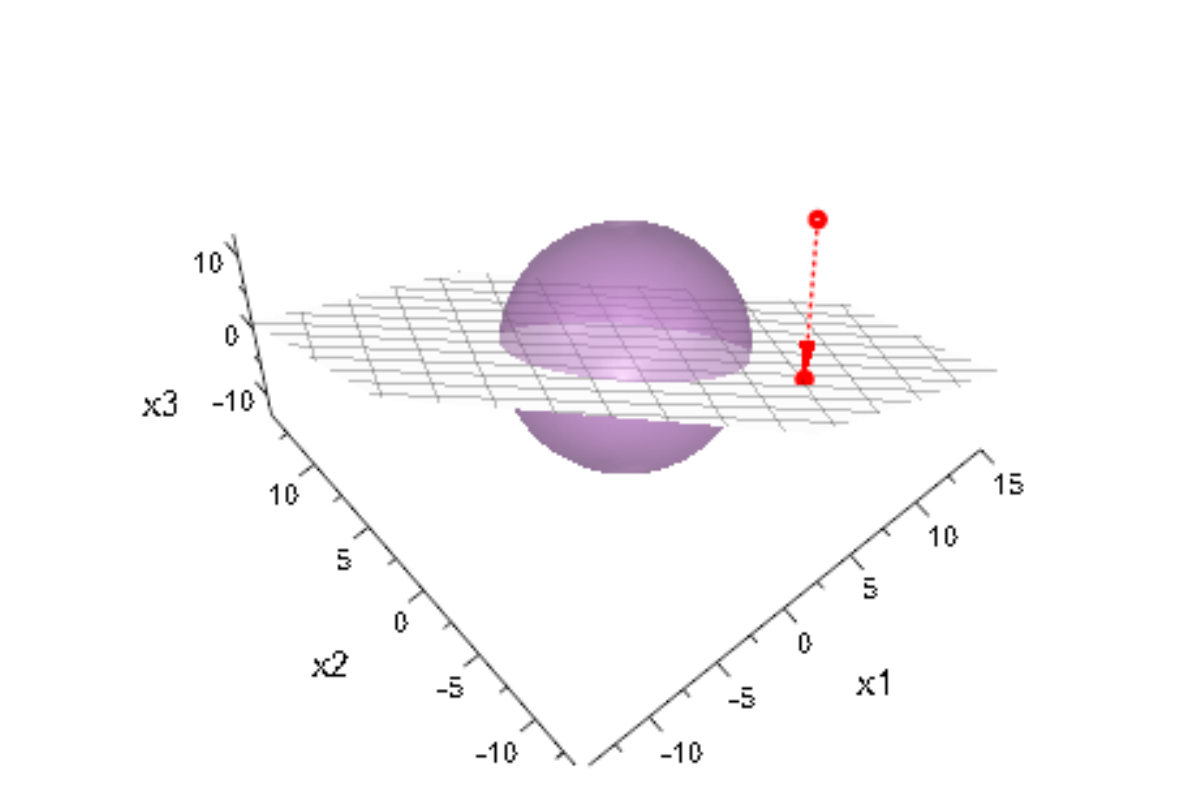}
\caption{Projection outside ball: elimination}
\end{subfigure}
\caption{Projected points on plane $y_1 + y_2 + y_3 = 0$ and elimination rules. A ball (with radius 7 here) is also visible.}
\label{fig:3d}
\end{figure}

The above lemma holds for $\Gamma$ regardless of whether it has equal diagonal elements. The matrix $\Pi$ is a (non-orthogonal) projection matrix with range $R = \{ y \in \mathbb{R}^s: \sum_{i\in I} y_i/\sigma_i^2 = 0 \}$ and null space $N = \{ \alpha (1, \ldots, 1): \alpha \in \mathbb{R} \}$, i.e., when $\Pi$ is applied to a vector then a multiple of $(1 , \ldots, 1)$ is subtracted from this vector so the result lies in $R$. It becomes an orthogonal projection matrix when $\sigma_i^2 = \sigma^2$ for all $i \in I$, since the null space $N$ is orthogonal to the range $R$ in that case. This lemma implies that the value of our statistic at any $x$ equals the value of our statistic at the projected point on the plane determined by $\sum_{i\in I} y_i/\sigma_i^2=0$ (or $\sum_{i\in I} y_i=0$ for equal variances). Since the null space and range do not change if the order of the elements in $I$ changes, Lemma~\ref{lem:S} shows that the quadratic form $\cal S_I$ remains the same if the indices are ordered differently, i.e., both in $\Gamma$ and in $x$. In Section~\ref{sec:approx} this lemma is used to make the elimination decision depend on the IZ parameter $\delta$ only and not on the unknown mean parameter. Using the above lemma, we next derive a simpler form of ${\cal S}_I(x)$ when the variances are equal.
As an aside, it may be tempting to think that ${\cal S}_I(x) = x^T V^T(V^T)^{-1}\Gamma^{-1} V^{-1}Vx=x^T\Gamma^{-1}x$  in view of (\ref{eq:defSI}),
but this is incorrect since $V$ is not invertible.

\begin{corollary}
\label{cor:S}
Suppose $x\in\R^s$ and $I \subset \{1,\ldots, k\}$ with $I=\{i_1,\ldots,i_s\}$.
If $\sigma_i^2 = \sigma^2$, then
\[
{\cal S}_I(x) = \frac1{\sigma^2}\frac 1{|I|} \sum_{i < \ell \atop i, \ell \in I} (x_{i} - x_{\ell})^2
= {1 \over \sigma^2} \sum_{i\in I} (x_i - \bar{x})^2
\]
where $\bar{x} = {1 \over s} \sum_{i \in I} x_i$.
\end{corollary}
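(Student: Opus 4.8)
The plan is to substitute the equal-variance assumption directly into~(\ref{eq:defSI}) and reduce the quadratic form to an orthogonal-projection norm. With $\sigma_i^2=\sigma^2$ we have $\Gamma=\sigma^2 I_s$, so $(V\Gamma V^T)^{-1}=\tfrac1{\sigma^2}(VV^T)^{-1}$ and hence ${\cal S}_I(x)=\tfrac1{\sigma^2}\,x^T V^T(VV^T)^{-1}V\,x$. The entire argument then rests on identifying the matrix $P:=V^T(VV^T)^{-1}V$.

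First I would check that $V$ has full row rank $s-1$: writing $e_j$ for the standard basis vectors of $\R^s$, its rows $e_1-e_s,\dots,e_{s-1}-e_s$ are plainly linearly independent, so $VV^T$ is invertible and the expression above is well defined. Next I would invoke the standard fact that for a full-row-rank $V$, the matrix $P=V^T(VV^T)^{-1}V$ is the orthogonal projection onto the row space of $V$; in particular $P$ is symmetric and idempotent. The row space of $V$ is spanned by the vectors $e_j-e_s$, each of which has zero coordinate sum, and it is $(s-1)$-dimensional, so it equals the hyperplane $\{y\in\R^s:\sum_{i\in I}y_i=0\}$ (equivalently, $\ker V=\mathrm{span}(\mathbf 1)$, whose orthogonal complement is exactly this hyperplane). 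Therefore $Px$ is the orthogonal projection of $x$ onto that hyperplane, namely $Px=x-\bar x\,\mathbf 1$ with $\bar x=\tfrac1s\sum_{i\in I}x_i$.

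With $P$ identified, symmetry and idempotency give $x^TPx=x^TP^TPx=\|Px\|^2=\sum_{i\in I}(x_i-\bar x)^2$, which yields ${\cal S}_I(x)=\tfrac1{\sigma^2}\sum_{i\in I}(x_i-\bar x)^2$ and matches the geometric description in the surrounding text (squared distance to the hyperplane, scaled by $1/\sigma^2$). The first displayed equality of the corollary then follows from the elementary identity $\sum_{i\in I}(x_i-\bar x)^2=\tfrac1{|I|}\sum_{i<\ell}(x_i-x_\ell)^2$ (sum over $i,\ell\in I$), which I would verify by expanding both sides: each $x_i^2$ appears $s-1$ times in the pairwise sum, and the cross terms reassemble into $(\sum_{i\in I} x_i)^2$, giving $\sum_{i<\ell}(x_i-x_\ell)^2 = s\sum_{i\in I} x_i^2-(\sum_{i\in I}x_i)^2 = s\sum_{i\in I}(x_i-\bar x)^2$.

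I expect the only real obstacle to be the projection identification in the second paragraph; everything downstream is routine. As an alternative that avoids citing the projection fact, one can compute $VV^T=I_{s-1}+\mathbf 1\mathbf 1^T$ directly, invert it by Sherman--Morrison to get $(VV^T)^{-1}=I_{s-1}-\tfrac1s\mathbf 1\mathbf 1^T$, and then expand $(Vx)^T(VV^T)^{-1}(Vx)$ in the coordinates $x_{i_j}-x_{i_s}$. This is elementary but more delicate, since the last index $i_s$ enters asymmetrically and one must carefully recombine terms to recover the symmetric form $\sum_{i\in I}(x_i-\bar x)^2$. One could also shortcut the reduction using Lemma~\ref{lem:S}: under equal variances $\Pi$ becomes the orthogonal projection onto the same hyperplane, so ${\cal S}_I(x)={\cal S}_I(\Pi x)$ lets one assume $\sum_{i\in I}x_i=0$ from the outset and compute with the centered vector.
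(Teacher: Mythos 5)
Your proof is correct, but it takes a genuinely different route from the paper's. The paper works computationally: it writes out $V\Gamma V^T = \sigma^2\left(\id_{s-1} + \boldsymbol{1}_{s-1}\boldsymbol{1}_{s-1}^T\right)$, inverts it by Sherman--Morrison (their equation (\ref{eqn:inv})), and obtains the first (pairwise-sum) equality by directly expanding the quadratic form in the asymmetric coordinates $x_i - x_s$; the second (centered-sum) equality is then obtained by computing $\Pi = \Gamma V^T(V\Gamma V^T)^{-1}V$ explicitly and invoking Lemma~\ref{lem:S}. You instead identify $P = V^T(VV^T)^{-1}V$ abstractly as the orthogonal projector onto the row space of $V$, recognize that row space as the zero-sum hyperplane, and get the centered-sum form immediately from $x^T P x = \|Px\|^2$; the pairwise-sum form then follows from the elementary identity $\sum_{i<\ell}(x_i-x_\ell)^2 = s\sum_{i}(x_i-\bar x)^2$, so you establish the two equalities in the opposite order and never need Lemma~\ref{lem:S} at all. (Your $P$ is exactly the paper's $\Pi$ specialized to equal variances, so the two arguments meet at the same projector; what differs is how it is identified --- abstract characterization versus explicit inversion.) Your route is cleaner and avoids both the Sherman--Morrison step and the somewhat delicate recombination of cross terms in the paper's expansion; what the paper's computation buys is the explicit formula (\ref{eqn:inv}) for $(V\Gamma V^T)^{-1}$, which it reuses in the proof of Lemma~\ref{lem:Smon}, so the heavier calculation has downstream value there.
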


Our elimination decision rule takes the form ${\cal S}_I(x) \ge r^2$ for $r\in \mathbb{R}^+$. Let $x'$ denote the orthogonal projection of $x$ on the plane with $\sum_{i=1}^s y_i=0$. From Lemma~\ref{lem:S}, we know that ${\cal S}_I(x)$ is equal to ${\cal S}_I(x')$. As $x'$ lies on the hyperplane $\{y:\sum_{i=1}^s y_i=0\}$, we know that $\bar{x'} = 0$. From the second equality of ${\cal S}_I(x)$ in Corollary~\ref{cor:S}, it is easy to see that ${\cal S}_I(x')$ becomes simply the squared distance between $x'$ and the origin.  Therefore our elimination decision rule implies that no elimination occurs and sampling continues when the projected point $x'$ is inside a sphere as in Figure~\ref{fig:3d}(a); but one system with the smallest value is eliminated when the projected point $x'$ is outside the sphere as in Figure~\ref{fig:3d}(b).

One may wonder why our elimination rule only considers the largest set $I$ but not any subset $J$, thinking that the screening statistics ${\cal S}_J(x_J)$ for $J \subseteq I$ may be larger than ${\cal S}_I(x_I)$. This would mean that even if an elimination does not occur with set $I$, an elimination might be possible for a subset $J$. However, the following lemma implies that our elimination rule for the largest set $I$  actually verifies elimination for all $2^{|I|}-1$ nonempty subsets $J\subseteq I$ by showing that we always get the largest screening statistics with set $I$.

\begin{lemma}
\label{lem:Smon}
Suppose $J \subseteq I\subseteq \{1,\ldots,k\}$.
Then ${\cal S}_J(x_J)\le {\cal S}_I(x_I)$ for all $x\in\R^k$.
\end{lemma}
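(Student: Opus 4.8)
The plan is to reduce $\mathcal{S}_I$ to a variational (minimization) form, from which the monotonicity becomes almost immediate. The key observation I would establish first is that for any index set $I$,
\[
\mathcal{S}_I(x) = \min_{c\in\R}\ \sum_{i\in I}\frac{(x_i-c)^2}{\sigma_i^2},
\]
i.e. $\mathcal{S}_I(x)$ is the smallest weighted sum of squared deviations of the coordinates $\{x_i:i\in I\}$ from a common center $c$ (weights $1/\sigma_i^2$), the minimizer being the weighted mean $\bar{x}^{w}_I=(\sum_{i\in I}x_i/\sigma_i^2)/(\sum_{i\in I}1/\sigma_i^2)$.

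To prove this characterization, in the equal-variance case I would read it straight off Corollary~\ref{cor:S}: since the arithmetic mean $\bar{x}$ minimizes $c\mapsto\sum_{i\in I}(x_i-c)^2$, the second expression in the corollary is exactly $\tfrac{1}{\sigma^2}\min_c\sum_{i\in I}(x_i-c)^2$. For general diagonal $\Gamma$ I would invoke the projection structure recorded after Lemma~\ref{lem:S}: there $\mathcal{S}_I(x)=\mathcal{S}_I(\Pi x)$ with $\Pi x=x-\bar{x}^{w}_I\mathbf{1}$ landing in the range $R=\{y:\sum_{i\in I}y_i/\sigma_i^2=0\}$, on which $\mathcal{S}_I$ reduces to $\sum_{i\in I}y_i^2/\sigma_i^2$; substituting $y=\Pi x$ yields the weighted-variance form above, and identifying that weighted variance with the minimum over $c$ completes the characterization.

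Granting the characterization, the comparison is a two-line argument. Let $c^{\star}=\bar{x}^{w}_I$ be the minimizer associated with the larger set $I$. Then
\[
\mathcal{S}_J(x_J)=\min_{c\in\R}\sum_{j\in J}\frac{(x_j-c)^2}{\sigma_j^2}\ \le\ \sum_{j\in J}\frac{(x_j-c^{\star})^2}{\sigma_j^2}\ \le\ \sum_{i\in I}\frac{(x_i-c^{\star})^2}{\sigma_i^2}=\mathcal{S}_I(x_I),
\]
where the first inequality holds because a minimum over $c$ is no larger than the value at the particular choice $c=c^{\star}$, and the second because $J\subseteq I$, so passing from $J$ to $I$ only appends nonnegative terms. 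As this handles an arbitrary $J\subseteq I$ at once, it proves the lemma; peeling off one index at a time is an equivalent but unnecessary alternative.

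The main obstacle is the first step in the unequal-variance case: extracting the weighted-mean/variational form of $\mathcal{S}_I$ from the matrix definition~(\ref{eq:defSI}), since the quadratic form is written through the non-invertible difference matrix $V$ rather than transparently as a weighted variance. Once that identity is secured (immediate under equal variances from Corollary~\ref{cor:S}, and otherwise from the $\Pi$-projection description), the inequality itself is routine, so I would devote the write-up to verifying the characterization and then state the displayed two-line chain.
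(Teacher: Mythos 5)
Your proof is correct, but it follows a genuinely different route from the paper's. The paper reduces to the case $|I|=|J|+1$, relabels so that $J=\{1,\ldots,s\}$ and $I=\{1,\ldots,s+1\}$, writes ${\cal S}_I$ explicitly as a quadratic form via (\ref{eqn:inv}), and then compares ${\cal S}_I(x)$ with ${\cal S}_I(\Psi_s x)$, where $\Psi_s$ is the orthogonal projection onto $\{y:\sum_{i=1}^{s}y_i=0,\ y_{s+1}=0\}$; the inequality comes from the fact that orthogonal projection decreases the squared norm, which is what ${\cal S}_I$ is (up to $1/\sigma^2$) on the hyperplane $H_{s+1}$. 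You instead establish the variational identity ${\cal S}_I(x)=\min_{c}\sum_{i\in I}(x_i-c)^2/\sigma_i^2$ and then get the full statement for an arbitrary subset $J\subseteq I$ in a single chain of inequalities (drop the minimization at $c^\star$, then append nonnegative terms), with no peeling of one index at a time. Your route buys two things: generality — it covers unequal variances, whereas the paper's proof, which leans on Corollary~\ref{cor:S}, is only written for the equal-variance case — and transparency of the monotonicity itself. What the paper's route buys is that every ingredient is already on the page: it reuses (\ref{eqn:inv}) and elementary norm-decrease under orthogonal projection, so nothing new has to be verified. One caveat on your unequal-variance argument: the claim that ${\cal S}_I$ restricted to $R=\{y:\sum_{i\in I}y_i/\sigma_i^2=0\}$ equals $\sum_{i\in I}y_i^2/\sigma_i^2$ is \emph{not} stated in the paper (the discussion after Lemma~\ref{lem:S} only identifies the range and null space of $\Pi$), so it must be proved; fortunately it is one line: from (\ref{eq:defSI}) and the definition of $\Pi$ one has ${\cal S}_I(y)=y^TV^T(V\Gamma V^T)^{-1}Vy=y^T\Gamma^{-1}\Pi y$, and $\Pi y=y$ for $y\in R$ because $\Pi$ is a projection with range $R$. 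Equivalently, you could invoke the standard constrained-minimum identity $(Vx)^T(V\Gamma V^T)^{-1}(Vx)=\min\{z^T\Gamma^{-1}z: Vz=Vx\}$ together with the observation that $Vz=Vx$ if and only if $z=x-c\boldsymbol{1}$ for some $c\in\R$. With that line supplied your proof is complete in full generality; in the equal-variance case — which is all the paper's own proof covers and all the ${\cal DK}$ procedures require — your proof is already complete as written, since Corollary~\ref{cor:S} plus the fact that the arithmetic mean minimizes $c\mapsto\sum_{i\in I}(x_i-c)^2$ gives the characterization immediately.
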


\section{Proofs and Approximations}
\label{sec:approx}
This section presents an approximation for the probability of incorrect selection
under ${\cal DK}_1$, which assumes known and equal variances $\sigma^2$.
We use these approximations in lieu of possibly conservative bounds in order to choose
the parameters $\eta_2,\ldots,\eta_k$ of ${\cal DK}_1$, thus bypassing a main source of inefficiencies.
In the course of the presentation, we explain how we choose the parameters $\eta_2,\ldots,\eta_{k}$ of our procedure.

The event of incorrect selection can be partitioned according to when the best system is eliminated.
If the best system is eliminated first, then we say that the level of elimination is $1$.
Similarly, if the second system to be eliminated is the best system, then we say that the level of elimination is $2$.
Thus, the possible levels of incorrect elimination are $1,\ldots,k-1$. The key building block for our approximation scheme is an approximation for the probability of incorrect selection at the first elimination level, which we discuss in Section~\ref{subsec:immediateelim}.
Other levels of incorrect elimination are studied in Section~\ref{subsec:otherelim}.
With this, we devise a procedure for choosing the parameter $\eta_{|I|}$ for ${\cal DK}_1$. We then explain how $\eta_2,\ldots,\eta_k$ for ${\cal DK}_1$ are related to parameters for ${\cal DK}_2$ and ${\cal DK}_3$ in Section~\ref{subsec:unknown}.

In the continuous analog of our problem, the discrete observation window is replaced with a continuous one.
The analog of the random walk $\vX_{\{1,\ldots,k\}}(n)$ is $\sigma B(t)$, where $B(t)$ is a standard
Brownian motion in $\R^k$ with drift $( \mu,\ldots,\mu,\mu+\delta) \times 1/\sigma$.
Throughout this section, we study this continuous problem as a proxy for the discrete problem, and the results we state for the
${\cal DK}_1$ algorithm are to be understood as for its continuous analog.

\begin{lemma}
\label{lem:deltasigma}
For fixed $\eta_k,\ldots,\eta_2$ and $\ell\in\{2,\ldots,k\}$, the probability of eliminination at level $\ell$ in ${\cal DK}_1$ is constant as a
function of $\delta$ and $\sigma$.
In particular, the probability of incorrect selection in ${\cal DK}_1$ does not depend on $\delta$ or $\sigma$.
\end{lemma}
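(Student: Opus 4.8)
The plan is to exhibit a single space--time scaling that transforms the drifted Brownian motion of the continuous $\mathcal{DK}_1$ problem, for arbitrary $(\delta,\sigma)$, into one canonical process, and then to check that every ingredient of the elimination rule transforms covariantly, so that the ordered sequence of eliminations is left unchanged. Since the procedure outcome (the order in which systems are removed, and in particular the level at which the best system $k$ is eliminated) is a measurable functional of the sample path, invariance of the law of this outcome will immediately give the claim.

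First I would center the process. Writing $Y(t)=\sigma B(t)$ for the continuous analog of the cumulative sums, with drift $(\mu,\dots,\mu,\mu+\delta)$, I set $Z(t)=Y(t)-\mu t\,\mathbf{1}=\delta t\,e_k+\sigma\tilde B(t)$, where $\tilde B$ is a driftless standard Brownian motion in $\R^k$ and $e_k$ is the $k$th standard unit vector. Because $\mathcal{S}_I(x)=\tfrac1{\sigma^2}\sum_{i\in I}(x_i-\bar x)^2$ is unchanged when a constant is added to every coordinate (Corollary~\ref{cor:S}), we have $\mathcal{S}_I(Y_I(t))=\tfrac1{\sigma^2}\sum_{i\in I}(Z_i(t)-\bar Z(t))^2$; likewise the ``smallest $\bar X_i$'' comparison that decides which system is removed is just the comparison of the $Z_i(t)$, since $t>0$ is fixed at the instant of elimination and the common shift $\mu t$ cancels. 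Hence the entire procedure is a functional of $\{Z(t)\}$ alone, and neither the common mean $\mu$ nor the final declaration depends on it.

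Next I would apply the reparametrization $t=(\sigma^2/\delta^2)s$ together with Brownian scaling, yielding the process identity $\{Z((\sigma^2/\delta^2)s)\}_s\equivdist\{(\sigma^2/\delta)\,W(s)\}_s$, where $W(s)=s\,e_k+\hat B(s)$ is a standard Brownian motion with unit drift in direction $e_k$ and $\hat B$ is driftless standard. The purpose of this particular choice is that the single positive factor $\sigma^2/\delta$ drops out of every decision. Substituting into the screening rule, the crossing $\mathcal{S}_I(Y_I(t))\ge(\sigma\eta_{|I|}/\delta_{|I|})^2$ becomes exactly $\sum_{i\in I}(W_i(s)-\bar W(s))^2\ge\frac{|I|}{|I|-1}\eta_{|I|}^2$, since the factors $\sigma^2/\delta^2$ produced on each side cancel (using $\delta_{|I|}^2=\delta^2(|I|-1)/|I|$); and the argmin selection becomes selection of the coordinate minimizing $W_i(s)$ over $I$, since a positive scale factor preserves order. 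Neither the resulting thresholds nor the selection rule contains $\delta$ or $\sigma$, so, reading off from $\{W(s)\}$, the ordered list of eliminations is a functional free of $(\delta,\sigma)$. As the monotone time change $t\mapsto s$ preserves the order of crossing events, the outcome computed from $\{Z(t)\}$ equals the one computed from the canonical $\{W(s)\}$, whose law is fixed; this gives that the probability of elimination at each level $\ell$ is constant in $(\delta,\sigma)$, and summing over $\ell=1,\dots,k-1$ yields the same for the probability of incorrect selection.

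I expect the main obstacle to be the bookkeeping of the last step rather than the scaling itself. One must verify that the within-stage cascade --- in which several systems may be eliminated at a single time $t$ without advancing the clock, each time re-evaluating the screening statistic on the shrunken set with a new threshold --- together with the argmin tie-breaking, is transported verbatim by the single positive scale factor across all relevant subsets. Here Lemma~\ref{lem:Smon} is what guarantees that it suffices to track crossings of the full survived set to capture these cascades, so that the induced map on procedure outcomes is literally the identity and the equality in distribution passes through cleanly.
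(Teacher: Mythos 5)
Your proof is correct and takes essentially the same approach as the paper's: both rest on the space--time Brownian scaling $t=(\sigma^2/\delta^2)s$ with spatial factor $\sigma^2/\delta$, combined with the observation that the spherical thresholds $\left(\sigma\eta_{|I|}/\delta_{|I|}\right)^2$ and the argmin rule transform covariantly so that all decisions reduce to those of a canonical $(\delta,\sigma)$-free process. The only cosmetic difference is that you package this as one global path transformation making the entire ordered elimination sequence a functional of the canonical process, whereas the paper tracks successive hitting locations of the scaled sets $(\sigma^2/\delta)S$ and invokes the strong Markov property level by level.
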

\begin{proof}
Consider $\sigma B(t) + \mu \boldsymbol{1} t + \delta w t$ instead of $\vX_{\{1,\ldots,k\}}(n)$, where $B(\cdot)$ is a standard Brownian motion in $\R^N$ with $B(0)=0$, $w=(0,\ldots,0,1)$, and $\boldsymbol{1}=(1,\ldots,1)$.
Since the screening statistic uses the projection of this process on the hyperplane $\{x:\sum_i x_i=0\}$, i.e., the `sample mean' is subtracted,
we may assume without loss of generality that $\mu=0$ and replace $w$ by its projected version $v=(1/k,\ldots,1/k,-(k-1)/k)$.
Suppose that we are given some $x\in \R^N$ and an $N$-dimensional set $S$. We set
\[
\tau_S = \inf\left\{t\ge 0: \frac {\sigma^2}\delta  x + \sigma B(t)+ \delta vt \in \frac{\sigma^2}\delta S\right\},
\]
so that
\begin{eqnarray*}
\Prob(\tau_S<\infty)
&=&\Prob\left(\exists t' \ge 0: \frac {\sigma^2}\delta  x + \sigma B\left(\frac{\sigma^2}{\delta^2}t' \right)+\frac {\sigma^2} \delta  vt' \in \frac{\sigma^2}\delta S\right)\\
&=&\Prob\left(\exists t' \ge 0: \frac {\sigma^2}\delta  x +\frac{\sigma^2}{\delta}B(t')+\frac {\sigma^2} \delta  vt' \in \frac{\sigma^2}\delta S\right)\\
&=&\Prob\left(\exists t' \ge 0: x + B(t')+ vt' \in S\right),
\end{eqnarray*}
where the first equality follows from rescaling time and the second from the Brownian scaling property.
This argument extends to the hitting location, i.e.,
$\Prob(\tau_S<\infty, \frac {\sigma^2}\delta  x + \sigma B(\tau_S)+ \delta v \tau_S \in \frac {\sigma^2}\delta dy)$.
In particular, the hitting location scales with $\sigma^2/\delta$.

Elimination at level $\ell$ amounts to successively hitting appropriate regions of sets of the form
\begin{eqnarray*}
\left\{x: {\cal S}_I(x_I)\ge \frac{k \sigma^2\eta^2 }{(k-1) \delta^2}\right\}
&=& \left\{x: {1 \over \sigma^2} \sum_{i\in I} (x_i - \bar{x})^2\ge \frac{k \sigma^2\eta^2 }{(k-1) \delta^2}\right\}\\
&=& \left\{x: \sum_{i\in I} (x_i - \bar{x})^2\ge \frac{k \sigma^4\eta^2 }{(k-1) \delta^2}\right\} \\
&=& \frac{\sigma^2}{\delta} \left\{x: \sum_{i\in I} (x_i - \bar{x})^2\ge \frac{k \eta^2 }{k-1}\right\},
\end{eqnarray*}
where we used Corollary~\ref{cor:S}. Such sets are of the form $(\sigma^2/\delta) S$, and
the successive hitting locations scale with $\sigma^2/\delta$.
By the strong Markov property and the calculation in the first part of this proof,
this means that the elimination probability does not depend on $\delta$ or $\sigma$.
\end{proof}

\subsection{Immediate (Level 1) Elimination of the Best System}
\label{subsec:immediateelim}
Our approximation for the probability of eliminating system $k$ first is
based on an asymptotic analysis as the number of systems $k$ goes to infinity.
Our results use the commonly employed idea of (i) considering the slippage configuration (SC) where $\mu_1 = \cdots = \mu_{k-1} = \mu_k-\delta = \mu$ and (ii) replacing the (discrete) Gaussian observation sequence with
a (continuous) Brownian motion.

Throughout this section we use the following notation. For a given a vector $x\in\R^k$,
we define
\[
\E_k(x) = \frac 1k \sum_{i=1}^k x_i,\quad
\Var_k(x) = \frac 1k \sum_{i=1}^k x_i^2 - \E_k(x)^2.
\]

The $\cal {DK}$ algorithms require evaluating ${\cal S}_{\{1,\ldots,k\}}$ at $\vX_{\{1,\ldots,k\}}(n)$, and by
Lemma~\ref{lem:S} this equals (up to $1/\sigma^2$) the squared norm of $\vX_{\{1,\ldots,k\}}(n) - \overline{\vX_{\{1,\ldots,k\}}(n)}$,
which corresponds to $\sigma B(t)-\sigma\E_k(B(t))$. (We abuse notation and interpret subtraction of a constant as elementwise subtraction.)
The following lemma specifies the probabilistic behavior of this process,
and it is important to note that it is free of the unknown mean parameter $\mu$.
\begin{lemma}
$B(t)-\E_k(B(t))$ has drift $(-1/k,\ldots,-1/k,(1-1/k))\times \delta/\sigma$ and is a standard Brownian motion in the $(k-1)$-dimensional
hyperplane
\[
H=\left\{x\in\R^k:\sum_{i=1}^k x_i=0\right\}.
\]
\end{lemma}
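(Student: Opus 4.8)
The plan is to write the drifted Brownian motion as the sum of a driftless part and a deterministic drift, to recognize the centering operation $x\mapsto x-\E_k(x)$ as an orthogonal projection, and then to read off both assertions from elementary linear algebra together with the covariance calculus of Brownian motion. First I would decompose $B(t)=W(t)+bt$, where $W$ is a driftless standard Brownian motion in $\R^k$ (covariance $tI$) and $b=(\mu,\ldots,\mu,\mu+\delta)/\sigma$ is the drift vector. Next I observe that subtracting the coordinate average is left-multiplication by $P=I-\frac1k\boldsymbol{1}\boldsymbol{1}^T$; that is, $x-\E_k(x)=Px$ for every $x\in\R^k$, so that $B(t)-\E_k(B(t))=PB(t)$. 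The matrix $P$ is the orthogonal projection onto $H$: it is symmetric and idempotent ($P^T=P$, $P^2=P$), it annihilates the normal direction ($P\boldsymbol{1}=0$), and it restricts to the identity on $H$. In particular $\boldsymbol{1}^TP=0$ gives $PB(t)\in H$ for all $t$.

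For the drift, linearity yields $PB(t)=PW(t)+(Pb)t$, so the drift of $B(t)-\E_k(B(t))$ is $Pb$. Writing $b=\frac{\mu}{\sigma}\boldsymbol{1}+\frac{\delta}{\sigma}w$ with $w=(0,\ldots,0,1)$ and using $P\boldsymbol{1}=0$, I get $Pb=\frac{\delta}{\sigma}Pw=\frac{\delta}{\sigma}(w-\boldsymbol{1}/k)=\frac{\delta}{\sigma}(-1/k,\ldots,-1/k,1-1/k)$, which is exactly the claimed drift. The $\mu$-term is killed by the projection, confirming that the centered process is free of the unknown mean parameter.

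For the diffusion part, $PW(t)$ is a fixed linear image of $W$, hence a continuous centered Gaussian process with independent increments and $PW(0)=0$, so it is a Brownian motion. Its covariance is $\Cov(PW(s),PW(t))=P\,(s\wedge t)I\,P^T=(s\wedge t)P$, using $PP^T=P$. To conclude that this is a \emph{standard} Brownian motion in the $(k-1)$-dimensional space $H$, I would fix any orthonormal basis $u_1,\ldots,u_{k-1}$ of $H$ and verify that the coordinate processes $\langle PW(t),u_j\rangle=\langle W(t),u_j\rangle$ (using $Pu_j=u_j$ and $P^T=P$) are independent one-dimensional standard Brownian motions, since $\Cov(\langle W(s),u_i\rangle,\langle W(t),u_j\rangle)=(s\wedge t)\,u_i^Tu_j$, which equals $s\wedge t$ when $i=j$ and $0$ otherwise. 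This is precisely the statement that $PW(t)$ carries the identity covariance within $H$.

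All of these computations are routine; the only point requiring care is the interpretation of ``standard Brownian motion in the $(k-1)$-dimensional hyperplane $H$,'' namely that isotropy is demanded only within $H$ (where $P$ restricts to the identity) while $H^\perp=\mathrm{span}(\boldsymbol{1})$ carries no mass. Making this precise through the orthonormal-basis coordinates is the main, though modest, obstacle, and it is what turns the covariance identity $(s\wedge t)P$ into the desired conclusion.
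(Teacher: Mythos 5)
Your proof is correct and takes essentially the same route as the paper: both identify the centering map $x\mapsto x-\E_k(x)$ with the orthogonal projection $\id_k-\boldsymbol{1}_k\boldsymbol{1}_k^T/k$ and use its idempotence to conclude that the covariance of the centered process is the projection itself, which acts as the identity on $H$ and vanishes on $\mathrm{span}(\boldsymbol{1}_k)$. The only (welcome) additions are that you verify the stated drift formula explicitly via $Pb=(\delta/\sigma)(w-\boldsymbol{1}_k/k)$ and spell out the orthonormal-basis meaning of ``standard Brownian motion in $H$,'' both of which the paper's proof leaves implicit.
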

\begin{proof}
The claim that $B(t)-\E_k(B(t))$ takes values in $H$ is evident.
We can write $B(1)-\E_k(B(1)) = (\id_k - \boldsymbol{1}_k \boldsymbol{1}_k^T /k) B(1)$,
where $\id_{k}$ is the $k \times k$ identity matrix and $\boldsymbol{1}_k$ is the $k \times 1$ vector of ones.
Therefore, the covariance matrix of $B(1)-\E_k(B(1))$ is
\[
(\id_k - \boldsymbol{1}_k \boldsymbol{1}_k^T /k) \times (\id_k - \boldsymbol{1}_k \boldsymbol{1}_k^T /k) = (\id_k - \boldsymbol{1}_k \boldsymbol{1}_k^T /k).
\]
This matrix has one eigenvalues 0 (with corresponding eigenvector $\boldsymbol{1}_k$) and 1 (with corresponding eigenspace $H$). Therefore it acts as the identity on $H$ and it is degenerate on the complement. \end{proof}

Setting $r ={\sigma \eta_{k} \over \delta_k}$, we define a $(k-1)$-dimensional sphere in $H$ by
\[
C=\left\{x\in \R^k: \sum_{i=1}^k x_i = 0, \|x\| = r\right\}.
\]
Elimination of the best system can be formulated as $B(t)-\E_k(B(t))$ hitting $C$ in the region
\[
E_k = \{x\in C: x_k = \min(x_1,\ldots,x_k) \}.
\]

\begin{figure}[t!]
\begin{subfigure}{.5\textwidth}
  \centering
  \includegraphics[width=\linewidth]{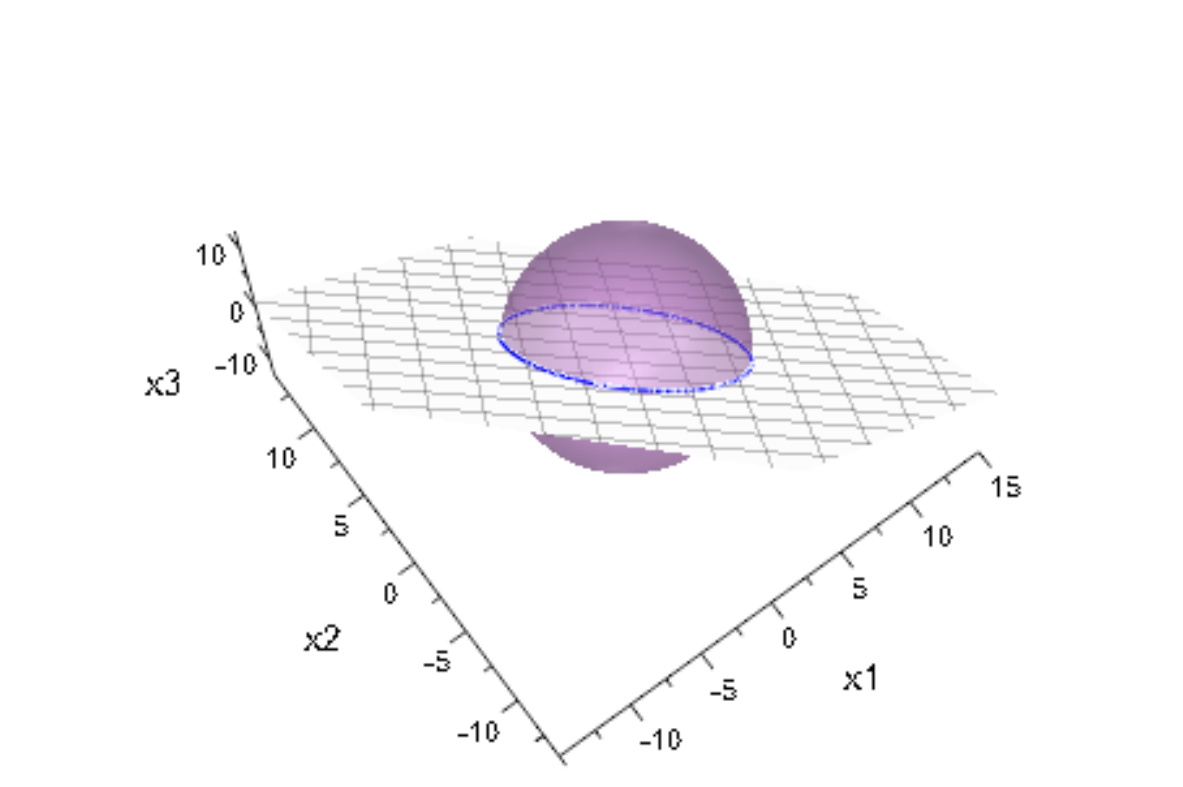}
\caption{Sphere $C$ (circle here) on hyperplane $H$}
\end{subfigure}%
\begin{subfigure}{.5\textwidth}
\centering
$C=\left\{x\in \R^k: \sum_{i=1}^k x_i = 0, \|x\| = r\right\}$
$E_k = \{x\in C: x_k = \min(x_1,\ldots,x_k) \}$
\end{subfigure}
\begin{subfigure}{.5\textwidth}
  \centering
  \includegraphics[width=\linewidth]{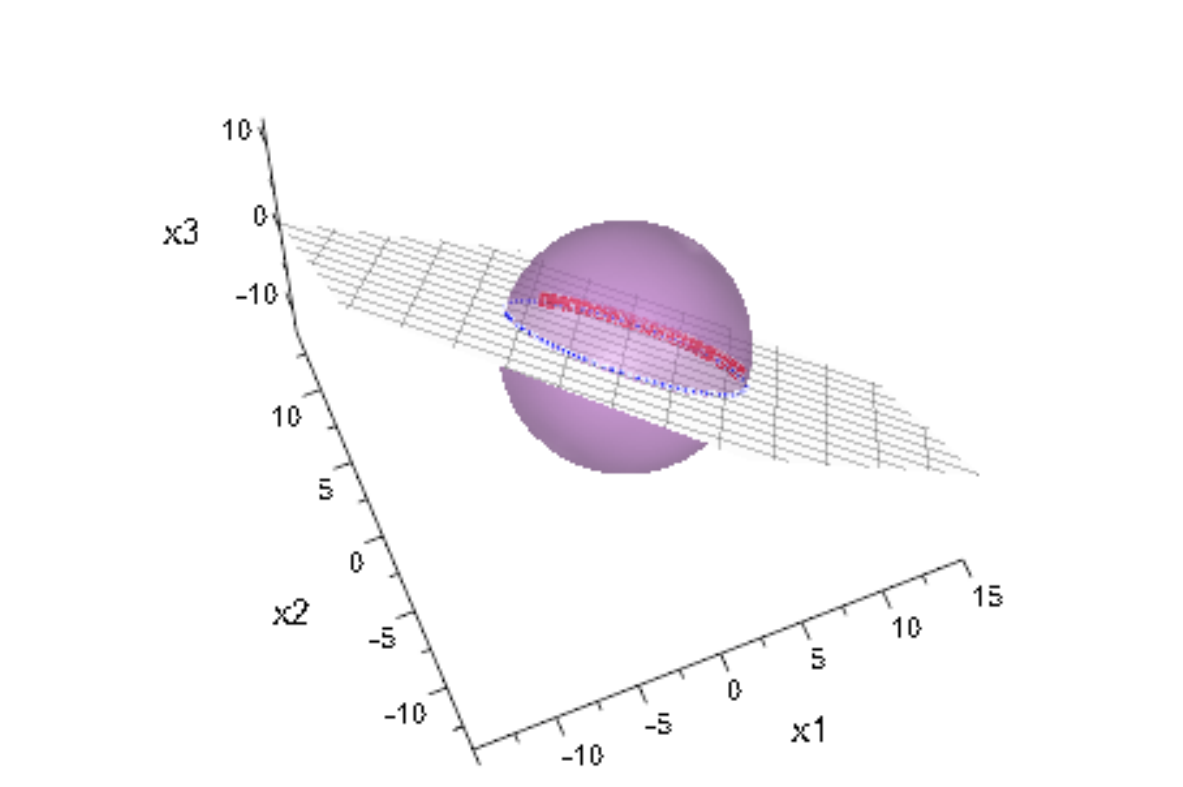}
\caption{Region $E_k$ (red) on hyperplane $H$ }
\end{subfigure}
\begin{subfigure}{.5\textwidth}
  \centering
  \includegraphics[width=\linewidth]{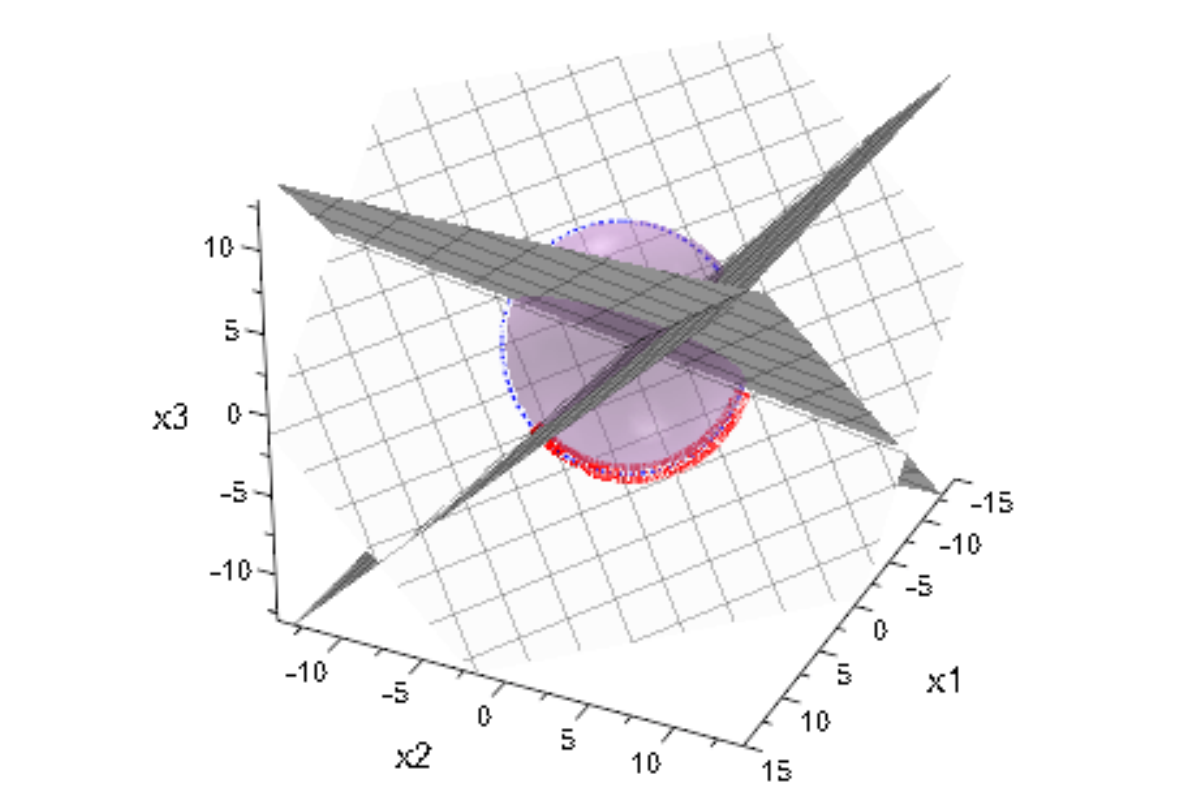}
\caption{Region $E_k$ (red) with planes $x_1=x_3$ and $x_2=x_3$ on hyperplane $H$}
\end{subfigure}
\caption{Graphical depiction of $C$ and $E_k$ for $k=3$.}
\label{fig:region}
\end{figure}

Plane $H$ is shown in Figure~\ref{fig:region} when $k=3$. The blue curve in Figure~\ref{fig:region}(a) shows $C$ when $k=3$ and the red curve in Figure~\ref{fig:region}(b) shows $E_k$, which is a part of $C$ divided by planes $x_1=x_3$ and $x_2=x_3$ as shown in Figure~\ref{fig:region}(c).

We now state the main result of this section.

\begin{lemma}
\label{lem:level1error}
Let $k\ge 3$.
Suppose that $Z_1,\ldots, Z_k$ are \IID standard normal.
The probability that the process $B(t)-\E_k(B(t))$
first hits $C$ in the part $E_k$ where the best system $k$ gets eliminated equals
\begin{equation}
\label{eq:probfirstelim}
\frac{\int_{-r}^r e^{{\frac{\delta_k}{\sigma}} y } d_y\Prob(Z_k=\min(Z_1,\ldots,Z_k), r(Z_k-\E_k(Z))\le  y \sqrt{(k-1)\Var_k(Z)} )}
{\left( {\eta_k \over 2} \right)^{-\nu} \Gamma(\nu+1)I_\nu\left({ \eta_k }\right)},
\end{equation}
where $\nu=(k-3)/2$,
$\Gamma$ stands for the Gamma function, and $I_\nu$ for the modified Bessel function of the first kind.
\end{lemma}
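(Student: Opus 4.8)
The plan is to recognize the stated probability as an exit-location probability for a Brownian motion with drift leaving a centered sphere, and to evaluate it via the classical formula that such an exit law has a density proportional to $e^{\langle\theta,x\rangle}$ against the uniform measure on the sphere. Write $Y(t)=B(t)-\E_k(B(t))$. By the preceding lemma, $Y$ is a standard Brownian motion in the $(k-1)$-dimensional hyperplane $H$, started at the origin $Y(0)=0$, with constant drift $\theta:=v\,\delta/\sigma$, where $v=(-1/k,\dots,-1/k,1-1/k)\in H$. Let $\tau=\inf\{t\ge 0:\|Y(t)\|=r\}$ be the first hitting time of $C$; since the ball of radius $r$ is bounded, $\tau<\infty$ almost surely, and the quantity in the lemma is $\Prob(Y(\tau)\in E_k)$. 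The whole argument rests on the identity
\[
\Prob\bigl(Y(\tau)\in A\bigr)=\frac{\E\bigl[e^{\langle\theta,U\rangle}\mathbbm{1}_{\{U\in A\}}\bigr]}{\E\bigl[e^{\langle\theta,U\rangle}\bigr]},\qquad A\subseteq C,
\]
where $U$ is uniform on $C$; taking $A=E_k$ and computing the two expectations yields the result. A simplification used throughout is that for $x\in H$ one has $\langle v,x\rangle=x_k$ (because $\sum_i x_i=0$ forces $\sum_{i<k}x_i=-x_k$), so $\langle\theta,x\rangle=(\delta/\sigma)x_k$ on $C$.

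The main step is establishing the displayed identity, and I would do it by a change of measure in $H$. By Girsanov's theorem, relative to driftless standard Brownian motion in $H$ the drifting law has density $\exp(\langle\theta,Y(t)\rangle-\tfrac12|\theta|^2 t)$ on $\mathcal F_t$; optional stopping at $\tau$ (justified since $e^{\langle\theta,Y(\tau)\rangle}$ is bounded on $C$ and $e^{-\frac12|\theta|^2\tau}\le 1$) gives $\Prob(Y(\tau)\in A)=\E^0[\mathbbm{1}_{\{Y(\tau)\in A\}}\,e^{\langle\theta,Y(\tau)\rangle}e^{-\frac12|\theta|^2\tau}]$. The genuinely probabilistic input is that for \emph{driftless} Brownian motion the exit location $Y(\tau)$ and exit time $\tau$ are independent, with $Y(\tau)$ uniform on $C$; this follows from the rotational invariance of driftless Brownian motion and of the ball (averaging the joint law of $(Y(\tau),\tau)$ over the orthogonal group fixes $\tau$ and uniformizes $Y(\tau)$). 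Factoring the expectation along this independence, the time factor $\E^0[e^{-\frac12|\theta|^2\tau}]$ becomes a constant that cancels in the ratio, leaving the density $e^{\langle\theta,x\rangle}$ against the uniform law. I expect this independence/uniformity step to be the crux; everything after it is bookkeeping.

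For the denominator, $\E[e^{\langle\theta,U\rangle}]$ is the average of $e^{\langle\theta,x\rangle}$ over the sphere $C$ of radius $r$ inside the $(k-1)$-dimensional space $H$. Writing $U=rU'$ with $U'$ uniform on the unit sphere of $H$, I would invoke the classical identity $\E[e^{\langle a,U'\rangle}]=\Gamma(d/2)\,(|a|/2)^{-(d/2-1)}I_{d/2-1}(|a|)$ for the unit sphere $S^{d-1}\subset\R^d$. Here $d=k-1$, so $d/2-1=(k-3)/2=\nu$ and $\Gamma(d/2)=\Gamma(\nu+1)$; moreover $|a|=r|\theta|=r\,|v|\,\delta/\sigma=r\,\delta_k/\sigma=\eta_k$ since $|v|=\sqrt{(k-1)/k}=\delta_k/\delta$ and $r=\sigma\eta_k/\delta_k$. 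This reproduces the denominator $(\eta_k/2)^{-\nu}\Gamma(\nu+1)I_\nu(\eta_k)$ exactly.

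For the numerator I would model the uniform law on $C$ with Gaussians. If $Z_1,\dots,Z_k$ are \IID standard normal, then $Z-\E_k(Z)\boldsymbol{1}$ is an isotropic Gaussian vector in $H$ (its covariance $\id_k-\boldsymbol{1}\boldsymbol{1}^T/k$ is the orthogonal projection onto $H$), so $U:=r\,(Z-\E_k(Z)\boldsymbol{1})/\|Z-\E_k(Z)\boldsymbol{1}\|$ is uniform on $C$, with $\|Z-\E_k(Z)\boldsymbol{1}\|=\sqrt{k\,\Var_k(Z)}$ and $k$th coordinate $U_k=r(Z_k-\E_k(Z))/\sqrt{k\,\Var_k(Z)}$. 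Since each $U_i$ is an increasing affine function of $Z_i$, the event $\{U\in E_k\}=\{U_k=\min_i U_i\}$ equals $\{Z_k=\min_i Z_i\}$. Substituting $\langle\theta,U\rangle=(\delta/\sigma)U_k$ and setting $y=U_k\sqrt{k/(k-1)}=r(Z_k-\E_k(Z))/\sqrt{(k-1)\Var_k(Z)}$, so that $(\delta/\sigma)U_k=(\delta_k/\sigma)y$, turns the numerator $\E[e^{\langle\theta,U\rangle}\mathbbm{1}_{\{U\in E_k\}}]$ into $\E[e^{(\delta_k/\sigma)y}\mathbbm{1}_{\{Z_k=\min_i Z_i\}}]$. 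Writing this expectation as a Stieltjes integral against the distribution function $y\mapsto\Prob(Z_k=\min_i Z_i,\ r(Z_k-\E_k(Z))\le y\sqrt{(k-1)\Var_k(Z)})$, whose support is $[-r,r]$ (because the $k$th coordinate on $C$ ranges over $[-r\sqrt{(k-1)/k},\,r\sqrt{(k-1)/k}]$), reproduces the numerator of (\ref{eq:probfirstelim}). Dividing numerator by denominator completes the proof.
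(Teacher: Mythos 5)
Your proposal is correct and follows essentially the same route as the paper: both identify the exit law of the drifted Brownian motion on $C$ as the von Mises density proportional to $e^{\langle\theta,x\rangle}$ with respect to the uniform distribution on $C$, evaluate the normalizing constant with the same Bessel-function identity giving $(\eta_k/2)^{-\nu}\Gamma(\nu+1)I_\nu(\eta_k)$, and compute the numerator through the identical Gaussian representation $U=r\bigl(Z-\E_k(Z)\boldsymbol{1}\bigr)/\sqrt{k\Var_k(Z)}$ of the uniform law on $C$ together with the reduction $\langle\theta,x\rangle=(\delta/\sigma)x_k$ on $H$. The only difference is that where the paper simply cites Rogers and Pitman (1981) for the exit density, you derive it yourself via Girsanov's theorem combined with the rotational-invariance argument showing that the driftless exit time and exit location are independent with uniform exit location — a self-contained justification of the cited ingredient, but not a different method.
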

\begin{proof}
Writing $\zeta$ for the drift of $B(t)-\E_k(B(t))$, then
the hitting place of $B(t)-\E_k(B(t))$ on $C$ has density $f$ with respect to the uniform distribution $u_C$ on $C$.
Here $u_C$ should be interpreted as a volume element on $C$ in the terminology of differential geometry,
and by rotational invariance it has a `simulation interpretation' as the distribution of
\[
X=\frac{r(Z_1-\E_k(Z),\ldots,Z_k-\E_k(Z))}{\sqrt{k\Var_k(Z)}},
\]
where $Z$ is a standard normal vector in $\R^k$.
The density $f$ with respect to $u_C$ is given by
(e.g., Rogers and Pitman (1981))
\[
f(x) = \frac{e^{\langle\zeta, x\rangle}}{\int_C e^{ {\langle\zeta, w \rangle}} u_C(dw)}, \quad x\in C.
\]
This distribution is known as the von Mises distribution.

According to Rogers and Pitman (1981), for any $\mu\in\R^k$ with $\sum_i \mu_i=0$,
\[
\int_C e^{\langle \mu, w \rangle} u_C(dw) = (\|\mu\| r/2)^{-\nu} \Gamma(\nu+1) I_\nu(\|\mu\| r),
\]
where $\nu = (k-3)/2$.
Therefore, the denominator can be written as
\[
\int_C e^{ {\langle\zeta, w \rangle}} u_C(dw) =\left( { {\delta_k \over \sigma} r \over 2} \right)^{-\nu} \Gamma(\nu+1)I_\nu\left({ {\delta_k \over \sigma} r}\right)=\left( { \eta_k \over 2} \right)^{-\nu} \Gamma(\nu+1)I_\nu\left({ \eta_k}\right)
\]
because $\|\zeta\|=\delta\sqrt{(k-1)/k}/\sigma=\delta_k/\sigma$ and $(\delta_k / \sigma) r = \eta_k$.
Note that larger values of $B_k(t) -\E_k(B(t))$ are more likely than smaller values when the process hits $C$, which
should be expected because system $k$ is the best one.

The probability of eliminating the best system in level 1 equals
\begin{eqnarray*}
\int_{E_k}f(x)u_C(dx) &=& \E[\mathbbm{1}(X\in E_k) f(X)] \\
&=& \E[\mathbbm{1}(X_k=\min(X_1,\ldots,X_k)) f(X)] \\
&=& \frac{\E[\mathbbm{1}(X_k=\min(X_1,\ldots,X_k)) e^{\langle \zeta, X\rangle}]}{\int_C e^{\langle\zeta, w\rangle} u_C(dw)} \\
&=& \frac{\int_{-r}^r e^{\frac{\delta_k}\sigma y} d_y\Prob(X_k=\min(X_1,\ldots,X_k), \langle \zeta,X\rangle\le \frac{\delta_k}\sigma y)}{\int_C e^{\langle\zeta, w\rangle} u_C(dw)},
\end{eqnarray*}
where $X$ has a uniform distribution on $C$ (see the beginning of this proof) and $\mathbbm{1}$ stands for the indicator function.
Since $\langle \zeta,x\rangle = \frac\delta\sigma x_k$ for $x\in H$, the sought probability equals
\[
\frac{\int_{-r}^r e^{\frac{\delta_k}\sigma y} d_y\Prob(Z_k=\min(Z_1,\ldots,Z_k), \frac \delta \sigma r (Z_k-\E_k(Z))/\sqrt{k\Var_k(Z)}\le \frac{\delta_k}\sigma y)}{\int_C e^{\langle\zeta, w\rangle} u_C(dw)},
\]
as claimed.
\end{proof}

The preceding lemma yields a Monte Carlo method for calculating the probability of immediate elimination of the best system.
Indeed, it states that this probability equals
\begin{equation}
\label{eq:simnew}
\frac{\E\left[\exp\left(\eta_k \frac{Z_k-\E_k(Z)}{\sqrt{(k-1)\Var_k(Z)}}\right); Z_k=\min(Z_1,\ldots,Z_k)\right]}
{\left( {\eta_k \over 2} \right)^{-\nu} \Gamma(\nu+1)I_\nu\left({ \eta_k }\right)},
\end{equation}
for \IID standard normal $Z_1,\ldots,Z_k$.
However, for large $k$, such a Monte Carlo method is not efficient and we instead approximate the level 1 probability (\ref{eq:probfirstelim})
by replacing several of its components by asymptotic approximations.
For instance, as $k\to\infty$, the random variables $\E_k(Z)$ and $\Var_k(Z)$ converge in distribution to 0 and 1, respectively, by the
strong law of large numbers. The rate of convergence is relatively fast (order $1/\sqrt{k}$ by the central limit theorem).
We, therefore, approximate those variables by
their deterministic asymptotic approximations. The term with the minimum is slightly more complicated.
Writing
\[
c_k = \sqrt{2\log k} - \frac{\log\log k + \log(4\pi)}{2\sqrt{2\log k}},
\]
$\min(Z_1,\ldots,Z_{k-1})+c_{k-1}$ converges in distribution to 0. For example, see Example 3.3.29 in Embrechts, Kluppelberg and Mikosch (1997). The rate of convergence is relatively slow (order $1/\sqrt{2\log k}$), so we use an approximation based on the fact that
\[
\sqrt{2\log k}(\min(Z_1,\ldots,Z_{k-1})+c_{k-1})
\]
converges in distribution to $-G$ where $G$ is a Gumbel distributed random variable which is equal in distribution to
$-\log(-\log(U))$ where $U$ is standard uniformly distributed.
Even when the central limit theorem is used for the sum instead of the law of large numbers,
the minimum and sum are asymptotically independent (e.g., Chow and Teugels 1978). This motivates the approximation, for $ y\in(-r,r)$,

\begin{eqnarray*}
\lefteqn{\Prob(Z_k=\min(Z_1,\ldots,Z_k), r(Z_{k}-\E_k(Z))\le  y\sqrt{(k-1)\Var_k(Z)}  ) }\\&\approx&
\Prob(Z_k\le \min(Z_1,\ldots,Z_{k-1}), rZ_{k}\le  y\sqrt{(k-1)}  ) \\&\approx&
\Prob(Z_{k} \le {-G}/\sqrt{2\log k} -c_{k-1}, rZ_{k} \le y \sqrt{k-1}),
\end{eqnarray*}
where $Z_{k}$ and $G$ are independent.

We are now ready to formulate our approximation for (\ref{eq:probfirstelim}), and we first assume $G$ is a given constant.

\begin{lemma}\label{lem:level1approx}
For fixed $a \in \R$, we have
\begin{eqnarray*}
\lefteqn{
\int_{-r}^r e^{\frac{\delta_k}\sigma y} d_y\Prob(Z_{k}\le -a /\sqrt{2\log k} -c_{k-1}, rZ_{k}/\sqrt{k-1} \le y)}\\ &=&
\exp\left(\frac{\eta_k^2}{2(k-1)}\right) \left[ \Phi\left(\min\left( \max\left(-\sqrt{k-1},\frac{ -a}{\sqrt{2\log k}} -c_{k-1}\right),\sqrt{k-1}\right) - \frac{ \eta_k}{{\sqrt{k-1}}} \right) - \Phi\left( -\sqrt{k-1} -  \frac{ \eta_k}{{\sqrt{k-1}}} \right) \right].
\end{eqnarray*}
where $\Phi(\cdot)$ is the cumulative distribution function (cdf) of the standard normal random variable.
\end{lemma}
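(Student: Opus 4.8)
The plan is to reduce the Stieltjes integral to an elementary Gaussian integral and then complete the square. Write $c = \delta_k/\sigma$ and recall that $cr = \eta_k$, so $c = \eta_k/r$. Abbreviate the threshold as $b = -a/\sqrt{2\log k} - c_{k-1}$, so that the differentiating measure is $d_y \Prob(Z_k \le b,\, rZ_k/\sqrt{k-1} \le y)$. The first step is to rewrite the joint event: since $rZ_k/\sqrt{k-1} \le y$ is equivalent to $Z_k \le y\sqrt{k-1}/r$, the two constraints combine into $Z_k \le \min(b,\, y\sqrt{k-1}/r)$, so the cumulative function being differentiated in $y$ is simply $\Phi(\min(b,\, y\sqrt{k-1}/r))$.

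Next I would change variables from $y$ to $z = y\sqrt{k-1}/r$. As $y$ runs over the integration window $(-r,r)$, the new variable $z$ runs over $(-\sqrt{k-1}, \sqrt{k-1})$, and the exponent becomes $cy = \eta_k z/\sqrt{k-1}$. The measure $d\Phi(\min(b,z))$ puts density $\phi(z)$ on $\{z < b\}$ and no mass on $\{z \ge b\}$; intersecting with the window truncates $z$ to the interval with lower endpoint $-\sqrt{k-1}$ and upper endpoint $m := \min(\max(-\sqrt{k-1}, b), \sqrt{k-1})$. (When $b \le -\sqrt{k-1}$ this interval is empty and both sides vanish, which matches the formula.) Hence the integral equals $\int_{-\sqrt{k-1}}^{m} e^{\eta_k z/\sqrt{k-1}}\phi(z)\,dz$.

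The key algebraic step is to complete the square in the Gaussian exponent: $-z^2/2 + (\eta_k/\sqrt{k-1})z = -\tfrac12(z - \eta_k/\sqrt{k-1})^2 + \eta_k^2/(2(k-1))$, so that $e^{\eta_k z/\sqrt{k-1}}\phi(z) = \exp(\eta_k^2/(2(k-1)))\,\phi(z - \eta_k/\sqrt{k-1})$. Substituting $w = z - \eta_k/\sqrt{k-1}$ and integrating then produces the constant factor $\exp(\eta_k^2/(2(k-1)))$ times the difference $\Phi(m - \eta_k/\sqrt{k-1}) - \Phi(-\sqrt{k-1} - \eta_k/\sqrt{k-1})$, which is exactly the claimed right-hand side.

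The only delicate point — rather than any hard analysis — is the bookkeeping of the clamped upper limit $m$: one must track how the cap at $b$ arising from $\{Z_k \le b\}$ interacts with the truncation of the integration window to $(-\sqrt{k-1}, \sqrt{k-1})$, and verify that the degenerate cases $b \le -\sqrt{k-1}$ and $b \ge \sqrt{k-1}$ are reproduced correctly by the $\min(\max(\cdot))$ expression. Everything else is a routine change of variables together with a standard completion of the square, so I expect no substantive obstacle beyond keeping the endpoints straight.
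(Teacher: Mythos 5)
Your proof is correct and follows essentially the same route as the paper's: both reduce the Stieltjes integral to a truncated Gaussian integral with the upper limit clamped via $\min(\max(\cdot))$, and then complete the square to produce the factor $\exp\left(\eta_k^2/(2(k-1))\right)$ and the difference of $\Phi$ terms. The only cosmetic difference is that you standardize via $z = y\sqrt{k-1}/r$ to work with a standard normal, whereas the paper keeps the variable $y$ and works with the centered Gaussian $Y$ of variance $r^2/(k-1)$; the computations are identical up to this change of coordinates.
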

\begin{proof}

Letting $Y$ be a centered Gaussian variable with variance $r^2/(k-1)$. For any $\kappa\in\R$, we then have
\begin{eqnarray*}
\lefteqn{
\int_{-r}^r e^{(\delta_k/\sigma) y} d_y\Prob(Z_{k}\le \kappa, rZ_{k}/\sqrt{k-1} \le y)}\\
&=& \int_{-r}^{r\min(\max(-1,{\kappa/\sqrt{k-1}}),1)} e^{(\delta_k/\sigma) y} d\Prob(Y\le y) \\
&=& \int_{-r}^{r\min(\max(-1,{\kappa/\sqrt{k-1}}),1)} e^{(\delta_k/\sigma) y} \frac{\sqrt{k-1}}{r\sqrt{2\pi}} \exp\left({-\frac {(k-1)y^2}{2r^2}}\right) dy \\
&=& e^{\frac{ (\delta_k/\sigma)^2 r^2}{2(k-1)}} \frac{\sqrt{k-1}}{\sqrt{2\pi} r} \int_{-r}^{r\min(\max(-1,{ \kappa/\sqrt{k-1}}),1)} \exp\left({-\frac {\left(y - \frac{(\delta_k/\sigma) r^2}{{(k-1)}}\right)^2}{2r^2/(k-1)}}\right)dy\\
&=& e^{\frac{ \eta_k^2}{2(k-1)}} \frac{\sqrt{k-1}}{\sqrt{2\pi} r} \int_{-r}^{r\min(\max(-1,{\kappa/\sqrt{k-1}}),1)} \exp\left({-\frac {\left(y - \frac{(\delta_k/\sigma) r^2}{{(k-1)}}\right)^2}{2r^2/(k-1)}}\right)dy \\
&=& e^{\frac{ \eta_k^2}{2(k-1)}} \left[ \Phi\left( \min(\max(-\sqrt{k-1},\kappa),\sqrt{k-1}) - \frac{ (\delta_k / \sigma) r}{{\sqrt{k-1}}} \right) - \Phi\left( -\sqrt{k-1} -  \frac{ (\delta_k / \sigma) r}{{\sqrt{k-1}}} \right) \right]\\
&=&e^{\frac{ \eta_k^2}{2(k-1)}} \left[ \Phi\left(\min(\max(-\sqrt{k-1},\kappa),\sqrt{k-1}) - \frac{ \eta_k}{{\sqrt{k-1}}} \right) - \Phi\left( -\sqrt{k-1} -  \frac{ \eta_k}{{\sqrt{k-1}}} \right) \right],
\end{eqnarray*}
as claimed.
\end{proof}

In summary, we approximate the probability of first eliminating the best system by
\begin{equation}
\label{eq:approximationlevelI}
\frac{\exp\left(\frac {\eta_k^2}{2(k-1)}\right) \left[ \E\Phi\left( \min\left(\max\left(-\sqrt{k-1},\frac{ -G}{\sqrt{2\log k}} -c_{k-1}\right),\sqrt{k-1}\right) - \frac{ \eta_k}{{\sqrt{k-1}}} \right)- \Phi\left(-\sqrt{k-1} - \frac{\eta_k }{\sqrt{k-1}}\right) \right]}
{\left(\eta_k /2\right)^{-\nu} \Gamma(\nu+1)I_\nu(\eta_k)}.
\end{equation}

The expectation in (\ref{eq:approximationlevelI}) can be estimated through either Monte Carlo by generating Gumbel random variates or numerical integration from 0 to 1 which is the range of a random number $U$. Both can be done fast but we use the latter method because it is faster and free of sampling error. In \label{subsec:otherelim} we explain in more detail how the numerical integration is performed.


\begin{remark}
\label{rem:shane}
Shane Henderson communicated to us that the numerator in (\ref{eq:simnew}) can be written as
\[
\E\left[\exp\left(\eta_k \frac{\min(Z_1,\ldots,Z_k) - \E_k(Z)}{\sqrt{(k-1)\Var_k(Z)}}\right); Z_k = \min(Z_1,\ldots,Z_k)\right],
\]
and since $\min(Z_1,\ldots,Z_k)$, $\E_k(Z)$, and $\Var_k(Z)$ do not change when the elements of $Z$ are permuted, this equals
\[
\frac 1k \sum_{i=1}^k
\E\left[\exp\left(\eta_k \frac{\min(Z_1,\ldots,Z_k) - \E_k(Z)}{\sqrt{(k-1)\Var_k(Z)}}\right); Z_i = \min(Z_1,\ldots,Z_k)\right] =
\frac 1k \E\left[\exp\left(\eta_k \frac{\min(Z_1,\ldots,Z_k) - \E_k(Z)}{\sqrt{(k-1)\Var_k(Z)}}\right)\right].
\]
Using the approximations $\E_k(Z)\approx 0$, $\Var_k(Z)\approx 1$, $\min(Z_1,\ldots,Z_k) \approx -G/\sqrt{2\log(k)} - c_k$ as before,
the numerator in (\ref{eq:simnew}) can be approximated by
\[
\frac 1k \E\left[\exp\left(-\eta_k \frac{G}{\sqrt{2(k-1)\log(k)}} - \frac{\eta_kc_k}{\sqrt{k-1}}\right)\right] =
\frac 1k  e^{-\eta_k c_k/\sqrt{k-1}}  \Gamma\left( 1+ \frac{\eta_k}{\sqrt{2(k-1) log(k)}}\right),
\]
since $\E[e^{-\gamma G}] = \E[Y^\gamma] = \int_0^\infty x^\gamma e^{-x} dx = \Gamma(\gamma+1)$ for a standard exponentially distributed random variable $Y$.

We do not use this approximation in the remainder of this paper, since experiments have shown that it leads to higher PCS than (\ref{eq:approximationlevelI}).
\end{remark}

\subsection{Other Level Errors}
\label{subsec:otherelim}
For level $\ell$ errors for $\ell=2,3,\ldots, k-1$, the number of survived systems $|I|$ is $|I|=k-\ell+1$ and it is natural to replace $k$ with $|I|$  in (\ref{eq:approximationlevelI}) as follows: \begin{equation}
\label{eq:raweta}
\frac{\exp\left(\frac {\eta_{|I|}^2}{2(|I|-1)}\right) \left[ \E\Phi\left( \min\left(\max\left(-\sqrt{|I|-1},\frac{ -G}{\sqrt{2\log |I|}} -c_{|I|-1}\right),\sqrt{|I|-1}\right) - \frac{ \eta_{|I|}}{{\sqrt{|I|-1}}} \right)- \Phi\left(-\sqrt{|I|-1} - \frac{\eta_{|I|} }{\sqrt{|I|-1}}\right) \right]}
{\left(\eta_{|I|} /2\right)^{-\nu} \Gamma(\nu+1)I_\nu(\eta_{|I|})}
\end{equation}
where $\nu = (|I|-3)/2$.
In our procedure,
$\eta_{|I|}$ is calculated as the solution to $(\ref{eq:raweta}) = \beta_\ell$ for $0< \beta_\ell < \alpha$. We let $P_k(\ell/k,\beta_\ell)$ represent level $\ell$ error, the probability of incorrectly eliminating the best system at level $\ell$ when $\eta_{|I|}$ is calculated with target $\beta_\ell$. Note that it does not depend on $\delta$ or $\sigma$ by Lemma~\ref{lem:deltasigma}.
The probability of incorrect selection (PICS) of ${\cal DK}_1$ is
\[
{\rm PICS} = \sum_{\ell=1}^{k-1} P_k(\ell/k,\beta_\ell).
\]

Let $\beta_0 = \alpha/(k-1)$. If $P_k(\ell/k, \beta_0)$ for $\ell=1,\ldots,k-1$ are all approximately equal to $\beta_0$, then the overall PICS would be approximately equal to $\alpha$. For large $k$, the analysis in Section~\ref{subsec:immediateelim} ensures that $\eta_k$, the solution to $(\ref{eq:approximationlevelI}) = \beta_0$, would result in the level 1 error approximately equal to $\beta_0$. For other level errors, we do not have control over the error probability but we propose an approximation.

For the derivation of (\ref{eq:approximationlevelI}), it is critical that the starting point of the corresponding Brownian motion is the origin. For levels $\ell > 1$, we start at a random point from the previous level and thus we do not necessarily have $P_k(\ell/k, \beta_0) \approx \beta_0$ if we let $\eta_{|I|}$ be the solution to $(\ref{eq:raweta}) = \beta_0$, unless we discard all observations from previous levels. This is not desirable because too many observations would be wasted. Instead, we seek for a heuristic way to determine $\eta_{|I|}$ under the following assumption:

\begin{assumption}\label{assump:strong}
For $0< \beta_\ell<\alpha$, $\ell = 1, 2, \ldots, k-1$ and $\beta_0 = \alpha/(k-1)$,
\begin{enumerate}
\item  $P_k( \ell/k,  \beta_\ell) \approx \beta_\ell \cdot q_k(\ell/k)$; and

\item If $\beta_\ell = \beta_0$ for all $\ell$, then the probability that an incorrect selection (ICS) event occurs at standardized level $\ell/k$ is approximately $\int_{\ell -1 \over k-1}^{\ell \over k-1}g(w)dw$ for a density function $g(\cdot)$ in [0,1].
\end{enumerate}
\end{assumption}

Assumption~\ref{assump:strong}.1 is effectively a first-order Taylor approximation under appropriate differentiability assumptions because $\lim_{\beta\downarrow 0}P_k(\ell/k,\beta)=0$. This assumption implies that for small $\beta_\ell$, the level $\ell$ error is approximately linear in $\beta_\ell$. For example, if $\beta_\ell$ decreases in half for level $\ell$, then the level $\ell$ error is expected to be cut in half.

We have empirical evidence for Assumption~\ref{assump:strong}.2.  To test Assumption~\ref{assump:strong}.2, we made one million replications and recorded standardized levels where ICS occurred for each experimental setting. Then a kernel density estimator is fitted to the data using Matlab with a normal kernel and support $[0,1]$. A bandwidth was chosen by Matlab, which is known to be optimal for the normal kernel. Figure~\ref{fig:ratio} shows kernel density estimates of standardized levels $\ell/k$ of having ICS for $k=75, 150, 500$ and $1000$ and $\alpha = 0.05$ and $0.10$ when $\delta = 0.3$ and $\sigma^2 = 1$. Note that the specific choice of $\delta$ and $\sigma$ does not matter in view of Lemma~\ref{lem:deltasigma}. From the figure, one can see that the shapes of kernel estimates for various $k$ are similar.

\begin{figure}[h!]
\begin{subfigure}{.5\textwidth}
  \centering
  \includegraphics[width=\linewidth]{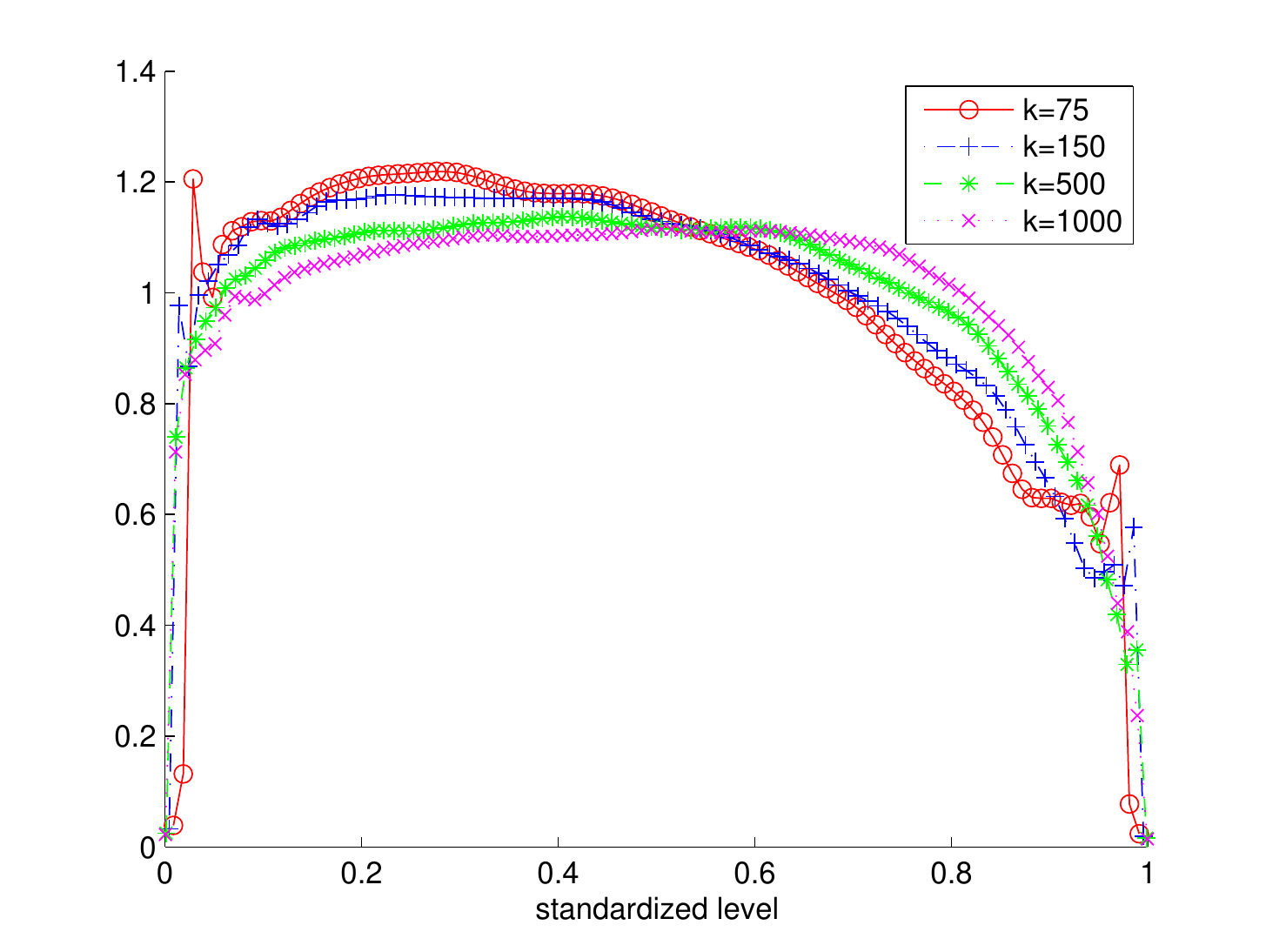}
  \caption{ $\alpha = 0.10$}
  \end{subfigure}
\begin{subfigure}{.5\textwidth}
\centering
\includegraphics[width=\linewidth]{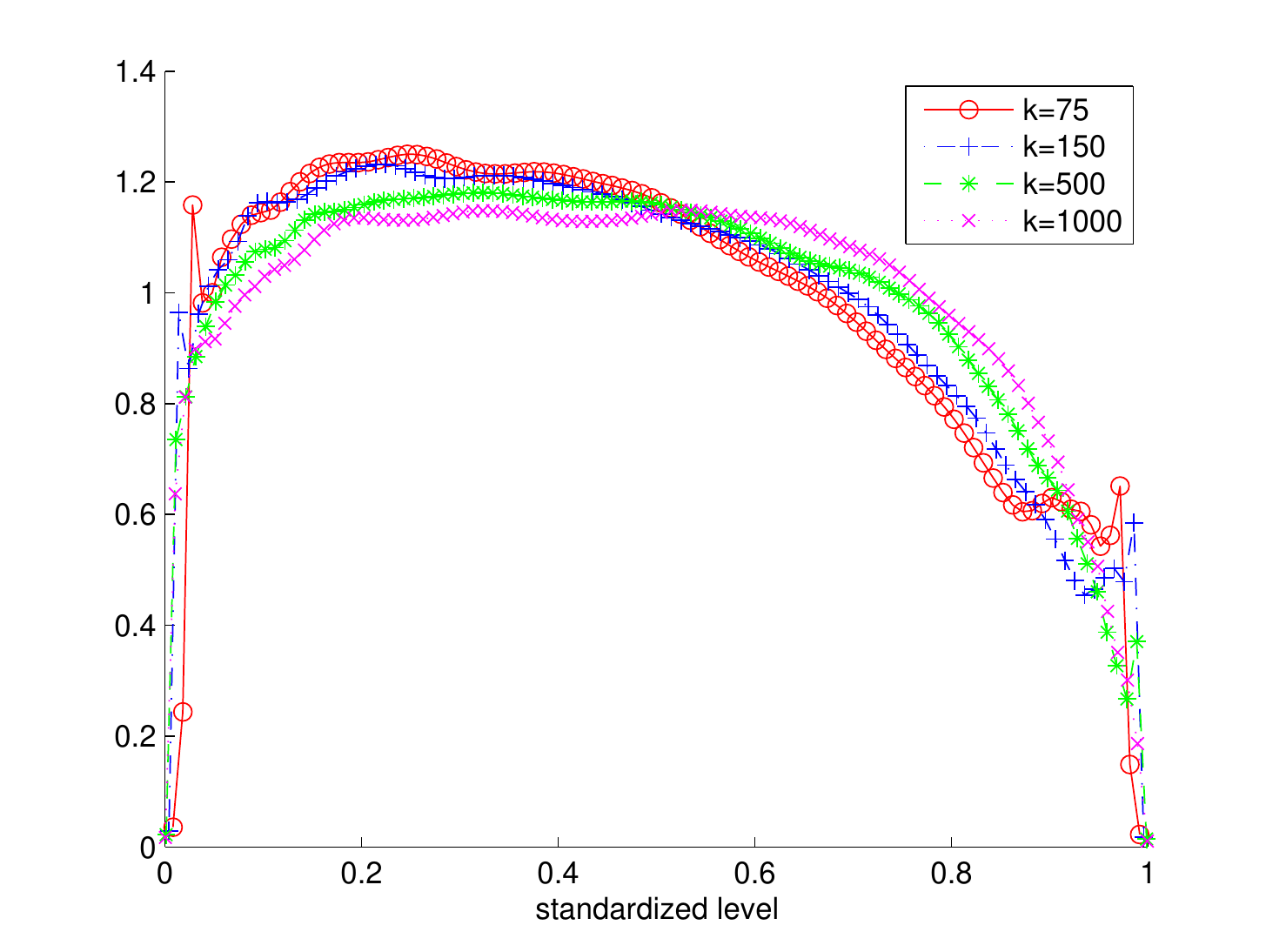}
\caption{$\alpha = 0.05$}
\end{subfigure}
\caption{Kernel density estimates on standardized levels where an incorrect selection occurs for various $k$ when $\delta =0.3$, and $\sigma^2 =1$.}\label{fig:ratio}
 \end{figure}


\begin{figure}[t!]
\begin{subfigure}{.5\textwidth}
  \centering
  \includegraphics[width=\linewidth]{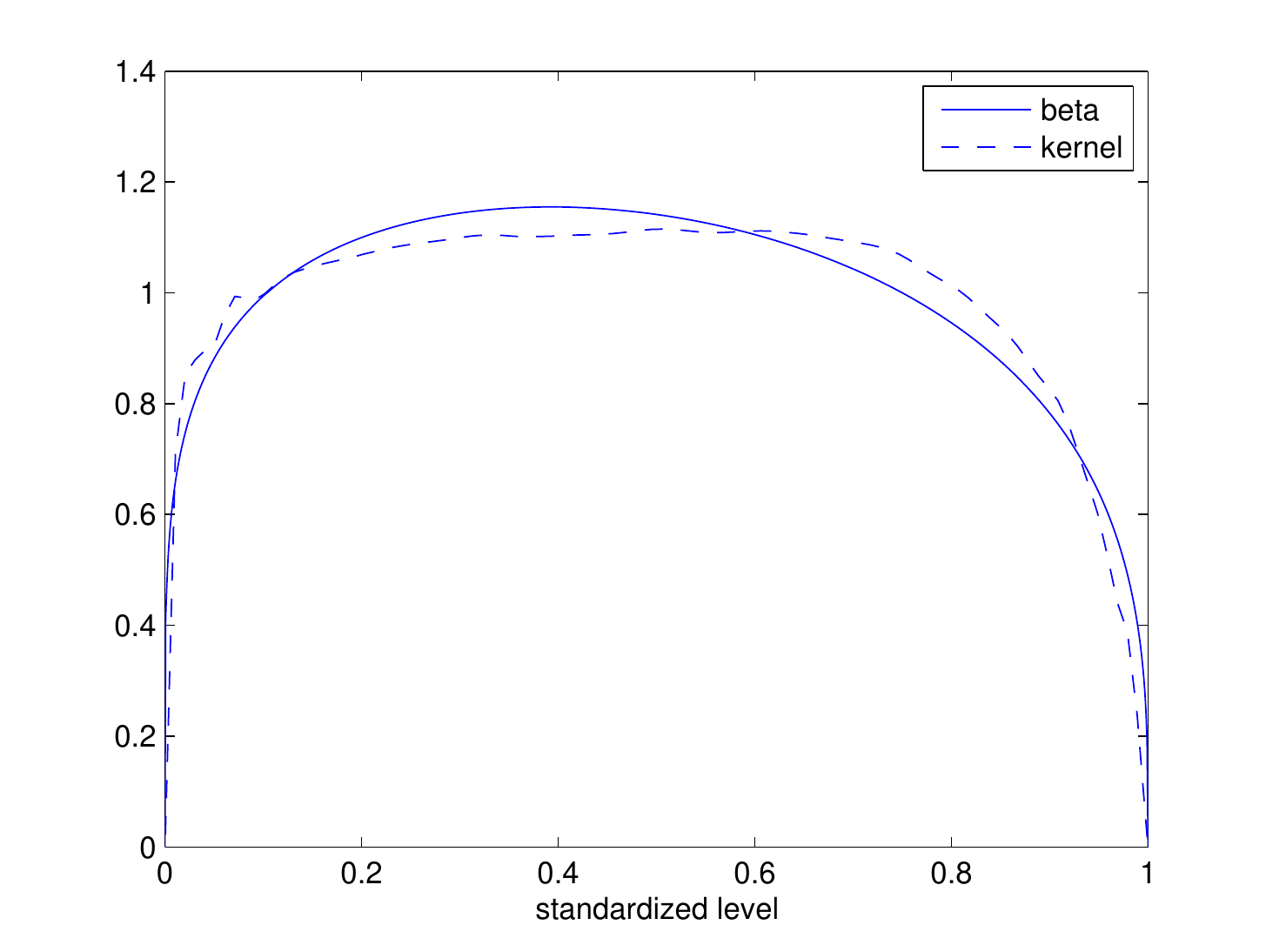}
\caption{ $\alpha = 0.10$ and $g(w) \propto w^{0.19805}(1-w)^{0.30662}$}
\end{subfigure}%
\begin{subfigure}{.5\textwidth}
  \centering
  \includegraphics[width=\linewidth]{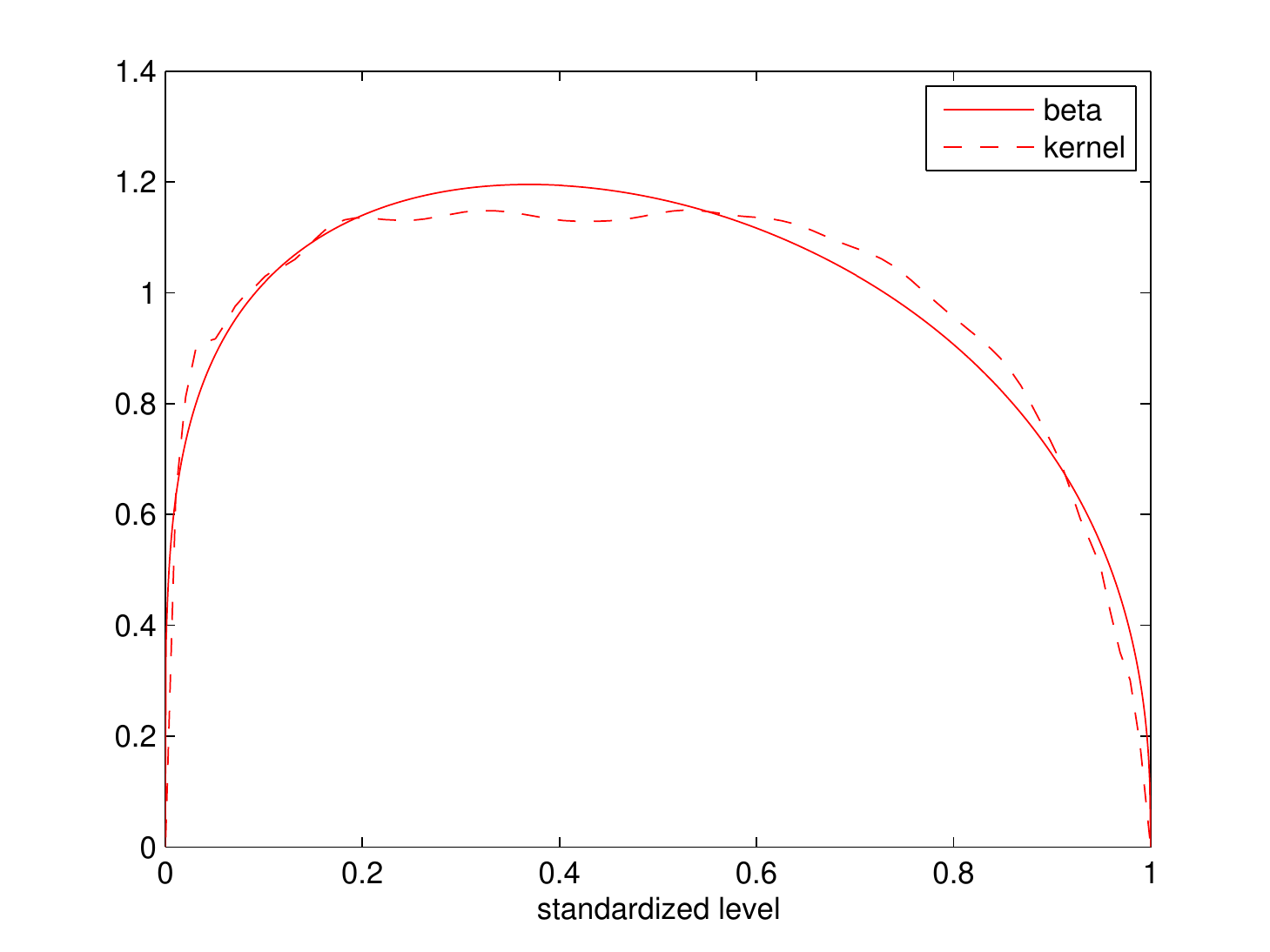}
\caption{ $\alpha = 0.05$ and $g(w) \propto w^{0.2317}(1-w)^{0.39658}$}
\end{subfigure}
\caption{ Kernel density estimates and beta density estimates for $g(w)$ when $k=1000$ and $\delta  = 0.3$.}
\label{fig:actualfitted}
\end{figure}

To approximate $g(w)$ for $0<w<1$, we use $k=1000$ rather than $k=75$ because $k=75$ gives sparse points in $[0,1]$. In addition, as kernel estimates are not stable on the boundary 0 and 1, we further fit it using a beta distribution to get a smooth function especially close to the boundary points, assuming
\[
g(w)  \approx {1 \over {\rm Beta}(A, B)} \; w^{A-1} \; (1 - w)^{B-1} \quad \mbox{for} \quad 0<w<1 \mbox{ and } A,B \in \mathbb{R}
\]
where Beta$(A,B) = \int_0^1 t^{A-1}(1-t)^{B-1} dt$. Figure~\ref{fig:actualfitted} shows the fitted beta densities for $\alpha = 0.05$ and $\alpha = 0.10$, respectively. Both beta densities are very similar.

Once $g(w)$ is approximated, $\eta_{|I|}$ which ensures the probability of correct selection (PCS) of ${\cal DK}_1$ can be calculated as follows:
\begin{description}
\item [Step 1:] Calculate the constant $m_\ell$ as follows:
\begin{eqnarray*}
 m_\ell  &= & \frac{ G\left( \ell \over k-1\right) - G\left( \ell - 1  \over k-1\right) } {G\left(1 \over k-1\right)}
\end{eqnarray*}
where $G(w) = \int_0^w g(t) dt$, the cdf of $g(w)$.
\item [Step 2:] Set $\beta_\ell = {\beta_0 \over m_\ell}$ and calculate $\eta_{|I|}$ from $(\ref{eq:raweta}) = \beta_\ell$.
\end{description}

The relative magnitude between $G\left(1 \over k-1\right)$ and $G\left( \ell \over k-1\right) - G\left( \ell - 1  \over k-1\right)$ can be interpreted as the relative magnitude between level 1 error and level $\ell$ error when $\eta_{|I|}$ is calculated from $(\ref{eq:raweta}) = \beta_0$. As we know that level 1 error is $\beta_0$,
\[
G\left(1 \over k-1\right): G\left( \ell \over k-1\right) - G\left( \ell - 1  \over k-1\right) \approx \beta_0: P_k(\ell/k, \beta_0).
\]
Therefore the level $\ell$ error is expected to be inflated by $m_\ell = \frac{ G\left( \ell \over k-1\right) - G\left( \ell - 1  \over k-1\right) } {G\left(1 \over k-1\right)}$ compared to target $\beta_0$. If we adjust $\beta_\ell = \beta_0/m_\ell$, then $P_k(\ell/k, \beta_\ell) \approx \beta_0$ by Assumption~\ref{assump:strong}.1, which in turns implies that the overall PICS is approximately equal to $\alpha$.


Figure~\ref{fig:adjustedlevel} shows estimated level errors $\hat{P}_k (\ell/k, \beta_0/m_\ell)$ for $\alpha =$ 5\% and 10\% when $k=512$ with $\delta = 0.3$ and $\sigma^2=1$ and one million replications. One can see that the level errors do not show a beta shape as in Figure~\ref{fig:actualfitted}. Instead the ratios between level errors and $\beta_0$ fluctuate around one for the two values of $\alpha$, which empirically supports Assumption~\ref{assump:strong}.1. Also, it shows that the function we found $g(w)$ for $\alpha=5\%$ and $k=1000$ seems to work well for other popular choices of $\alpha$, including $\alpha= 10\%$.

\begin{figure}[t!]
\begin{subfigure}{.5\textwidth}
  \centering
  \includegraphics[width=\linewidth]{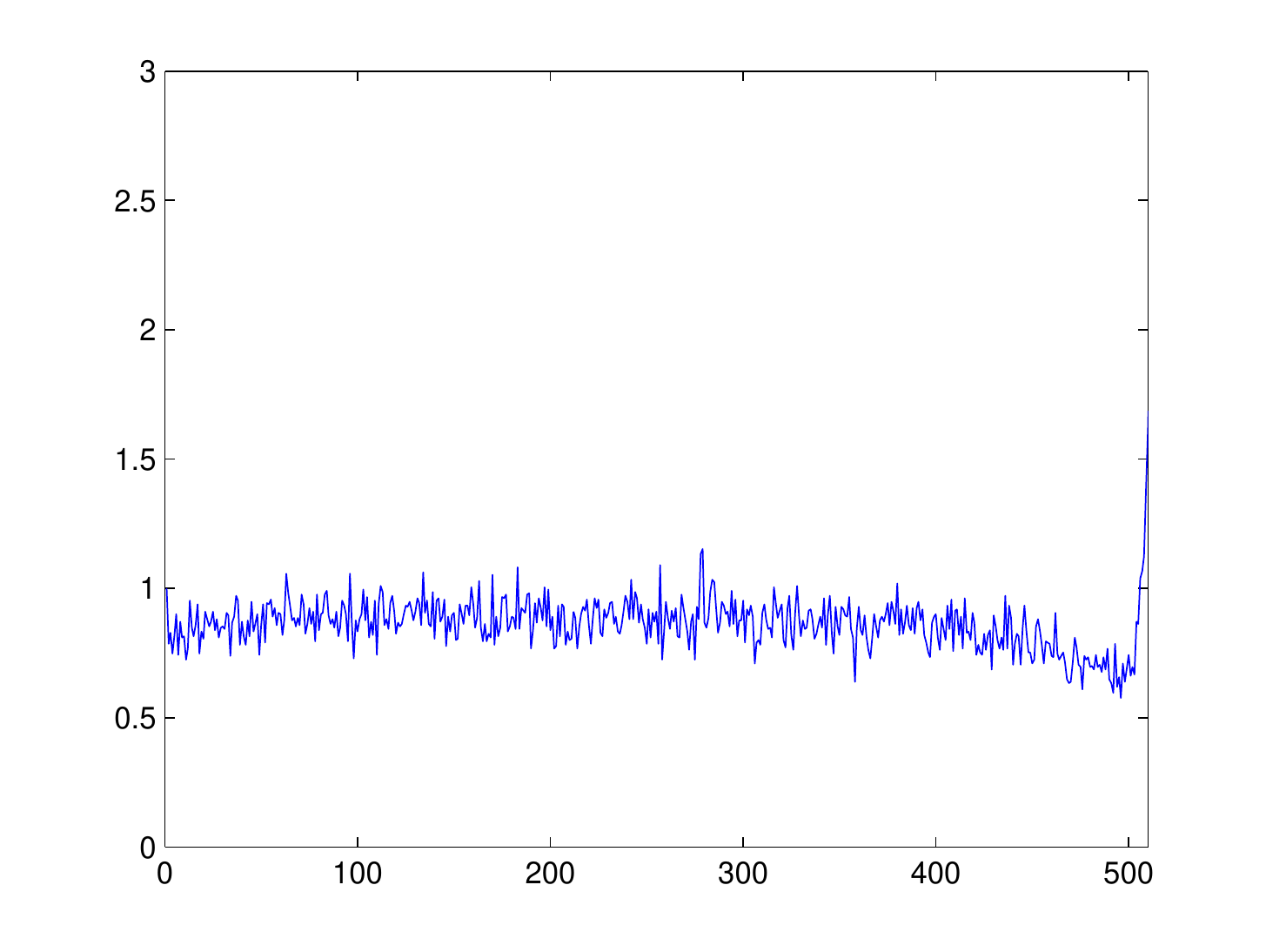}
\caption{ $1-\alpha = 0.90, {\rm PCS}=0.907$}
\end{subfigure}%
\begin{subfigure}{.5\textwidth}
  \centering
  \includegraphics[width=\linewidth]{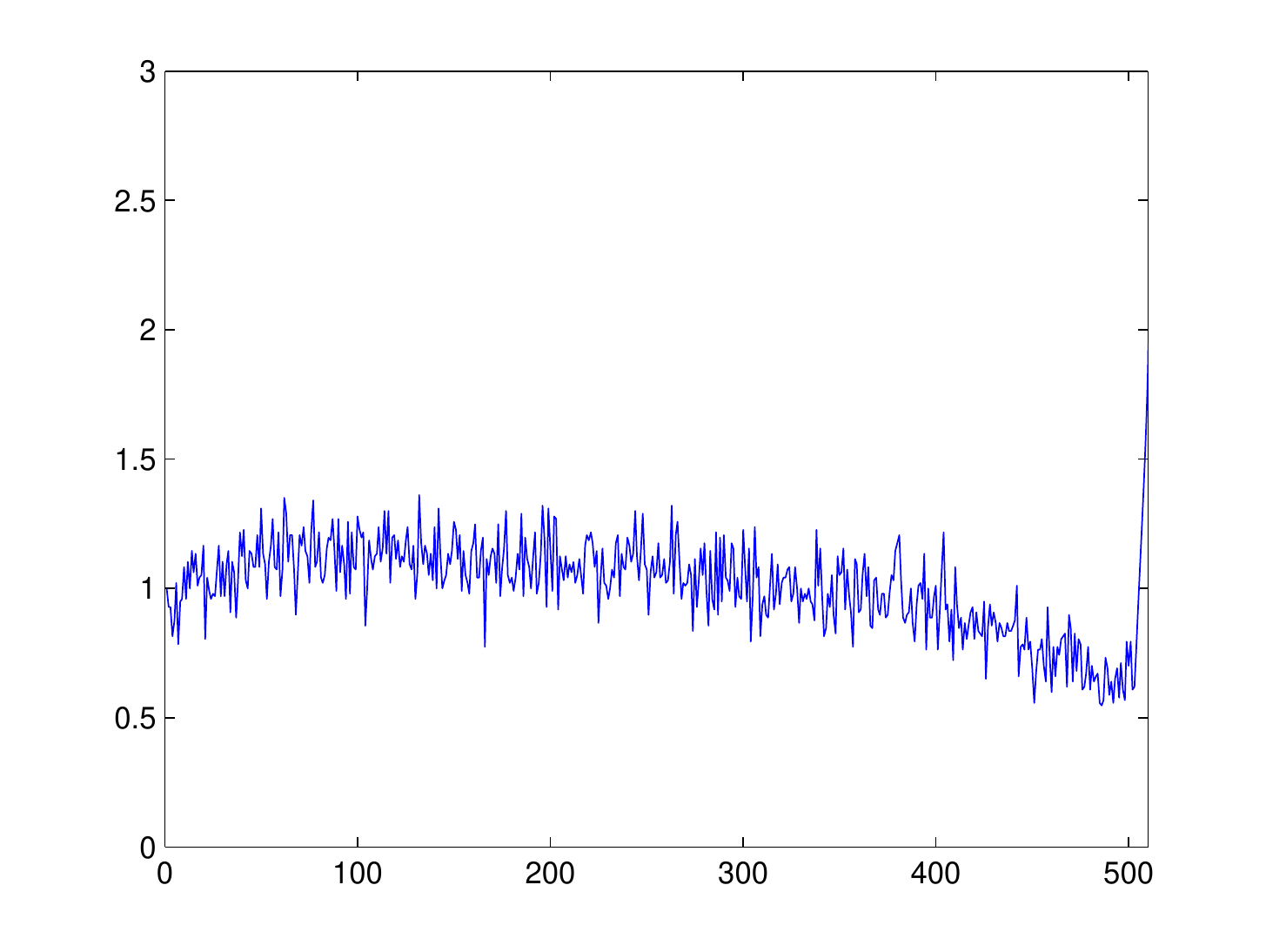}
\caption{ $1-\alpha = 0.95, {\rm PCS}=0.950$}
\end{subfigure}
\caption{Ratios between $\hat{P}_{k} (\ell/k, \beta_0/m_\ell)$ and $\beta_0$ for $\alpha=10\%$ and $5\%$ when  $k=512$, $\delta = 0.3$ and $\sigma^2 =1$.}
\label{fig:adjustedlevel}
\end{figure}

To search for $\eta_{|I|}$ for given target $\beta_\ell$, a bisection search is used and this requires estimating the expectation in (\ref{eq:raweta}). Instead of using Monte Carlo by generating Gumbel random variates $G$, we use numerical integration in the range of a standard uniform random variable $U$, $[0,1]$ using one million intervals on function $f(u)$ defined as follows:
\[
f(u) \equiv \Phi\left(\min\left(\max\left(-\sqrt{|I|-1},\frac{ \log(-\log u)}{\sqrt{2\log |I|}} -c_{|I|-1}\right),\sqrt{|I|-1}\right) - \frac{ \eta_{|I|}}{{\sqrt{|I|-1}}}\right) \quad \mbox{for} \quad 0 < u < 1,
\]
 $f(0) \equiv \lim_{u\rightarrow 0} f(u)$ and $f(1) \equiv \lim_{u\rightarrow 1}f(u)$.  When $u\rightarrow 0$ or $u \rightarrow 1$, $\log(-\log U)$ converges to either $\infty$ or $-\infty$ but the minimum and maximum functions inside $\Phi(\cdot)$ in $f(u)$ ensure a finite number is returned.
Then
\begin{eqnarray*}
& \E\Phi\left( \min\left(\max\left(-\sqrt{|I|-1},\frac{ \log(-\log u)}{\sqrt{2\log |I|}} -c_{|I|-1}\right),\sqrt{|I|-1}\right) - \frac{ \eta_{|I|}}{{\sqrt{|I|-1}}} \right)\\[6pt]
&\approx {1\over 1000000}\left[ {1 \over 2} f(0) + \sum_{j=1}^{999999} f( j/1000000) + {1\over 2} f(1) \right].
\end{eqnarray*}

 The parameter $\eta_{|I|}$ is searched using the deterministic bisection method when the numerical integration is used.
 Note that the above approximation is based on the assumption that $|I|$ is large. When $|I|$ is small, say $|I|<10$, (\ref{eq:raweta}) does not work well. Instead we use (\ref{eq:simnew}) which requires a Monte Carlo simulation with $|I|$ number of iid standard normal random variables. When a Monte Carlo simulation is used, there is a chance that a deterministic bisection method may fail due to simulation error. Therefore when $|I|<10$, we employee a probabilistic bisection algorithm  (Section 1.5 of Waeber (2013)) is used. The stochastic bisection algorithm stops when the returned median of a posterior distribution in the current search iteration is within 0.001 of the median from the previous iteration. A sequential test of power one which determines the sign of the objective function is implemented with parameters $r_0=50000$ and $\gamma = 0.01$. The sequential test stops either when $m$ reaches 1000 or when its test statistics exit $(-k_m, k_m)$ where $k_m$ is from equation (B.6) of Waeber (2013).

  Since $\eta_{|I|}$ only depends on $\alpha$ and $k$, a table can be made for popular choices of $\alpha$ such as 5\% and 10\% and $k=2,3,\ldots, 10000$. Then the values of $\eta_{|I|}$ can be read from the table while running our procedures. Table~\ref{tab:eta} in the appendix shows the values of $\eta_{|I|}$ for a few selected values of $k$ when $\alpha = 10\%$.


\subsection{Justification of Procedures for Unknown Variances}
\label{subsec:unknown}

In this subsection, we discuss why ${\cal DK}_2$ and ${\cal DK}_3$ should be expected to work for unknown variances as well.
For unknown variances, it is natural to replace variance parameters in ${\cal DK}_1$ to their estimated values. In general, it is not sufficient to replace the variance parameter with its estimated value to keep the statistical validity. It is critical to account for the variability in the estimated parameter especially when variances are estimated only once based on an initial $n_0$ observations. Kim and Nelson (2006) and Wang and Kim (2011) show that if variance estimators are updated on the fly in a procedure as more observations are obtained, then the procedure converges to the known variance case under some appropriate asymptotic regime. In the light of these results, we employ a variance updating scheme in ${\cal DK}_2$ and ${\cal DK}_3$ to avoid the difficulty of accounting for the variability in the estimated variance parameters but without any claim for the asymptotic validity in this paper.

When the decision maker believes that the variances across systems are equal (but unknown), then the natural estimator for $\sigma^2$ is the pooled variance estimator
\[
\hat{\sigma}^2_p(n) = {1 \over |I|} \sum_{i \in I} \hat{\sigma}^2_i(n).
\]
As we update $\hat{\sigma}^2_p(n)$ as more observations become available, the estimator converges to $\sigma^2$ and thus it is expected that ${\cal DK}_2$ works similarly to ${\cal DK}_1$.

When variances are unknown and unequal, we use similar arguments as in Frazier (2014). Let $n_i = \gamma \sigma_i^2 n$ for some $\gamma >0$ and thus the number of samples obtained by stage $n$ for system $i$ is proportional to its variance $\sigma_i^2$.
Then
\[
{\sum_{j=1}^{n_i} \Xij \over \gamma \sigma_i^2}  \sim  N\left( {n_i \over \gamma \sigma_i^2} \mu_i, {n_i \over \gamma^2 \sigma_i^2} \right) = N\left( n \mu_i, {n \over \gamma} \right) \approx B_{(\mu_i, 1/\gamma)}(t)
\]
where $B_{(\mu_i, 1/\gamma)}(t)$ is a Brownian motion with drift $\mu_i$ and variance $1/\gamma$. The $ {\sum_{j=1}^{n_i} \Xij \over \gamma \sigma_i^2}$ have equal variance as long as $n_i = \gamma \sigma_i^2 n$ and thus we can apply ${\cal DK}_1$ to ${\sum_{j=1}^{n_i} \Xij \over \gamma \sigma_i^2}$. Note that when $n_i = \gamma \sigma_i^2 n$,
\[
n_i \lambda^2 = n_i {\sum_{i\in I} \sigma_i^2 \over \sum_{i \in I} n_i} = \sigma_i^2
\]
where
\[
\lambda^2 = {\sum_{i\in I} \sigma^2_i(n_i) \over \sum_{i\in I} n_i}.
\]
Then
\[
{\sum_{j=1}^{n_i} \Xij \over \gamma \sigma_i^2} = {\sum_{j=1}^{n_i} \Xij \over \gamma n_i \lambda^2} = {W_i(n) \over \gamma \lambda^2}.
\]
Finally, the screening rule in ${\cal DK}_1$ is
\[
\frac{ \sum_{i\in I} \left( {W_i(n) \over \gamma \lambda^2} - {1 \over |I|} \sum_{i\in I} {W_i(n) \over \gamma \lambda^2}\right)^2} { 1/\gamma }
 \ge {1 \over \gamma} \left({\eta_{|I|} \over \delta_{|I|}}\right)^2
\]
which is equivalent to
\[
 {1 \over \lambda^4} \sum_{i\in I} \left( W_i(n)  - {1 \over |I|} \sum_{i\in I} W_i(n) \right)^2 \ge  \left({\eta_{|I|} \over \delta_{|I|}}\right)^2
\]
or
\begin{equation} \label{eq:screening3}
 {1 \over \lambda^2} \sum_{i\in I} \left( W_i(n)  - {1 \over |I|} \sum_{i\in I} W_i(n) \right)^2 \ge  \left({\lambda \cdot \eta_{|I|} \over \delta_{|I|}}\right)^2
\end{equation}

When $\lambda^2$ is replaced with its estimator $\hat{\lambda}^2$ in (\ref{eq:screening3}), we get the same elimination rule in the ${\cal DK}_3$ procedure, which is
\[
{\cal S}_I''(\vW_I(n)) \ge  \left({ \hat{\lambda} \cdot \eta_{|I|} \over  \delta_{|I|}}\right)^2.
\]

\section{Experiments}
\label{sec:exp}

In this section, we compare the performance of ${\cal DK}$ procedures with KN and BIZ.
For unknown variances, we use the KN procedure as originally described in Kim and Nelson (2001) with $c=1$ and $n_0=30$ and Algorithm 2 of Frazier (2014) with $B_z=1$ and $n_0=30$. For known variances, we use KN with $h^2 = 2 \eta$ where $\eta = - \ln{ \left(2 {\alpha\over k-1}\right)}$ and $n_0=1$, which is same as the ${\cal P}$ procedure in Wang and Kim (2011), and Algorithm 1 of Frazier (2014).
Throughout this section, KN and BIZ refer procedures for known variances while KN-UNK and BIZ-UNK refer procedures for unknown variances.

The number of systems $k$ varies over
\[
k\in \{ 2,3,4,5,6,7,8,16,32, 64, 128, 256, 512, 1024, 2048, 4096, 8192\}.
\]
For the mean, we consider two mean configurations, namely slippage configuration (SC) and monotonic decreasing mean configuration (MDM); and for variances, we consider three variance configurations called Equal, INC, and DEC. Thus we have total six configurations: SC-Equal, MDM-Equal, SC-INC, SC-DEC, MDM-INC and MDM-DEC.
We use same parameter settings for mean,  variances, $\delta$ and $\alpha$ as in Frazier (2014). Table~\ref{tab:conf} gives all six mean-variance configurations and other parameter settings.

\begin{table}
\begin{center}
\caption{Mean and variance configurations}\label{tab:conf}
\begin{tabular}{ccccc} \hline
Configuration & Means& Variances& $\delta$ & $\alpha$\\ \hline
SC-Equal & $\mu = [\delta, 0, \ldots, 0]$ & $\sigma^2 = 100$ & 1 & 0.1\\
MDM-Equal & $\mu_i =  -\delta i$ & $\sigma^2 = 100$ & 1 & 0.1\\
SC-INC &  $\mu = [\delta, 0, \ldots, 0]$ & $\sigma_i^2 = 25 \left( 1 + 3 {i-1 \over k-1}\right)^2$ & 1 & 0.1\\
SC-DEC & $\mu = [\delta, 0, \ldots, 0]$ &  $\sigma_i^2 = 25 \left( 1 + 3 {k-i \over k-1}\right)^2$ & 1 & 0.1\\
MDM-INC & $\mu_i =  -\delta i$ &  $\sigma_i^2 = 25 \left( 1 + 3 {i-1 \over k-1}\right)^2$ & 1 & 0.1\\
MDM-DEC  & $\mu_i =  -\delta i$ & $\sigma_i^2 = 25 \left( 1 + 3 {k-i \over k-1}\right)^2$ & 1 & 0.1\\ \hline
\end{tabular}
\end{center}
\end{table}

When calculating $\eta_{|I|}$ for ${\cal DK}$ procedures,  we take logs to avoid numerical overflows and underflows in the denominator, since the Gamma term can be very large and the Bessel term can be very small.  When $\ell=k-1$ or only two systems are survived, we use $\eta_2 = -\ln(2\beta_\ell)$.

The nominal confidence level is set to $1-\alpha = 0.9$. Estimated probability of correct selection (PCS) and an average number of observations per system until a decision is made (REP/$k$) are reported based on 10,000 macro replications. Standard errors for estimated PCS are approximately 0.003.

\subsection{${\cal DK}_1$ with Known and Equal Variances}

When variances are known and equal, we compare ${\cal DK}_1$ with KN and BIZ. Figure~\ref{fig:knownequal} shows REP/$k$ and PCS under SC and MDM configurations.
Procedure ${\cal DK}_1$ significantly outperforms KN under both SC and MDM. When $k$ is large, ${\cal DK}_1$ is more than three times better than KN in terms of REP/$k$.
On the other hand, the performances of BIZ and ${\cal DK}_1$ are very similar under the slippage configuration in terms of both REP/$k$ and PCS. When $k$ is small, ${\cal DK}_1$ spends a slightly more number of observations than BIZ but its probability of correct selection is slightly higher than BIZ. For large $k$, their performances are very close in both measures. Under the monotonic decreasing mean configuration, ${\cal DK}_1$ achieves PCS greater than the nominal value 90\% and clearly outperforms KN. However, BIZ achieves PCS close to the nominal value 90\% than ${\cal DK}_1$ and spends slightly fewer but very similar number of observations than ${\cal DK}_1$.

\begin{figure}[t!]
\begin{subfigure}{.5\textwidth}
  \centering
  \includegraphics[width=\linewidth]{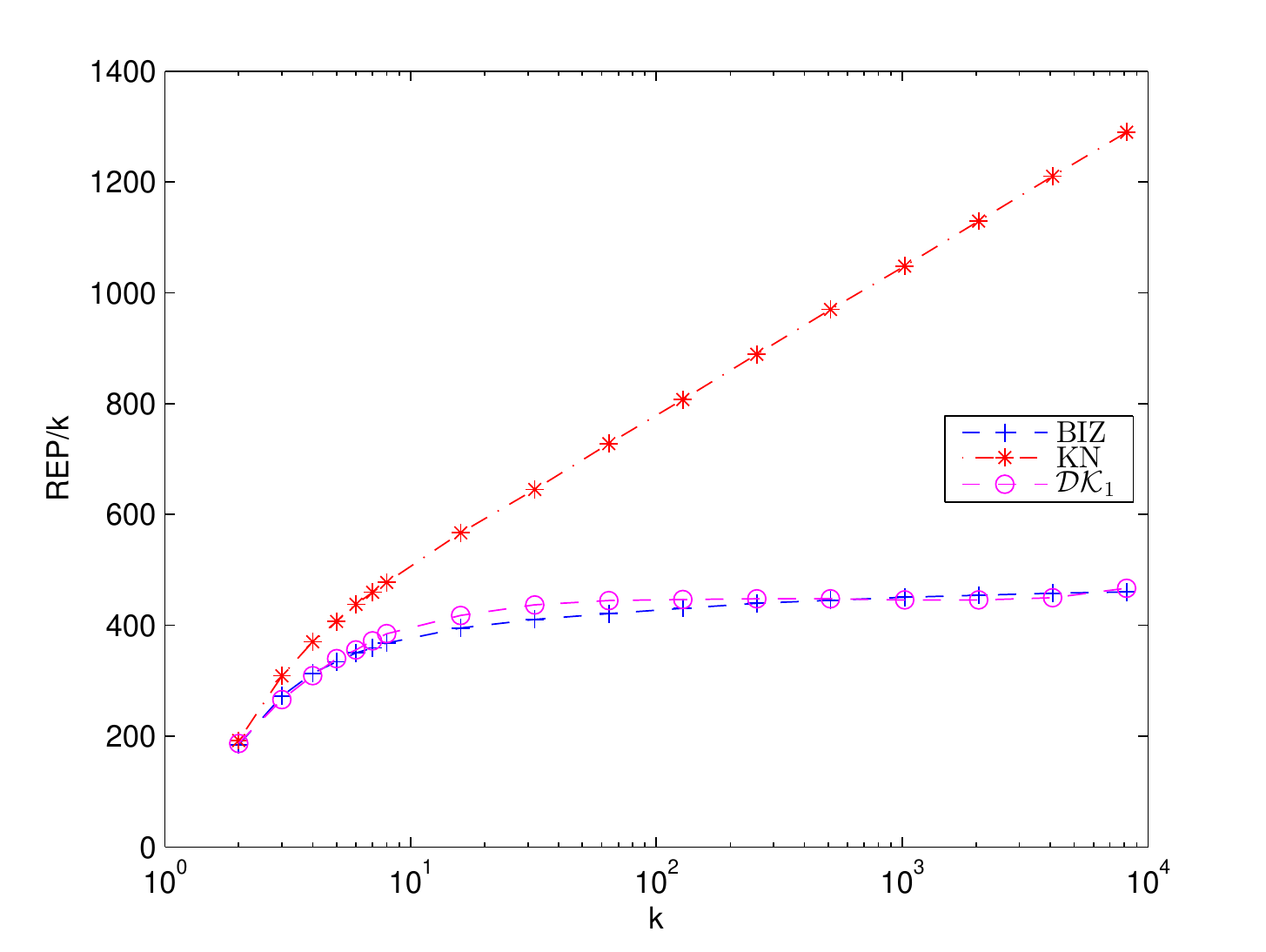}
\caption{ SC-REP}
\end{subfigure}%
\begin{subfigure}{.5\textwidth}
  \centering
  \includegraphics[width=\linewidth]{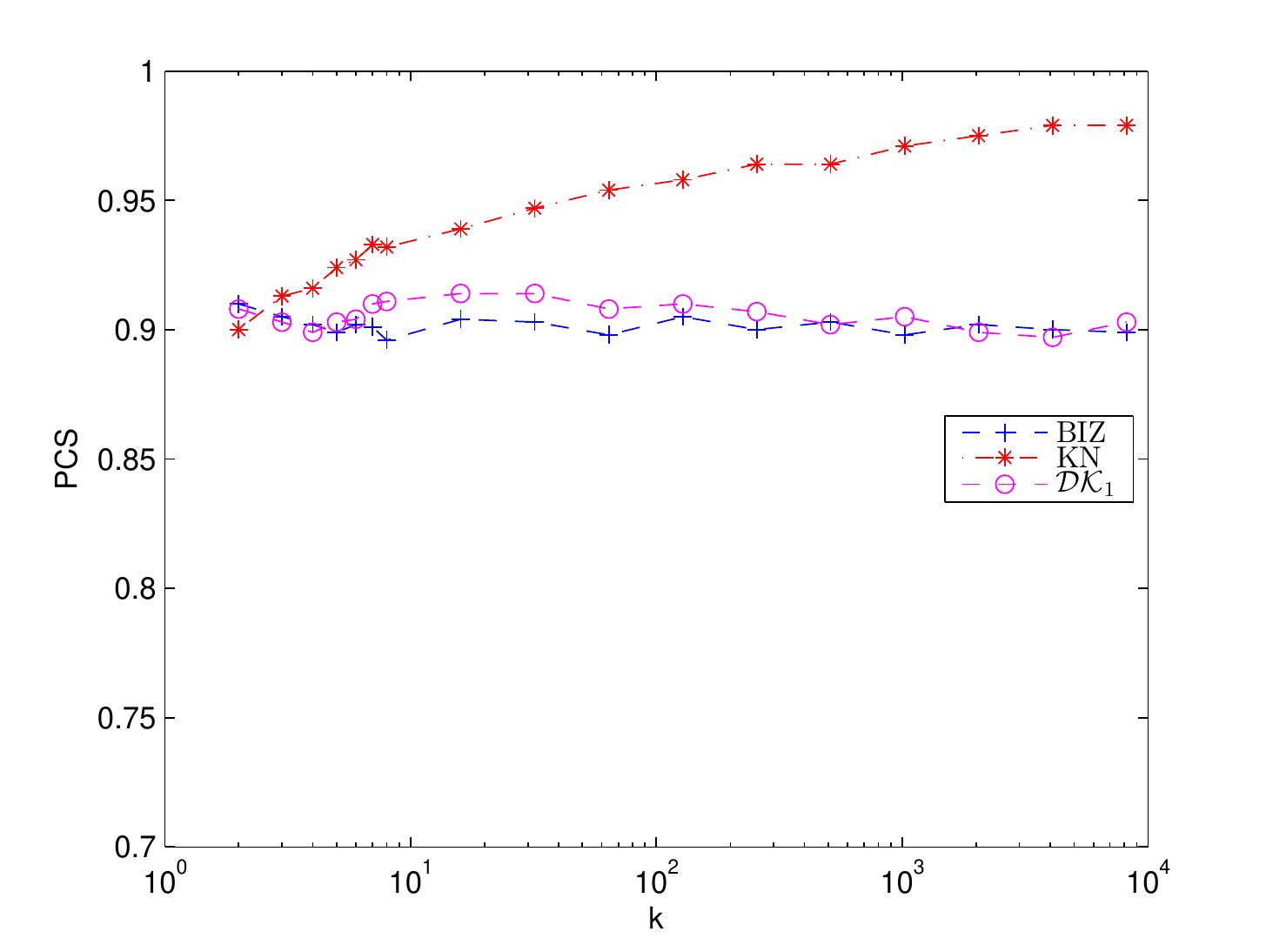}
\caption{ SC-PCS}
\end{subfigure}
\begin{subfigure}{.5\textwidth}
  \centering
  \includegraphics[width=\linewidth]{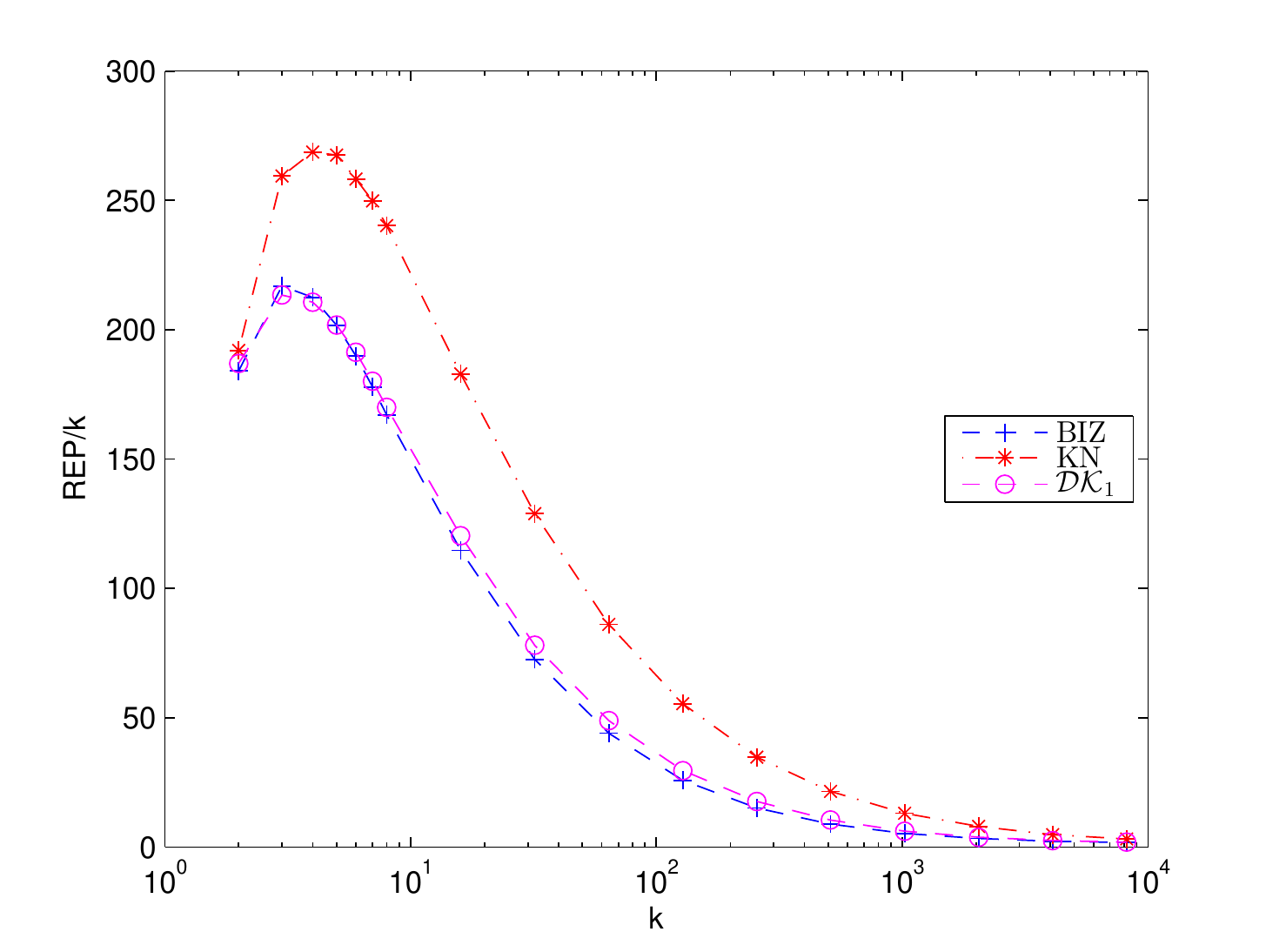}
\caption{ MDM-REP}
\end{subfigure}%
\begin{subfigure}{.5\textwidth}
  \centering
  \includegraphics[width=\linewidth]{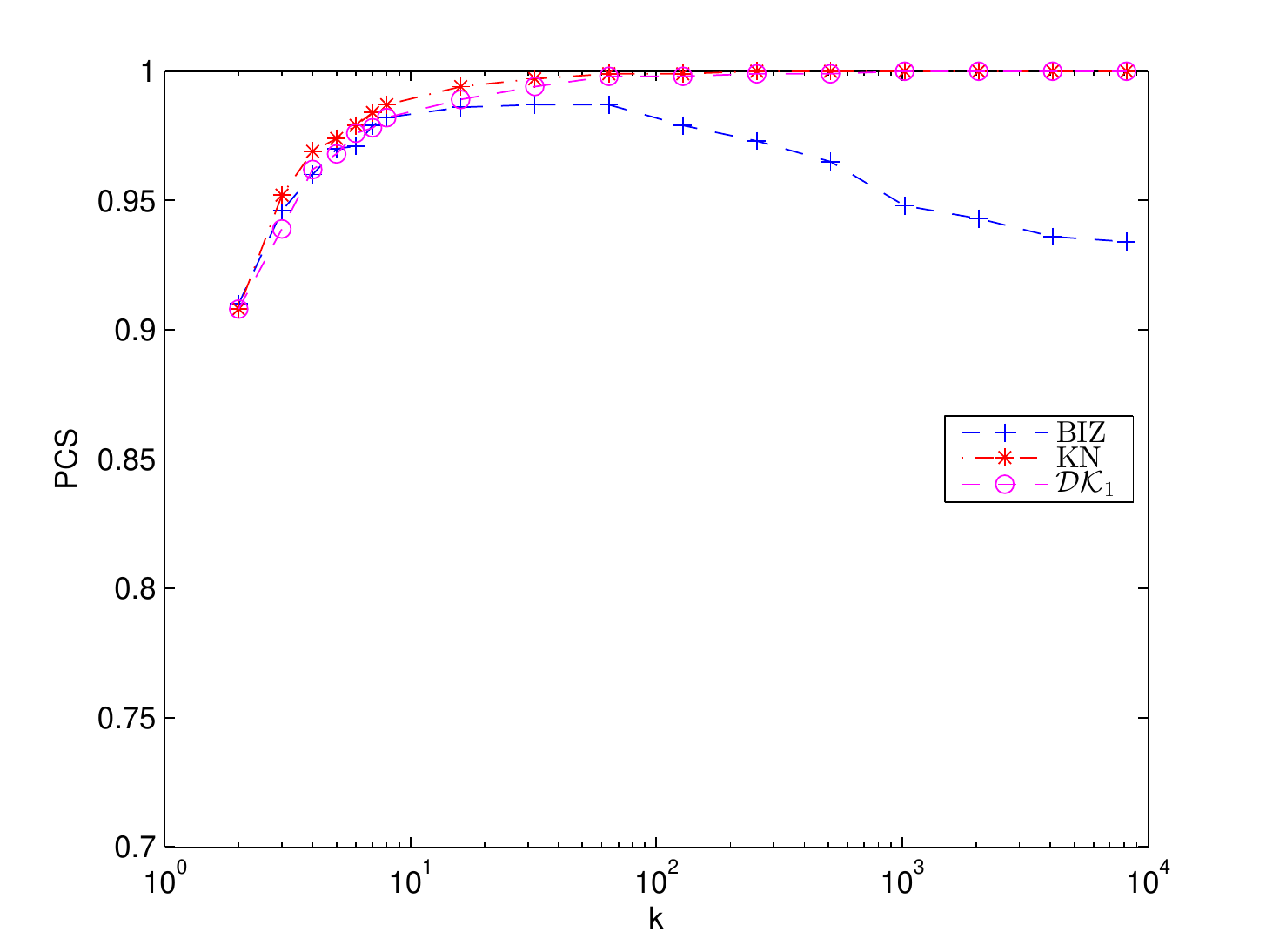}
\caption{ MDM-PCS}
\end{subfigure}
\caption{REP/$k$ and PCS for ${\cal DK}_1$ when variances are known and equal with $1-\alpha = 0.9$
}
\label{fig:knownequal}
\end{figure}

\subsection{${\cal DK}_2$ and ${\cal DK}_3$ with Unknown but Equal Variances}

When variances are unknown but a decision maker knows that variances across systems are equal, ${\cal DK}_2$ or ${\cal DK}_3$ can be used.

Figure~\ref{fig:unknownequal} compares performances of ${\cal DK}_2$ with those of KN-UNK and BIZ-UNK. As in the case of known and equal variances, ${\cal DK}_2$ significantly outperforms KN-UNK. Compared to BIZ-UNK, ${\cal DK}_2$ achieves slightly higher PCS and spends fewer number of observations for large $k$ under the slippage configurations. Then under the monotonic decreasing configuration, PCS is higher in ${\cal DK}_2$ and uses slightly more observations for small $k$ and then similar number of observations for large $k$.

\begin{figure}[tb!]
\begin{subfigure}{.5\textwidth}
  \centering
  \includegraphics[width=\linewidth]{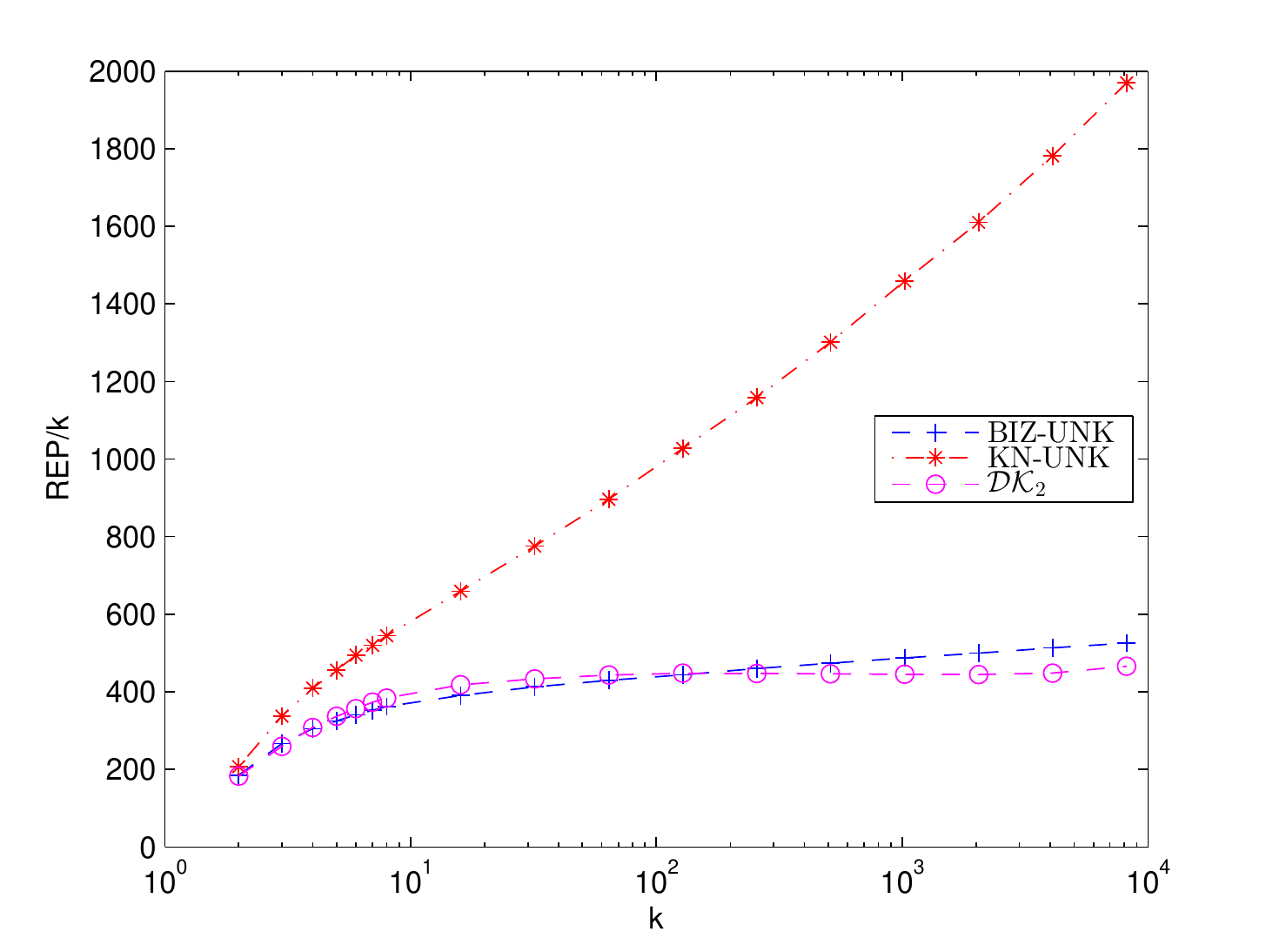}
\caption{ SC-REP}
\end{subfigure}%
\begin{subfigure}{.5\textwidth}
  \centering
  \includegraphics[width=\linewidth]{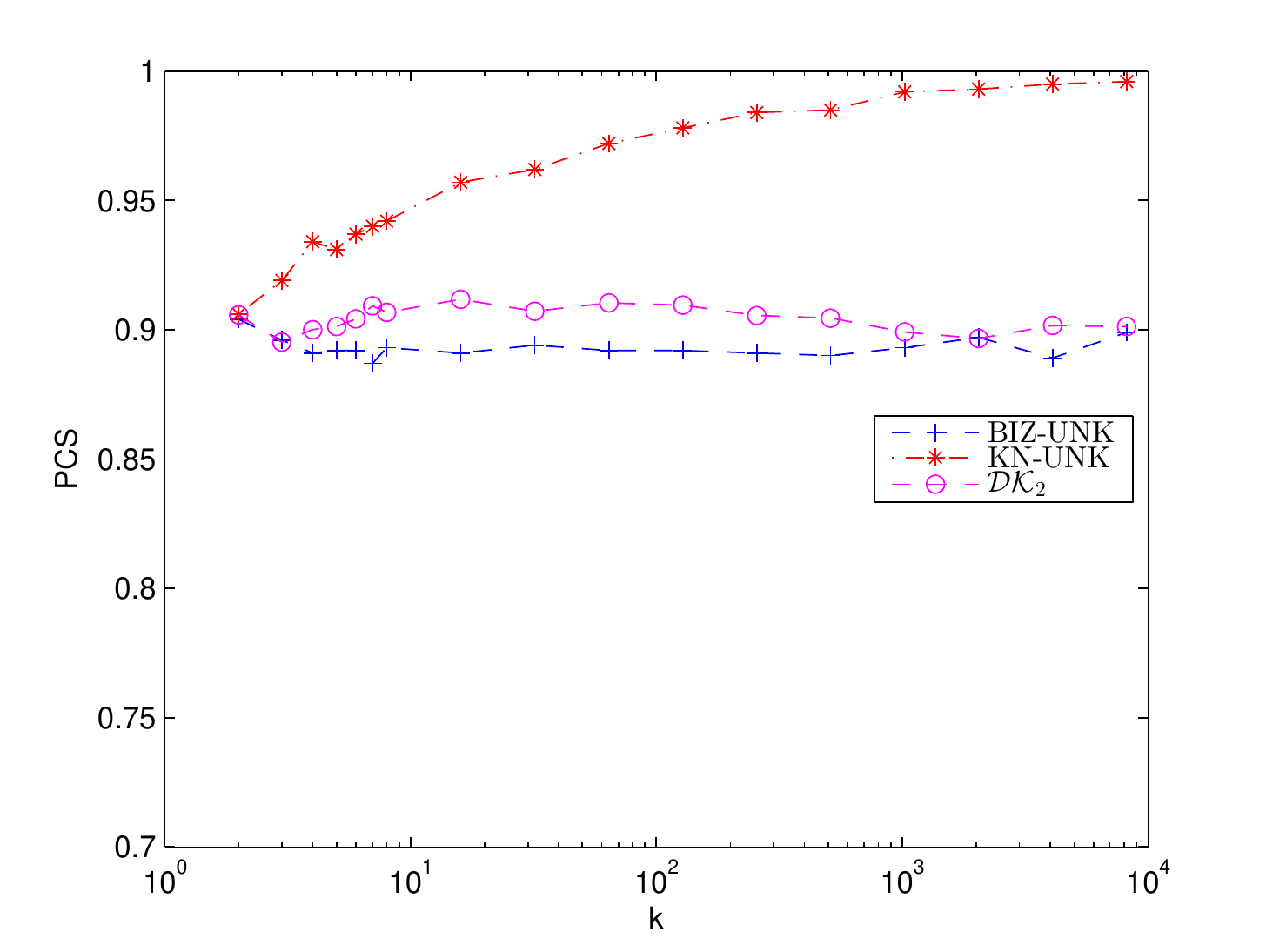}
\caption{ SC-PCS}
\end{subfigure}
\begin{subfigure}{.5\textwidth}
  \centering
  \includegraphics[width=\linewidth]{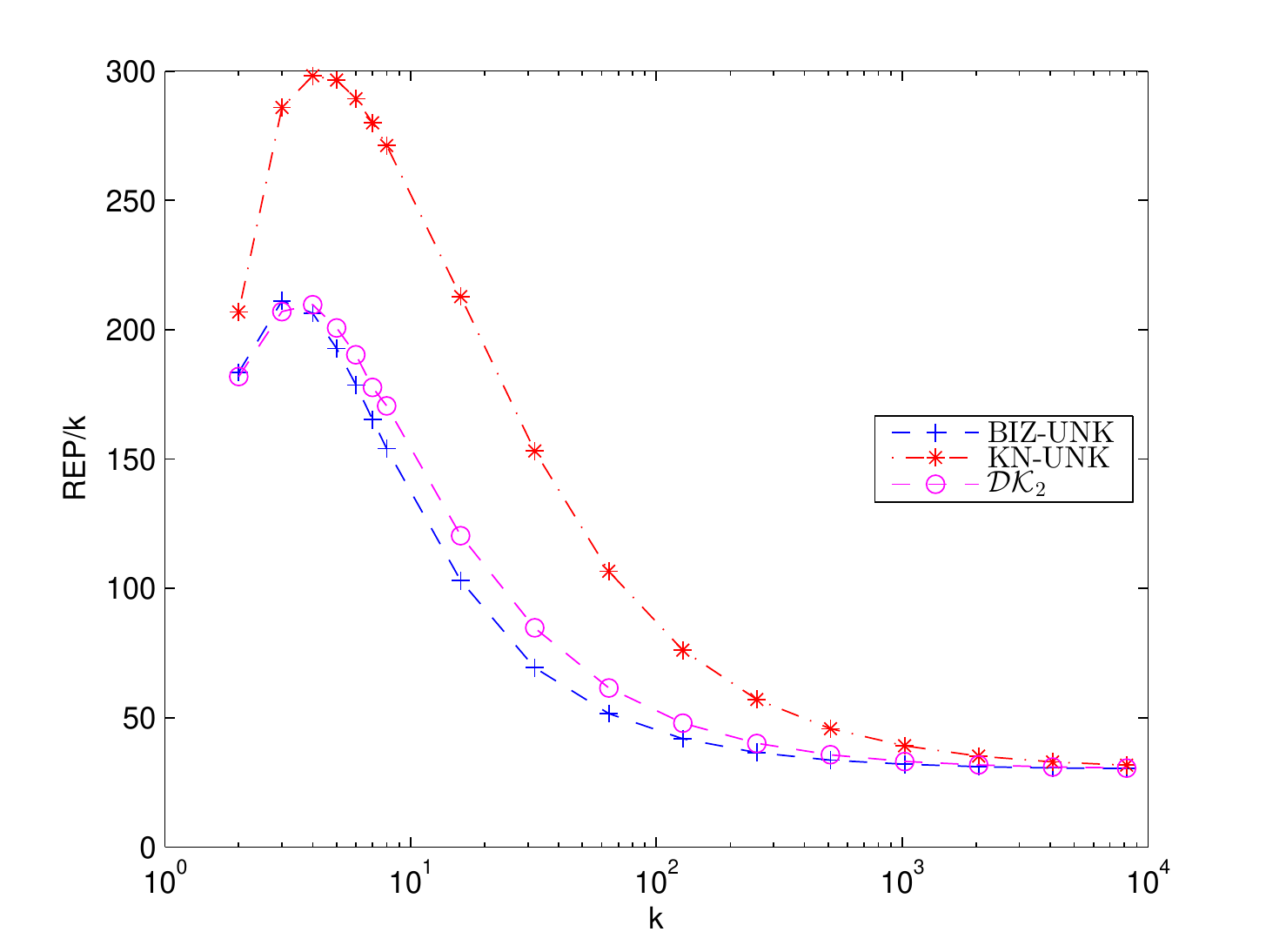}
\caption{ MDM-REP}
\end{subfigure}%
\begin{subfigure}{.5\textwidth}
  \centering
  \includegraphics[width=\linewidth]{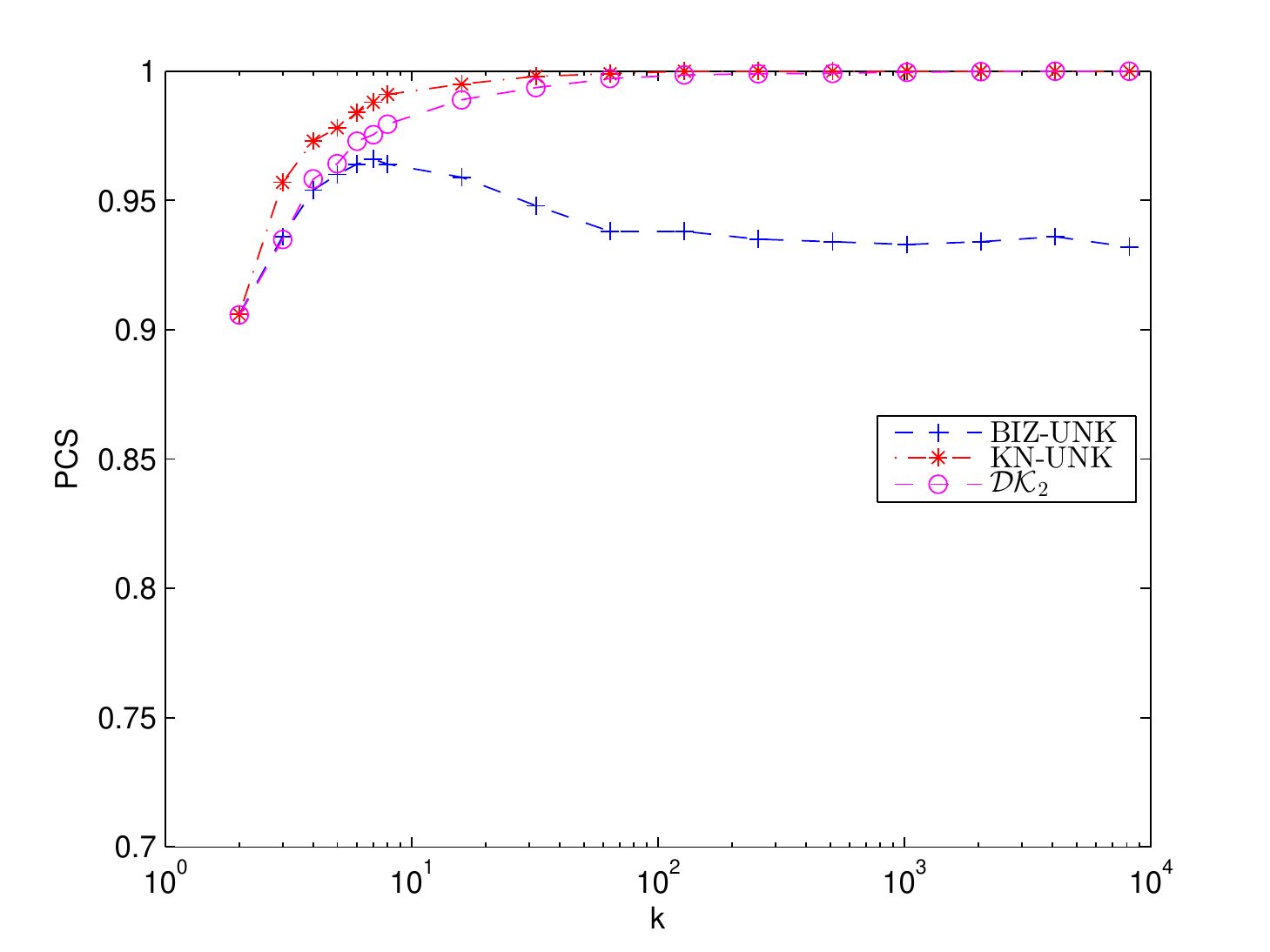}
\caption{ MDM-PCS}
\end{subfigure}
\caption{REP/$k$ and PCS for ${\cal DK}_2$ when variances are unknown but equal with $1-\alpha = 0.9$
}
\label{fig:unknownequal}
\end{figure}

In reality, it is impossible to know in advance whether variances across systems are equal. In fact, equal variances across systems rarely hold. Thus we also consider ${\cal DK}_3$. Our experiments show that ${\cal DK}_3$ actually spends slightly fewer observations than ${\cal DK}_2$ while achieving similar PCS. Figure~\ref{fig:unknownequalvaryingfreq} compares ${\cal DK}_3$ with KN-UNK and BIZ-UNK. Graphs in Figure~\ref{fig:unknownequalvaryingfreq} show similar tendency as those in Figure~\ref{fig:unknownequal}.

\begin{figure}[tb!]
\begin{subfigure}{.5\textwidth}
  \centering
  \includegraphics[width=\linewidth]{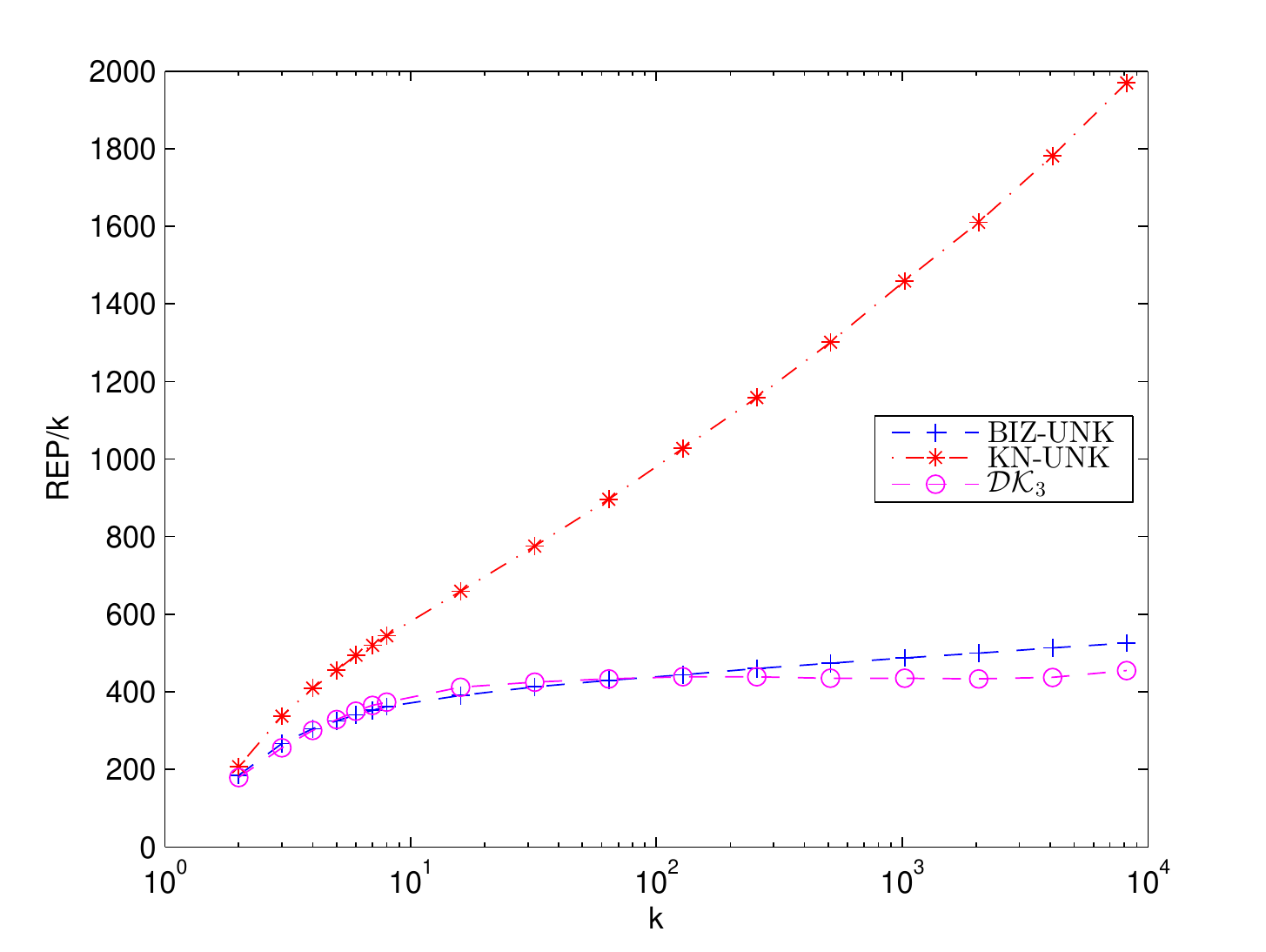}
\caption{SC-REP}
\end{subfigure}%
\begin{subfigure}{.5\textwidth}
  \centering
  \includegraphics[width=\linewidth]{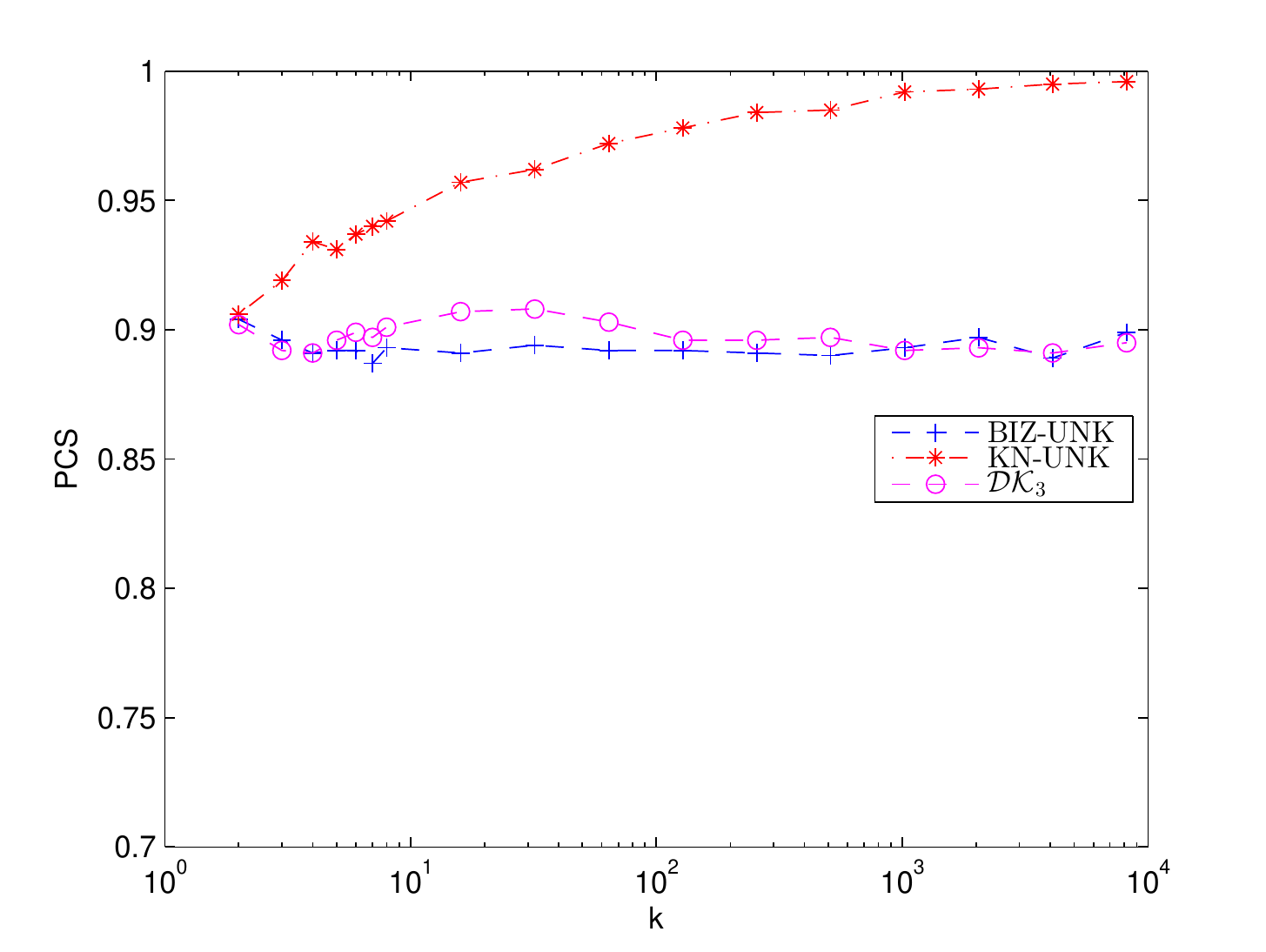}
\caption{SC-PCS}
\end{subfigure}
\begin{subfigure}{.5\textwidth}
  \centering
  \includegraphics[width=\linewidth]{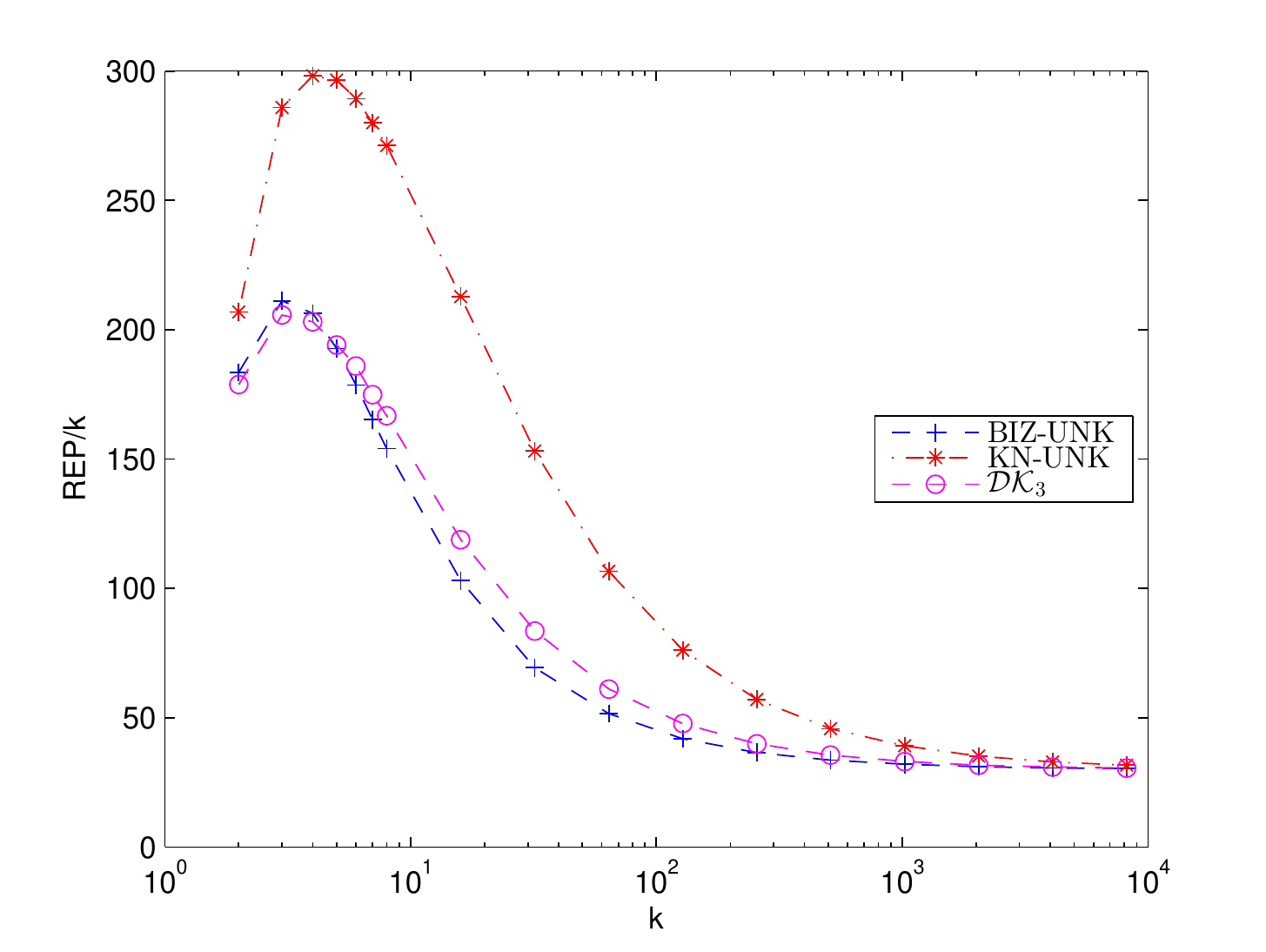}
\caption{ MDM-REP}
\end{subfigure}%
\begin{subfigure}{.5\textwidth}
  \centering
  \includegraphics[width=\linewidth]{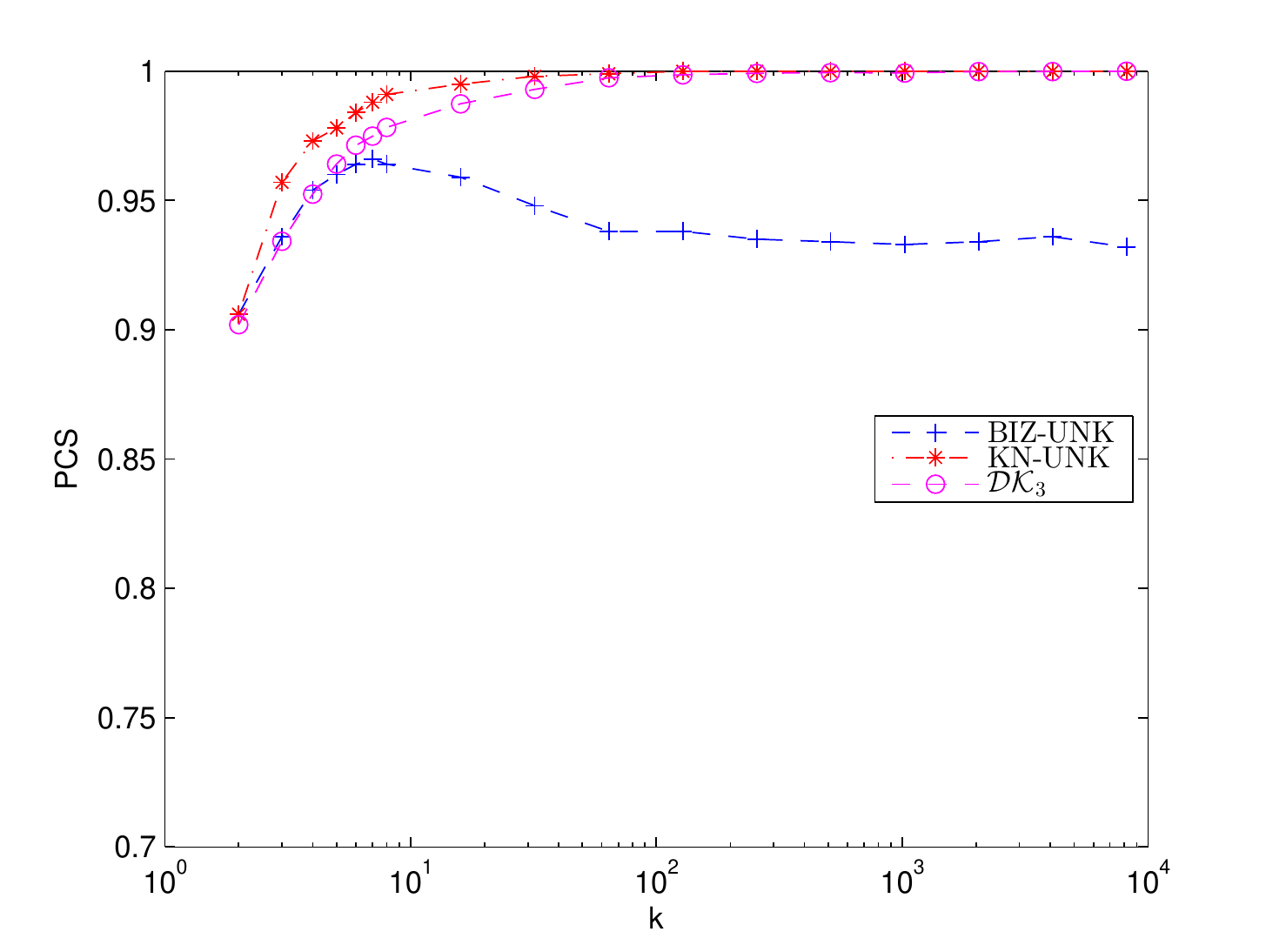}
\caption{ MDM-PCS}
\end{subfigure}
\caption{REP/$k$ and PCS for ${\cal DK}_3$ when variances are unknown but equal with $1-\alpha = 0.9$
}
\label{fig:unknownequalvaryingfreq}
\end{figure}

\subsection{${\cal DK}_3$ with Unknown and Unequal Variances}

Finally, we consider unknown and unequal variances. Figure~\ref{fig:incdec_SC} compares the three procedures under the slippage configuration with increasing and decreasing variances while  Figure~\ref{fig:incdec_MDM} compares them under the MDM configuration with increasing and decreasing variances.

The efficiency of ${\cal DK}_3$ compared to KN-UNK is more obvious. When $k=8192$, ${\cal DK}_3$ is more than four times better than KN-UNK under SC-INC and six times better under SC-DEC in terms of REP/$k$ while achieving PCS close to 90\%. ${\cal DK}_3$ spends up to 30\% fewer observations than BIZ-UNK under the slippage configuration.

Interestingly, under the MDM configuration with increasing variances, ${\cal DK}_3$ significantly outperforms both KN-UNK and BIZ-UNK, showing up to 63\% savings in the number of observations compared to BIZ-UNK. But ${\cal DK}_3$ uses slightly more observations than BIZ-UNK under decreasing variances.

\begin{figure}[tb!]
\begin{subfigure}{.5\textwidth}
  \centering
  \includegraphics[width=\linewidth]{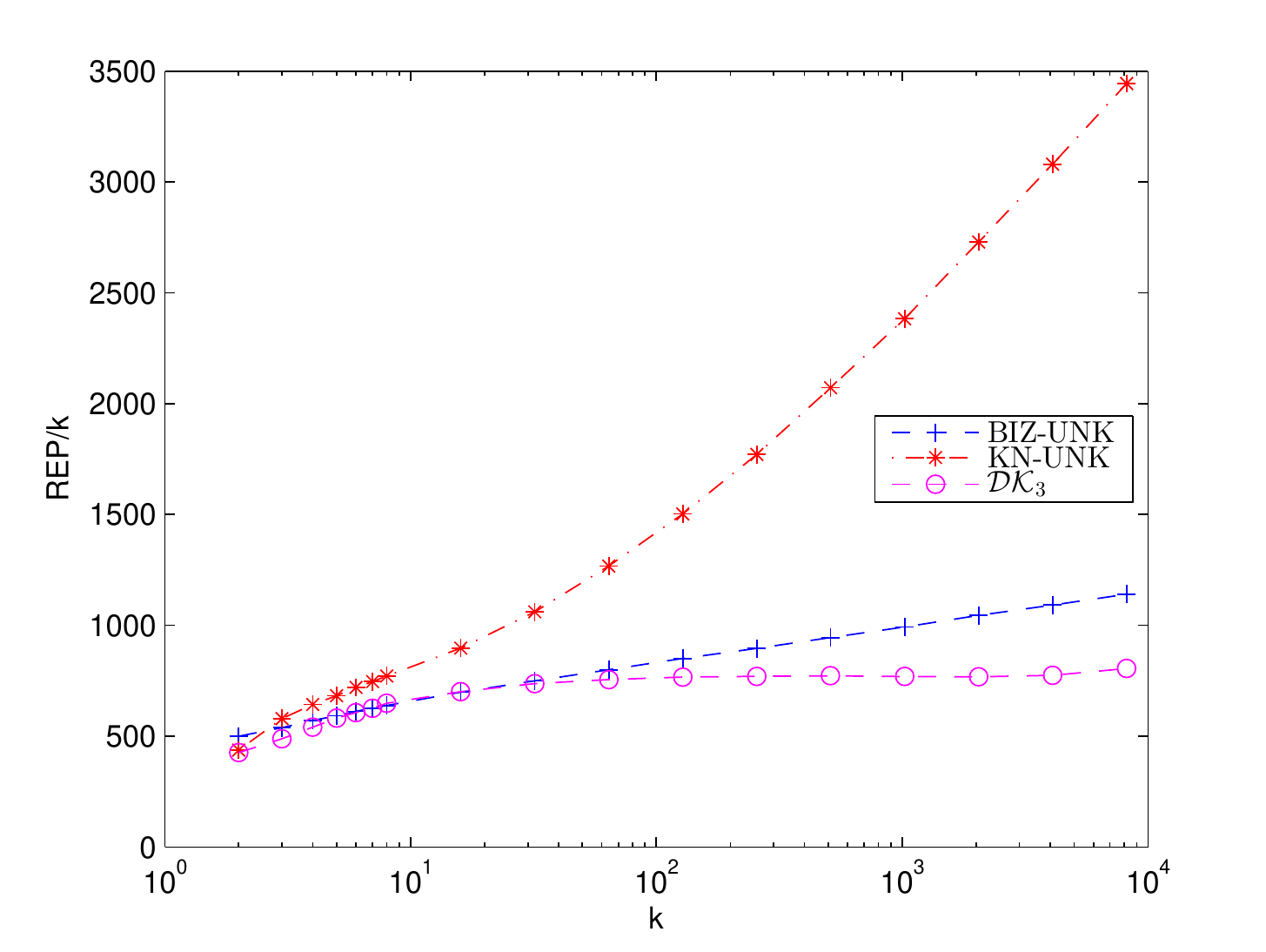}
\caption{SC-INC-REP}
\end{subfigure}%
\begin{subfigure}{.5\textwidth}
  \centering
  \includegraphics[width=\linewidth]{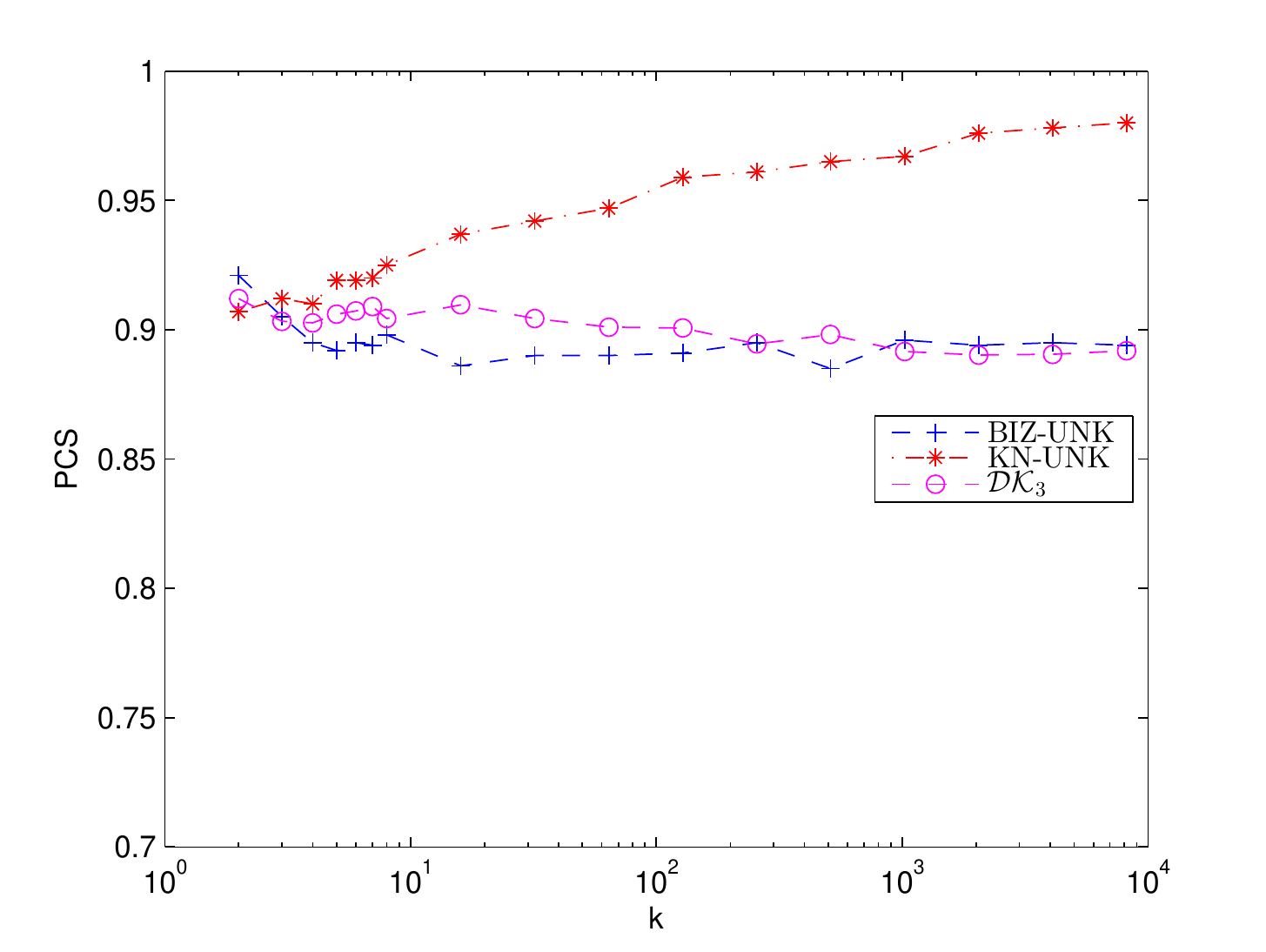}
\caption{SC-INC-PCS}
\end{subfigure}

\begin{subfigure}{.5\textwidth}
  \centering
  \includegraphics[width=\linewidth]{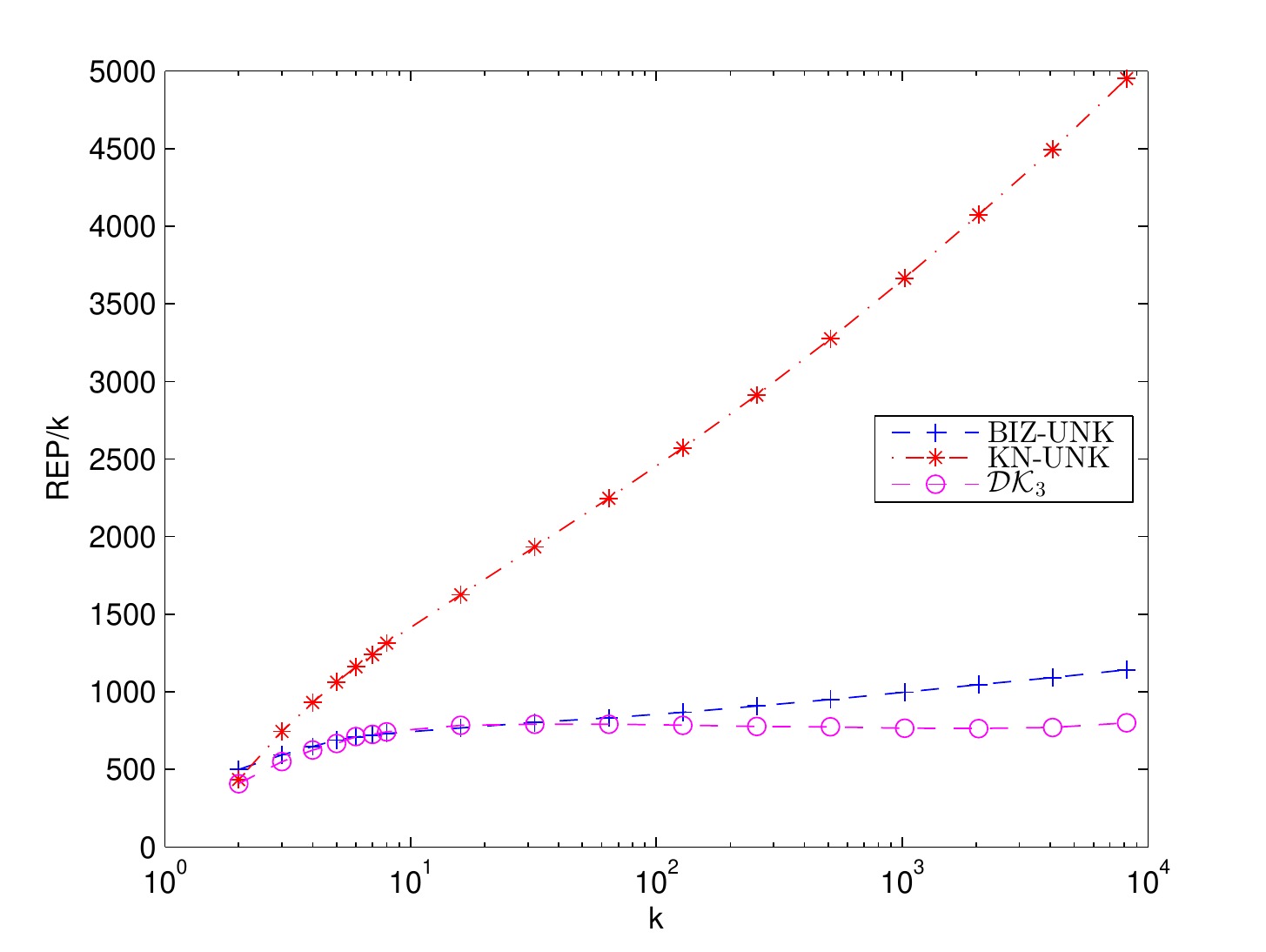}
\caption{SC-DEC-REP}
\end{subfigure}%
\begin{subfigure}{.5\textwidth}
  \centering
  \includegraphics[width=\linewidth]{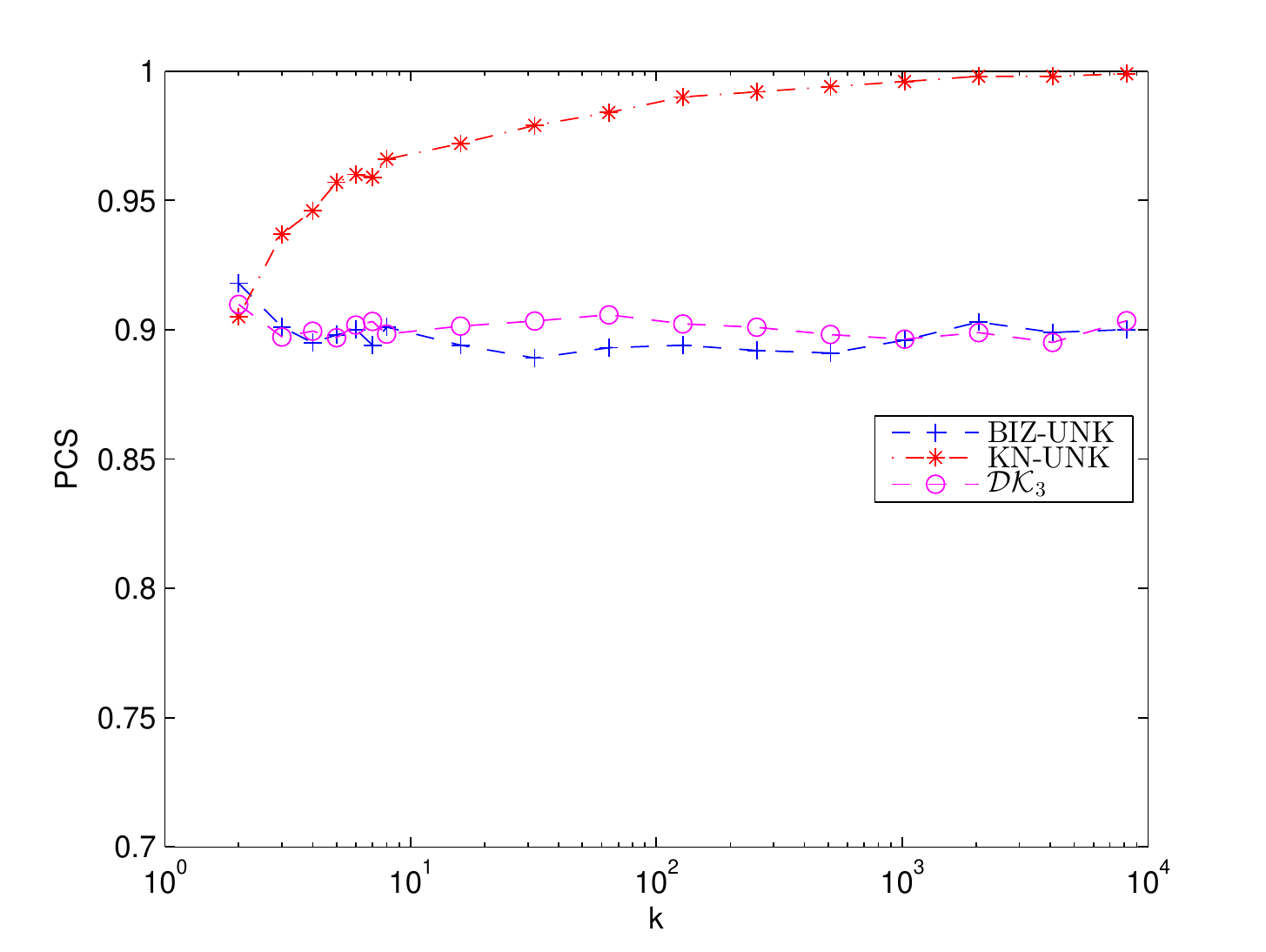}
\caption{SC-DEC-PCS}
\end{subfigure}
\caption{REP/$k$ and PCS for ${\cal DK}_3$ when variances are unknown and unequal with $1-\alpha = 0.9$
}
\label{fig:incdec_SC}
\end{figure}

\begin{figure}[h!]
\begin{subfigure}{.5\textwidth}
  \centering
  \includegraphics[width=\linewidth]{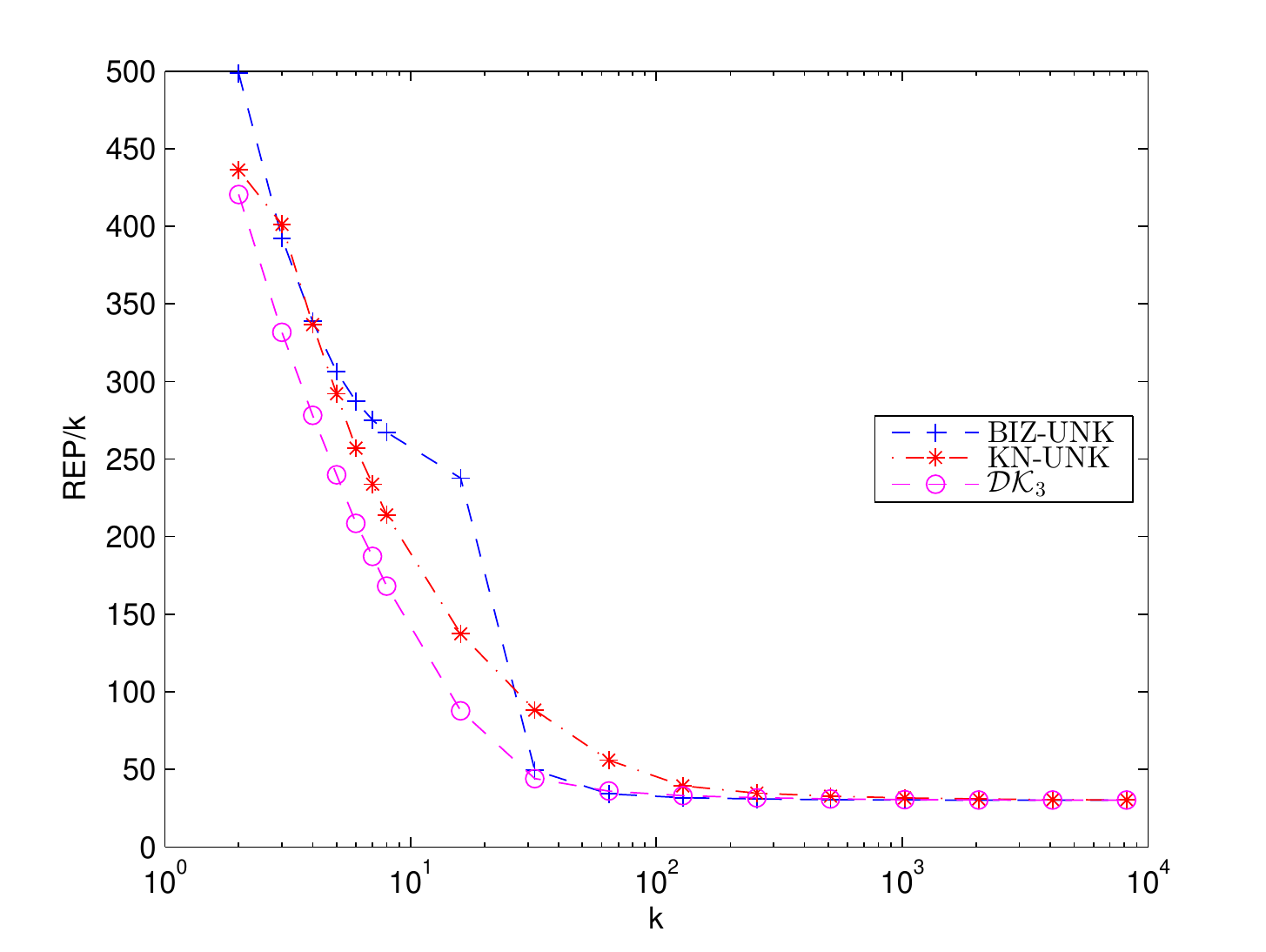}
\caption{ MDM-INC-REP}
\end{subfigure}%
\begin{subfigure}{.5\textwidth}
  \centering
  \includegraphics[width=\linewidth]{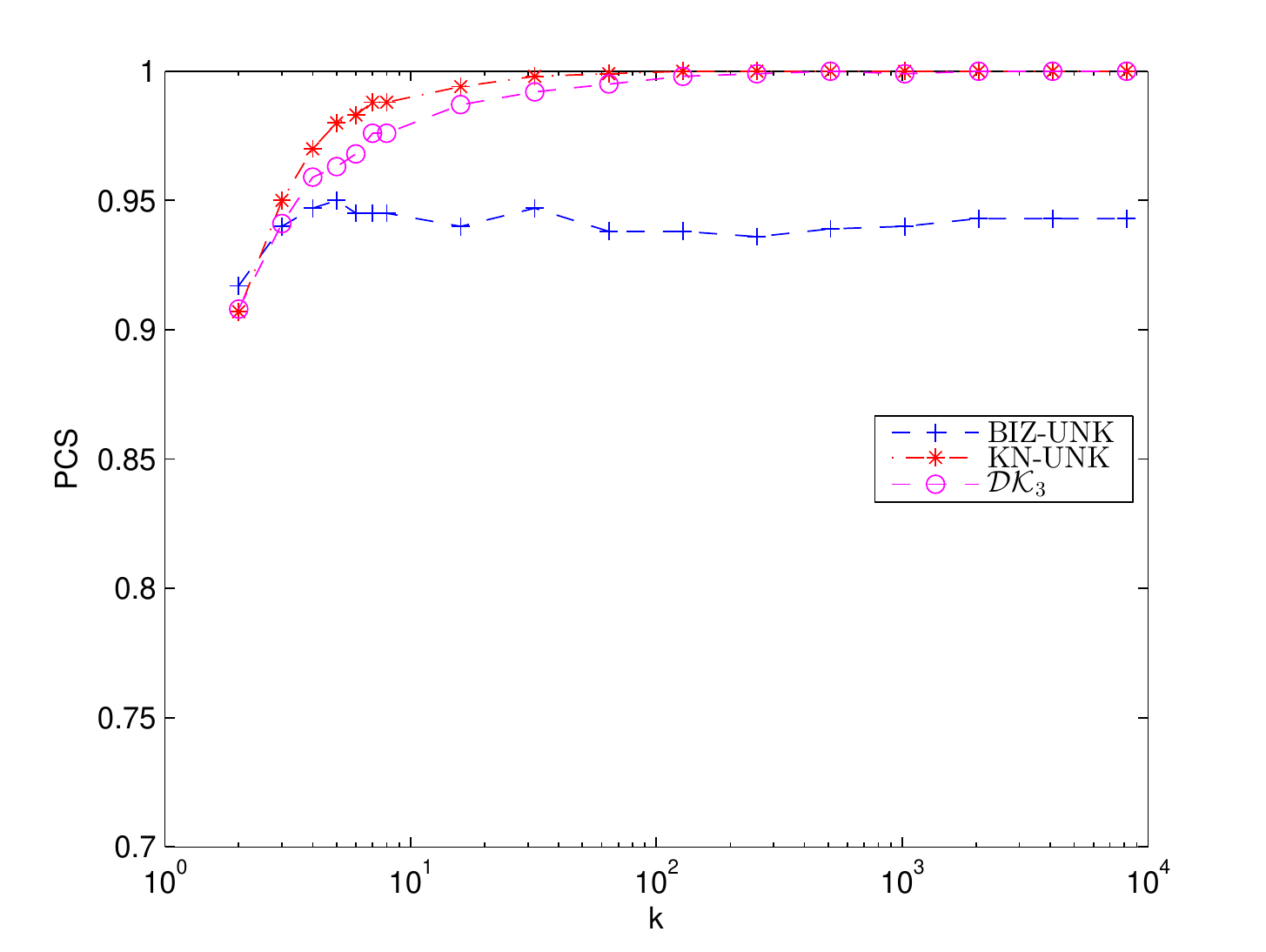}
\caption{ MDM-INC-PCS}
\end{subfigure}
\begin{subfigure}{.5\textwidth}
  \centering
  \includegraphics[width=\linewidth]{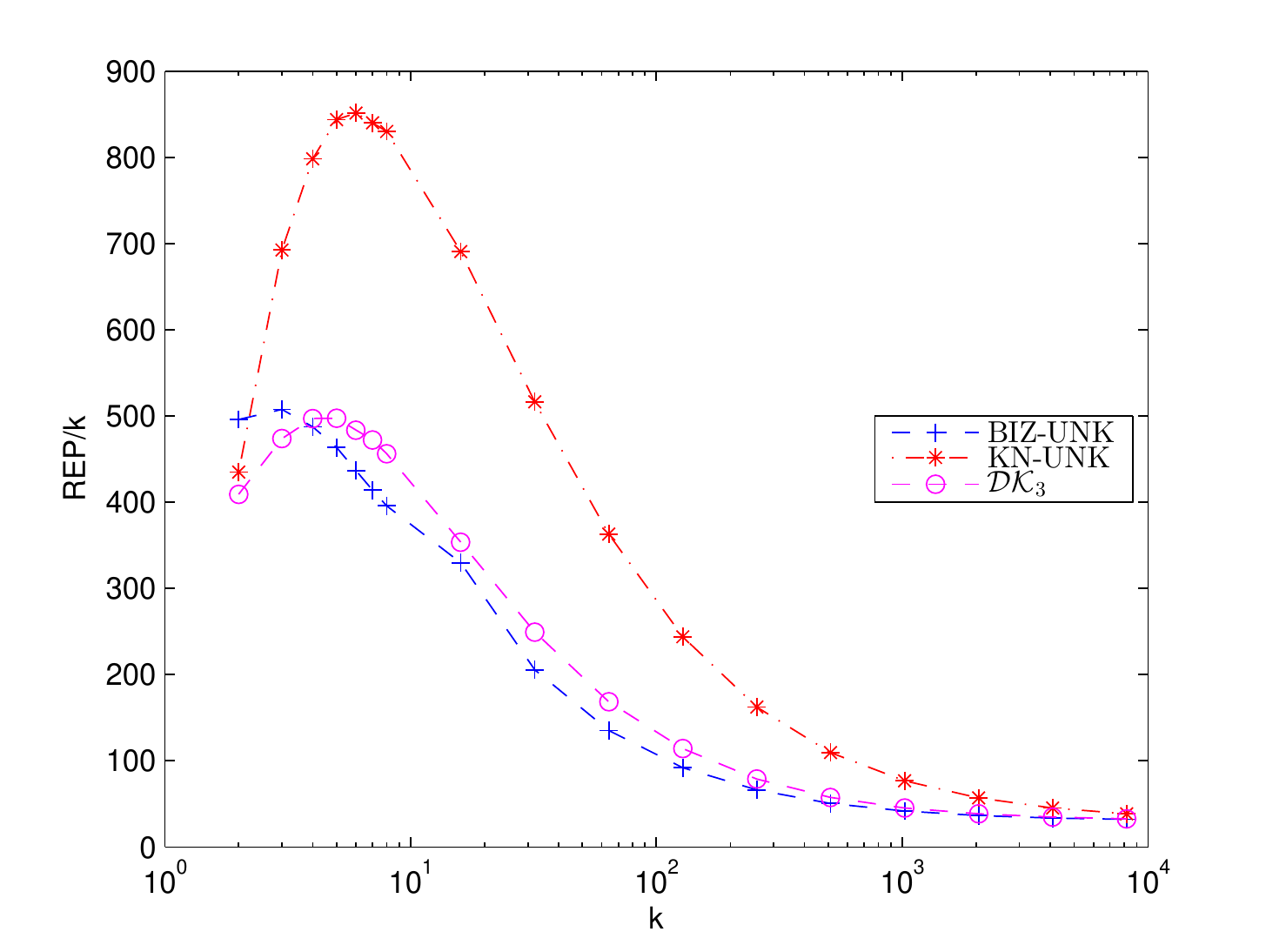}
\caption{MDM-DEC-REP}
\end{subfigure}%
\begin{subfigure}{.5\textwidth}
  \centering
  \includegraphics[width=\linewidth]{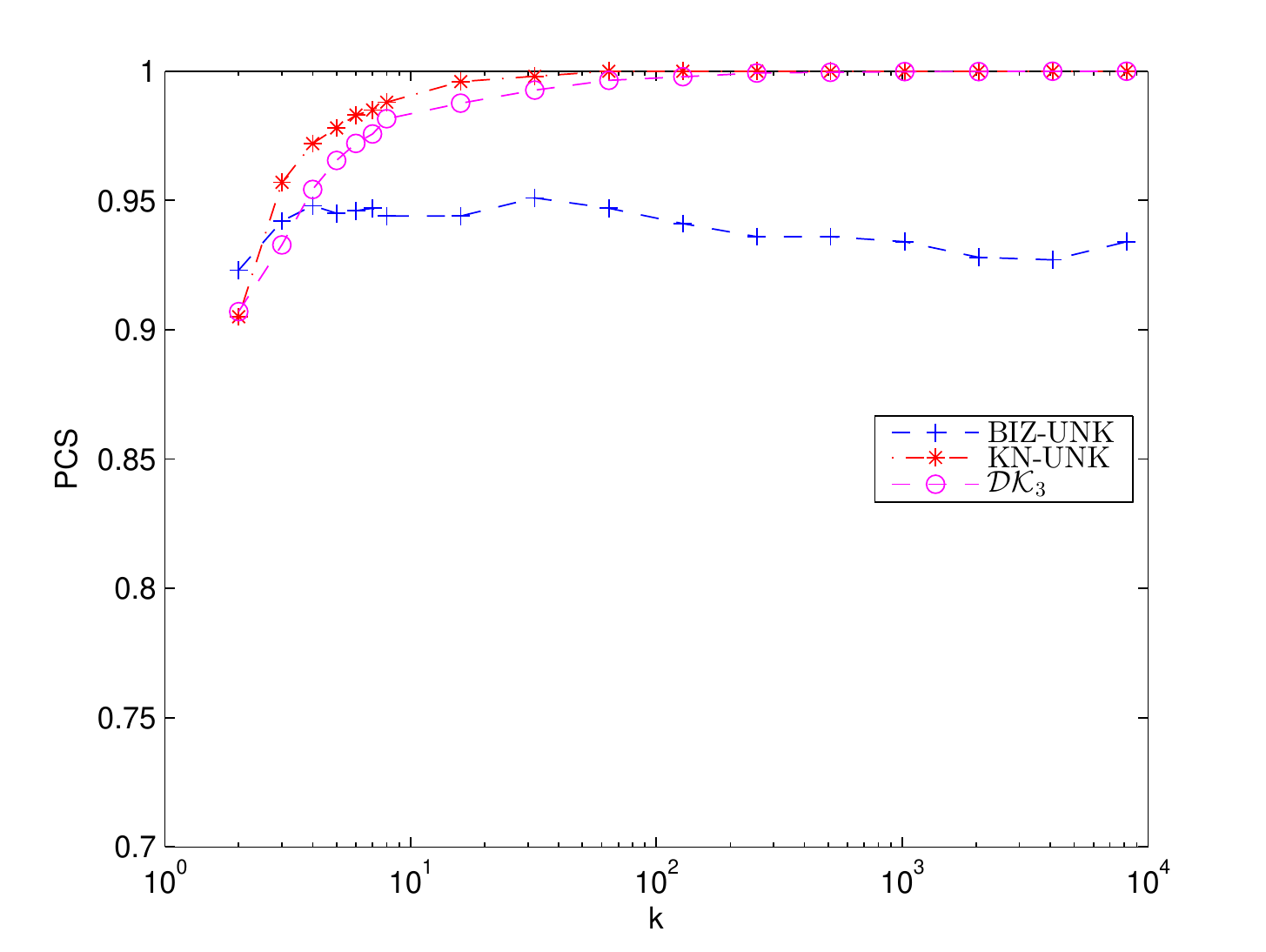}
\caption{MDM-DEC-PCS}
\end{subfigure}

\caption{REP/$k$ and PCS for ${\cal DK}_3$ when variances are unknown and unequal with $1-\alpha = 0.9$
}
\label{fig:incdec_MDM}
\end{figure}

Overall, ${\cal DK}$ procedures achieve PCS close to the nominal value for all settings we tested and they outperform KN significantly while performing similarly to BIZ under easy mean configurations but outperforming it under difficult mean configurations especially with unknown and unequal variances.

\section{Conclusions}
\label{sec:conclusion}

We present new fully-sequential procedures  whose continuation regions are derived exploiting the properties of multidimensional Brownian motions, which is the first work in the literature. Our procedures deliver a probability of correct selection close to the nominal level. Compared to the existing state-of-art fully-sequential IZ procedure KN, the proposed procedures show a tight worst-case probability of incorrect selection under the slippage configuration and significant savings in the number of observations needed until a decision is made. Compared to BIZ, our procedures perform better for a large number of systems under difficult mean configurations, but tend to spend slightly more observations for small $k$ but similar number of observations for large $k$ under easier mean configurations except the increasing-variances case.

\section*{Acknowledgements}

This work is supported by the National Science Foundation under grant CMMI-1131047. The authors would like to thank Seunghan Lee for his
insight for Lemma 1. The authors appreciate Peter Frazier for his code and helpful comments and Barry Nelson for his helpful comments.


\section*{References}
\begin{hangref}

\item Branke, J., Chick, S. E., and Schmidt, C. 2007. ``Selecting a selection procedure''. {\it Management Science} 53(12):1916-1932.

\item Chick, S. E. 2006. “Subjective Probability and Bayesian Methodology”. In {\it Handbooks in Operations
Research and Management Science: Simulation}, edited by S. G. Henderson and B. L. Nelson. Oxford:
Elsevier Science.

\item Chen, C.-H., S. E. Chick, L. H. Lee, N. A. Pujowidianto. 2014. ``Ranking and Selection: Efficient Simulation Budget Allocation''. In {\it Handbook of Simulation Optimization}, edited by M. C. Fu. Springer:NY.

\item Chen, C.-H., and L. H. Lee. 2011. {\it Stochastic Simulation Optimization: An Optimal Computing Budget Allocation (System Engineering and Operations Research)}, Vol 1. Singapore: World Scientific Publishing Company.

\item Chow, T. L., and J. L.Teugels. 1978. “The Sum and the Maximum of I.I.D. Random Variables”. In {\it Proceedings
of the Second Prague Symposium on Asymptotic Statistics}, edited by P. Mandl and M. Huskova, 81-92.
New York: North-Holland.

\item Dieker, A. B., and S.-H. Kim. 2012. “Selecting the Best by Comparing Simulated Systems in a Group
of Three When Variances are Known and Unequal”. In {\it Proceedings of the 2012 Winter Simulation
Conference}, edited by C. Laroque, J. Himmelspach, R. Pasupathy, O. Rose, and A. M. Uhrmacher,
1-7. Piscataway, New Jersey: Institute of Electrical and Electronics Engineers, Inc.

\item Dieker, A. B., and S.-H. Kim 2014. ``A Fully Sequential Procedure for Known and Equal Variances Based on Multivariate Brownian Motion''. In {\it Proceedings of the 2014 Winter Simulation Conference}, edited by A. Tolk, S. D. Diallo, I. O. Ryzhov, L. Yilmaz, S. Buckley, and J. A. Miller,  3749-3760. Piscataway, New Jersey: IEEE.

\item Embrechts, P., C. Kl¨uppelberg, and T. Mikosch. 1997. {\it Modelling Extremal Events for Insurance and
Finance}. New York: Springer.

\item Frazier, P. 2014. “A Fully Sequential Elimination Procedure for Indifference-Zone Ranking and Selection
with Tight Bounds on Probability of Correct Selection”. {\it Operations Research} 62(4):926-942.

\item Hong, L. J., B. L. Nelson, J. Xu. 2014. ``Discrete Optimization via Simulation''. In {\it Handbook of Simulation Optimization}, edited by M. C. Fu. Springer:NY.


\item Kim, S.-H., and A. B. Dieker. 2011. “Selecting the Best by Comparing Simulated Systems in a Group of
Three”. In {\it Proceedings of the 2011 Winter Simulation Conference}, edited by S. Jain, R. R. Creasey, J. Himmelspach, K. P. White, and M. Fu. 4217-4226. Piscataway, New Jersey: IEEE.

\item Kim, S.-H., and B. L. Nelson. 2001. “A Fully Sequential Procedure for Indifference-Zone Selection in
Simulation”. {\it ACM Transactions on Modeling and Computer Simulation} 11(3):251-273.

\item Kim, S.-H., and B. L. Nelson. 2006. ``On the Asymptotic Validity of Fully Sequential Selection Procedures for Steady-State Simulation''. {\it Operations Research} 54:475-488.

\item Nelson, B. L., J. Swann, D. Goldsman, and W. Song. 2001. ``Simple Procedures for Selecting the Best Simulated System when the Number of Alternatives is Large''. {\it Operations Research} 49(6):950-963.

\item Powell, W. B. and Ryzhov, I. O. 2012. ``Ranking and selection''. In Chapter 4 in {\it Optimal
Learning}, pages 71-88. John Wiley and Sons.

\item Rinott, Y. 1978. ``On two-stage selection procedures and related probability inequalities''. {\it Comm.\ Statist.-Theory and Methods} 7(8):799-811.

\item Rogers, L., and J. W. Pitman. 1981. “Markov Functions”. {\it The Annals of Probability} 9:573-582.

\item Waeber, R., P. I. Frazier, and S. G. Henderson. 2011. ``A Bayesian Approach to Stochastic Root Finding''. In {\it Proceedings of the 2011 Winter Simulation Conference}, edited by S. Jain, R. R. Creasey, J. Himmelspach, K. P. White, and M. Fu. 4038-4050. Piscataway, New Jersey: Institute of Electrical and Electronics Engineers, Inc.

\item Waeber, R. 2013. {\it Probabilistic Bisection Search for Stochastic Root-Finding}. PhD Dissertation. Cornell University, Ithaca, NY.

\item Wang, H., and S.-H. Kim. 2011. “Reducing the Conservativeness of Fully Sequential Indifference-Zone
Procedures”. {\it IEEE Transactions on Automatic Control} 58(6):1613-1619

\end{hangref}

\newpage

\setcounter{table}{0}
\renewcommand\thefigure{A.\arabic{figure}}
\renewcommand\thetable{A.\arabic{table}}
\section*{Appendix}

\begin{proof}[Proof of Lemma~\ref{lem:S}]
\begin{eqnarray*}
{\cal S}_I(\Pi x) & = & (V \Pi x)^T (V \Gamma V^T)^{-1} (V \Pi x)\\
                 &=& (V \Gamma V^T (V \Gamma V^T)^{-1}V x)^T (V \Gamma V^T)^{-1} (V \Gamma V^T (V \Gamma V^T)^{-1}V x)\\
                 &=& (Vx)^T (V \Gamma V^T)^{-1} (Vx) = {\cal S}_I(x).
\end{eqnarray*}

\end{proof}

\begin{proof}[Proof of Corollary~\ref{cor:S}]

We first derive an explicit expression for $(V \Gamma V^T)^{-1}$. Without loss of generality, assume that $I=\{1,\ldots,s\}$. Then by noting that $V\Gamma V^T$ is the covariance matrix of $Vx$, we get
\[
Vx = \left[
\begin{array}{c}
x_{1} - x_{s} \\
\vdots\\
x_{s-1} - x_s
\end{array}\right] \quad \quad \mbox{ and } \quad \quad
V\Gamma V^T = \left[ \begin{array}{ccccc}
    {\sigma_1^2 + \sigma_s^2} & \sigma_s^2 & \cdots  & \cdots & \sigma_s^2 \\
    \sigma_s^2 & \sigma_2^2 + \sigma_s^2 & \sigma_s^2 & \cdots & \sigma_s^2\\
    \vdots & & \ddots &  &\vdots \\
    \vdots & &  &  \ddots & \sigma_s^2 \\
   \sigma_s^2 & \cdots & \cdots &\sigma_s^2 & \sigma_{s-1}^2 + \sigma_s^2
\end{array}
\right].
\]

For equal variances,
\[
V\Gamma V^T = \sigma^2 \left[ \begin{array}{ccccc}
    2 & 1 & \cdots  & \cdots & 1 \\
    1 & 2 &1 & \cdots & 1\\
    \vdots & & \ddots &  &\vdots \\
    \vdots & &  &  \ddots & 1 \\
   1 & \cdots & \cdots &1 & 2
\end{array}
\right] = \sigma^2 \left( \id_{s-1} + \boldsymbol{1}_{s-1} \boldsymbol{1}_{s-1}^T\right)
\]
where $\id_{s}$ is the $s \times s$ identity matrix and $\boldsymbol{1}_s$ is the $s \times 1$ vector of ones.

By the Sherman-–Morrison formula,
\begin{eqnarray}
(V\Gamma V^T)^{-1} & = & {1 \over \sigma^2} \left( \id_{s-1}^{-1} - \frac{ \id_{s-1}^{-1} \boldsymbol{1}_{s-1} \boldsymbol{1}_{s-1}^T \id_{s-1}^{-1}}{1 + \boldsymbol{1}^T \id_{s-1}^{-1} \boldsymbol{1}} \right) \nonumber \\
&=& {1 \over \sigma^2} \left( \id_{s-1} - \frac{\boldsymbol{1}_{s-1} \boldsymbol{1}_{s-1}^T}{1 + (s-1)} \right) \nonumber \\
&=& {1 \over \sigma^2} {1 \over s} \left( s \cdot \id_{s-1} - \boldsymbol{1}_{s-1} \boldsymbol{1}_{s-1}^T \right). \label{eqn:inv}
\end{eqnarray}

Then we have
\begin{eqnarray*}
{\cal S}_I(x) & = & {1 \over \sigma^2} {1 \over s} \left[
\begin{array}{c}
x_{1} - x_{s} \\
\vdots\\
x_{s-1} - x_{s}
\end{array}\right]^T
\left( s \cdot \id_{s-1} - \boldsymbol{1}_{s-1} \boldsymbol{1}_{s-1}^T \right)
 \left[
\begin{array}{c}
x_{1} - x_{s} \\
\vdots\\
x_{s-1} - x_{s}
\end{array}\right] \\
&=& {1 \over \sigma^2}{1 \over s} \left\{ (s-1) \sum_{i=1}^{s-1}(x_i-x_s)^2 - 2 \sum_{1 \le i < \ell <s} (x_i-x_s)(x_\ell-x_s)\right\}\\
 & = &  {1 \over \sigma^2} {1 \over s} \sum_{i < \ell \atop i, \ell \in I} (x_i-x_\ell)^2,
\end{eqnarray*}
which shows the first equality in the corollary because $|I|=s$.

Now we show the second equality of the corollary.
From (\ref{eqn:inv}),
\[
V^T(V \Gamma V^T)^{-1} V = {1 \over \sigma^2}{1\over s} ( s \cdot \id_s - \boldsymbol{1}_s \boldsymbol{1}_s^T)
\quad \quad \mbox{and} \quad \quad
\Pi = \Gamma V^T (V \Gamma V^T)^{-1} V = {1 \over s} ( s \cdot  \id_s - \boldsymbol{1}_s \boldsymbol{1}_s^T).
\]

Then
\[
\Pi x = {1 \over s} ( s \cdot \id_s - \boldsymbol{1}_s \boldsymbol{1}_s^T) x = \left[
\begin{array}{c}
x_1 - \bar{x} \\
\vdots\\
x_s - \bar{x}
\end{array}\right].
\]

Finally,
\begin{eqnarray*}
{\cal S}_I(\Pi x)& =& (V \Pi x)^T(V \Gamma V^T)^{-1} (V \Pi x)\\
 &=& (\Pi x)^T [ V^T (V \Gamma V^T)^{-1} V] (\Pi x)\\
 &=& {1 \over \sigma^2} {1 \over s}\left[
\begin{array}{c}
x_1 - \bar{x} \\
\vdots\\
x_s - \bar{x}
\end{array}\right]^T
( s \cdot \id_s - \boldsymbol{1}_s \boldsymbol{1}_s^T)
\left[
\begin{array}{c}
x_1 - \bar{x} \\
\vdots\\
x_s - \bar{x}
\end{array}\right] \\
&=& {1 \over \sigma^2} \left[
\begin{array}{c}
x_1 - \bar{x} \\
\vdots\\
x_s - \bar{x}
\end{array}\right]^T\left[
\begin{array}{c}
x_1 - \bar{x} \\
\vdots\\
x_s - \bar{x}
\end{array}\right] \\
&=& {1 \over \sigma^2} \sum_{i=1}^s (x_i - \bar{x})^2.
\end{eqnarray*}

\end{proof}

\begin{proof}[Proof of Lemma~\ref{lem:Smon}]
It suffices to prove the claim for $|I| = |J| +1$.
By relabeling systems if necessary, it suffices to prove the claim with $J= \{1,\ldots,s\}$ and $I = \{1,\ldots,s+1\}$.
We set
\[
H_{s+1} = \left\{ (x_1, x_2, \ldots, x_{s+1})^T: \sum_{i=1}^{s+1} x_{i}=0 \right\},\quad\quad
Q_{s} = \left\{ (x_1, x_2, \ldots, x_{s+1})^T: \sum_{i=1}^{s} x_{i}=0, x_{s+1}=0 \right\}.
\]
By the second equality of Corollary~\ref{cor:S}, it suffices to show that for $x\in \R^{s+1}$,
\begin{equation}
\label{eq:Sineq1}
{\cal S}_I(x) \ge \frac 1{\sigma^2} \sum_{i=1}^s (x_i-\bar x_s)^2,
\end{equation}
where $\bar x_s = (x_1+\cdots+x_s)/s$.
To see that this holds, we define $\Psi_s$ on $H_{s+1}$ as the matrix that projects orthogonally on $Q_s$, i.e.,
$\Psi_s x = (x_1-\bar x_s,\ldots,x_s-\bar x_s,0)$.
By Lemma~\ref{lem:S} and (\ref{eqn:inv}), we have, for $x\in\R^{s+1}$,
\[
{\cal S}_I (x) =  {1 \over \sigma^2} {1 \over (s+1)} \left[
\begin{array}{c}
x_{1} - x_{s+1} \\
\vdots\\
x_{s} - x_{s+1}
\end{array}\right]^T
\left( (s+1) \cdot \id_{s} - \boldsymbol{1}_{s} \boldsymbol{1}_{s}^T \right)
 \left[
\begin{array}{c}
x_{1} - x_{s+1} \\
\vdots\\
x_{s} - x_{s+1}
\end{array}\right]
\]
This representation immediately yields that
\[
{\cal S}_I(\Psi_s x) = \frac 1{\sigma^2} \sum_{i=1}^s (x_i-\bar x_s)^2.
\]
Since projecting decreases any quadratic form, this establishes (\ref{eq:Sineq1}).
\end{proof}

\begin{table}[h!]
\caption{$\eta_{|I|}$ when $\alpha = 10\%$} \label{tab:eta}
\begin{center}
{\tiny\renewcommand{\arraystretch}{.8}
\resizebox{!}{.35\paperheight}{%
\begin{tabular}{|c|ccccccccc|} \hline
$|I|$	&	$k=64$	&	$k=32$	&	$k=16$	&	$k=8$	&	$k=7$	&	$k=6$	&	$k=5$	&	$k=4$	&	$k=3$	\\ \hline
64	&	6.05042	&		&		&		&		&		&		&		&		\\
63	&	6.79306	&		&		&		&		&		&		&		&		\\
62	&	7.07127	&		&		&		&		&		&		&		&		\\
61	&	7.24002	&		&		&		&		&		&		&		&		\\
60	&	7.35712	&		&		&		&		&		&		&		&		\\
59	&	7.44289	&		&		&		&		&		&		&		&		\\
58	&	7.50694	&		&		&		&		&		&		&		&		\\
57	&	7.55688	&		&		&		&		&		&		&		&		\\
56	&	7.59517	&		&		&		&		&		&		&		&		\\
55	&	7.62437	&		&		&		&		&		&		&		&		\\
54	&	7.64624	&		&		&		&		&		&		&		&		\\
53	&	7.66164	&		&		&		&		&		&		&		&		\\
52	&	7.67146	&		&		&		&		&		&		&		&		\\
51	&	7.67755	&		&		&		&		&		&		&		&		\\
50	&	7.67896	&		&		&		&		&		&		&		&		\\
49	&	7.67697	&		&		&		&		&		&		&		&		\\
48	&	7.67216	&		&		&		&		&		&		&		&		\\
47	&	7.66385	&		&		&		&		&		&		&		&		\\
46	&	7.65204	&		&		&		&		&		&		&		&		\\
45	&	7.63885	&		&		&		&		&		&		&		&		\\
44	&	7.62218	&		&		&		&		&		&		&		&		\\
43	&	7.60345	&		&		&		&		&		&		&		&		\\
42	&	7.58269	&		&		&		&		&		&		&		&		\\
41	&	7.55989	&		&		&		&		&		&		&		&		\\
40	&	7.53440	&		&		&		&		&		&		&		&		\\
39	&	7.50692	&		&		&		&		&		&		&		&		\\
38	&	7.47817	&		&		&		&		&		&		&		&		\\
37	&	7.44679	&		&		&		&		&		&		&		&		\\
36	&	7.41350	&		&		&		&		&		&		&		&		\\
35	&	7.37764	&		&		&		&		&		&		&		&		\\
34	&	7.34060	&		&		&		&		&		&		&		&		\\
33	&	7.30173	&		&		&		&		&		&		&		&		\\
32	&	7.26040	&	4.61250	&		&		&		&		&		&		&		\\
31	&	7.21664	&	5.16401	&		&		&		&		&		&		&		\\
30	&	7.17116	&	5.35848	&		&		&		&		&		&		&		\\
29	&	7.12400	&	5.46804	&		&		&		&		&		&		&		\\
28	&	7.07454	&	5.53579	&		&		&		&		&		&		&		\\
27	&	7.02218	&	5.57870	&		&		&		&		&		&		&		\\
26	&	6.96829	&	5.60356	&		&		&		&		&		&		&		\\
25	&	6.91162	&	5.61553	&		&		&		&		&		&		&		\\
24	&	6.85224	&	5.61724	&		&		&		&		&		&		&		\\
23	&	6.79024	&	5.61064	&		&		&		&		&		&		&		\\
22	&	6.72631	&	5.59634	&		&		&		&		&		&		&		\\
21	&	6.65867	&	5.57542	&		&		&		&		&		&		&		\\
20	&	6.58867	&	5.54845	&		&		&		&		&		&		&		\\
19	&	6.51518	&	5.51603	&		&		&		&		&		&		&		\\
18	&	6.43833	&	5.47773	&		&		&		&		&		&		&		\\
17	&	6.35826	&	5.43468	&		&		&		&		&		&		&		\\
16	&	6.27395	&	5.38576	&	3.55536	&		&		&		&		&		&		\\
15	&	6.18617	&	5.33234	&	3.96549	&		&		&		&		&		&		\\
14	&	6.09394	&	5.27360	&	4.09377	&		&		&		&		&		&		\\
13	&	5.99752	&	5.20970	&	4.15224	&		&		&		&		&		&		\\
12	&	5.89658	&	5.13991	&	4.17580	&		&		&		&		&		&		\\
11	&	5.79086	&	5.06446	&	4.17568	&		&		&		&		&		&		\\
10	&	5.68014	&	4.98455	&	4.15873	&		&		&		&		&		&		\\
9	&	5.46038	&	4.84321	&	4.13041	&		&		&		&		&		&		\\
8	&	5.26699	&	4.68543	&	4.02733	&	2.83446	&		&		&		&		&		\\
7	&	5.05352	&	4.50849	&	3.90131	&	3.05966	&	2.66510	&		&		&		&		\\
6	&	4.80933	&	4.29929	&	3.74380	&	3.04936	&	2.85635	&	2.47348	&		&		&		\\
5	&	4.52132	&	4.04717	&	3.54574	&	2.95465	&	2.81492	&	2.62431	&	2.25053	&		&		\\
4	&	4.16553	&	3.73733	&	3.28628	&	2.78163	&	2.67081	&	2.53302	&	2.34537	&	1.98200	&		\\
3	&	3.67921	&	3.30305	&	2.91240	&	2.49134	&	2.40324	&	2.29859	&	2.16417	&	1.97851	&	1.63182	\\
2	&	2.81738	&	2.52053	&	2.21620	&	1.89611	&	1.83132	&	1.75468	&	1.66093	&	1.54027	&	1.37146	\\ \hline
\end{tabular}}}
\end{center}
\end{table}

\end{document}